\theoremstyle{definition}
\newtheorem{theorem}{Theorem}[section]
\newtheorem{lemma}[theorem]{Lemma}
\newtheorem{proposition}[theorem]{Proposition}
\newtheorem{claim}{Claim}
\newtheorem{remark}[theorem]{Remark}
\begin{document}

\title{Minimum 0-Extension Problems
on Directed Metrics}

\author{Hiroshi HIRAI and Ryuhei MIZUTANI\\
Department of Mathematical Informatics,\\
Graduate School of Information Science and Technology,\\
The University of Tokyo, Tokyo, 113-8656, Japan.\\
\texttt{hirai@mist.i.u-tokyo.ac.jp}\\
\texttt{ryuhei\_mizutani@mist.i.u-tokyo.ac.jp}
}
\date{February 7, 2021}

\maketitle

\begin{abstract}
For a metric $\mu$ on a finite set $T$, the minimum 0-extension problem \textbf{0-Ext}$[\mu]$ is defined as follows: Given $V\supseteq T$ and $\ c:{\scriptsize\begin{pmatrix}V \\
                     2 \\
\end{pmatrix}}\rightarrow \mathbb{Q}_+$, minimize $\sum c(xy)\mu(\gamma(x),\gamma(y))$ subject to $\gamma:V\rightarrow T,\ \gamma(t)=t\ (\forall t\in T)$, where the sum is taken over all unordered pairs in $V$. This problem generalizes several classical combinatorial optimization problems such as the minimum cut problem or the multiterminal cut problem. 
Karzanov and Hirai established a complete classification of metrics $\mu$ for which \textbf{0-Ext}$[\mu]$ is polynomial time solvable or NP-hard. This result can also be viewed as a sharpening of the general dichotomy theorem for finite-valued CSPs (Thapper and \v{Z}ivn\'{y} 2016) specialized to \textbf{0-Ext}$[\mu]$.

In this paper, we consider a directed version
$\overrightarrow{\textbf{0}}$\textbf{-Ext}$[\mu]$ of the minimum 0-extension problem,
where $\mu$ and $c$ are not assumed to be symmetric.
We extend the NP-hardness condition of \textbf{0-Ext}$[\mu]$ 
to $\overrightarrow{\textbf{0}}$\textbf{-Ext}$[\mu]$: If $\mu$ cannot be represented as the
shortest path metric of
an orientable modular graph with an orbit-invariant ``directed'' edge-length,
then $\overrightarrow{\textbf{0}}$\textbf{-Ext}$[\mu]$ is NP-hard.
We also show a partial converse: If $\mu$ is a directed metric of
a modular lattice with an orbit-invariant directed edge-length,
then $\overrightarrow{\textbf{0}}$\textbf{-Ext}$[\mu]$ is tractable.
We further provide a new NP-hardness condition characteristic
of $\overrightarrow{\textbf{0}}$\textbf{-Ext}$[\mu]$, and establish
a dichotomy for the case where $\mu$ is a directed metric of a star.
\end{abstract}

\section{Introduction}
\label{sec:introduction}
A \textit{metric} on a finite set $T$ is a function $\mu:T\times T\rightarrow \mathbb{R}_{+} $ that satisfies $\mu(x,x)=0,\ \mu(x,y)=\mu(y,x),$ and $\mu(x,y)+\mu(y,z)\geq \mu(x,z)$ for every $x,y,z\in T$, and $\mu(x,y)>0$ for every $x\neq y\in T$. 
For a rational-valued metric  $\mu$ on $T$, the \textit{minimum} 0-\textit{extension problem} \textbf{0-Ext}$[\mu]$ on $\mu$ is defined as follows:
\begin{align}
\label{0ext}
\textbf{0-Ext}[\mu]\mathrm{:\ \ \ \ }
\mathrm{Instance:}&\mathrm{\ \ }V\supseteq T,\ c:\begin{pmatrix}V \\ 2 \\ \end{pmatrix}
\rightarrow \mathbb{Q}_{+}\notag\\
\mathrm{Min.}&\displaystyle\sum_{\tiny xy\in
\begin{pmatrix}V \\2 \\
\end{pmatrix}\normalsize}c(xy)\mu(\gamma(x),\gamma(y)) \notag \\ \mathrm{s.t.}&\mathrm{\ \ }\gamma:V\rightarrow T\  \mathrm{with\ }\gamma(t)=t\mathrm{\ for\ all\ }t\in T,\mathrm{\ \ \ \ \ \ \ \ \ \ \ \ \ \ \ \ \ \ \ \ \ \ \ \ \ \ \ \ \ \ \ \ \ }
\end{align}
where ${\scriptsize\begin{pmatrix}V \\
                     2 \\
      \end{pmatrix}}$ denotes the set of all unordered pairs of $V$,
and $xy$ denotes the unordered pair consisting of $x,y\in V$. The minimum 0-extension problem was introduced by Karzanov \cite{karzanov1998}, and also known as the \textit{multifacility location problem} in facility location theory \cite{tansel1983}. Note that the formulation (\ref{0ext}) of \textbf{0-Ext}$[\mu]$ is different from but equivalent to that of \cite{karzanov1998}.

The minimum 0-extension problem generalizes several classical combinatorial optimization problems: If $T=\{s,t\}$, then \textbf{0-Ext}$[\mu]$ is nothing but the minimum $s$-$t$ cut problem in an undirected network. If $T=\{x,y,z\}$ and $\mu(x,y)=\mu(y,z)=\mu(z,x)=1$, then \textbf{0-Ext}$[\mu]$ is the $3$-terminal cut problem. Similarly, \textbf{0-Ext}$[\mu]$ can formulate the $k$-terminal cut problem. Moreover, \textbf{0-Ext}$[\mu]$ appears as a discretized LP-dual problem for a class of maximum multiflow problems \cite{karzanov1998pri,karzanov1998} (also see \cite{hirai2009,hirai2011}). 

The computational complexity of \textbf{0-Ext}$[\mu]$ depends on metric $\mu$. In the above examples, the minimum $s$-$t$ cut problem is in P and the $3$-terminal cut problem is NP-hard.
In~\cite{karzanov1998}, Karzanov addressed the classification problem of the computational complexity of \textbf{0-Ext}$[\mu]$ with respect to $\mu$. After \cite{chepoi1996,karzanov2004one}, the complexity dichotomy of \textbf{0-Ext}$[\mu]$ was fully established by Karzanov~\cite{karzanov2004} and Hirai \cite{hirai2016}, which we explain below.

A metric $\mu$ on $T$ is called \textit{modular} if for every $s_0,s_1,s_2\in T$, there exists an element $m\in T$, called a \textit{median}, such that $\mu(s_i,s_j)=\mu(s_i,m)+\mu(m,s_j)$ holds for every $0\leq 
i<j\leq2$. The \textit{underlying graph} of $\mu$ is defined as the undirected graph $H_\mu=(T,U)$, where 
$U=\{\{x,y\}\mid x,y\in T\ (x\neq y),\ \forall z\in T\setminus \{x,y\},\  \mu(x,y)<\mu(x,z)+\mu(z,y)\}$. We say that an undirected graph is \textit{orientable} if it has an edge-orientation such that for every 4-cycle $(u,v,w,z,u)$, the edge $\{u,v\}$ is oriented from $u$ to $v$ if and only if the edge $\{w,z\}$ is oriented from $z$ to $w$.

The dichotomy theorem of the minimum 0-extension problem is the following:
\begin{theorem}[\cite{karzanov2004}]
\label{thm:undirected-nph}
\textit{Let} $\mu$ \textit{be a rational-valued metric. }\textbf{0-Ext}$[\mu]$ \textit{is strongly NP-hard\footnote[1]{A problem $\mathcal{P}$ is called \textit{strongly NP-hard} if $\mathcal{P}$ is still NP-hard when all numbers of the instance are bounded by some polynomial in the length of the instance.} if}
\begin{enumerate}[label=(\roman*),ref=\roman*]
    \item $\mu$ \textit{is not modular, or}
    \label{dichotomy:not modular}
    \item $H_\mu$ \textit{is not orientable}.
    \label{dichotomy: not orientable}
\end{enumerate}
\end{theorem}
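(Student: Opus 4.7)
The plan is to establish NP-hardness in each of the two cases by polynomial-time reduction from a known NP-hard problem. The basic technical tool in both cases is a ``terminal-replication'' gadget: by attaching a vertex $v\in V$ to a terminal $t\in T$ via an edge of sufficiently large weight, we can force $\gamma(v)=t$ in any optimal solution; iterating this, we can restrict $\gamma(v)$ to any chosen nonempty subset $S\subseteq T$ by penalizing $v$ against $T\setminus S$. This lets us carve out sub-instances of \textbf{0-Ext}$[\mu]$ whose objective mimics a target hard problem.

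For case (\ref{dichotomy:not modular}), I would pick $s_0,s_1,s_2\in T$ admitting no median and reduce from the $3$-terminal cut problem. Given an input graph $(V',E',c')$ with terminals $t_0,t_1,t_2$, form a 0-extension instance on $V=V'\cup T$ whose edge weights consist of $c'$ together with heavy edges binding each $t_i$ to $s_i$. The heart of the argument is an exchange lemma: if some optimal $\gamma$ maps a vertex $v\in V'$ outside $\{s_0,s_1,s_2\}$, then replacing $\gamma(v)$ by a median of the labels appearing in its neighbourhood would not increase the cost; the failure of the median axiom at $\{s_0,s_1,s_2\}$ means that after this reduction the restricted objective reproduces the $3$-way cut value up to a transparent affine transformation, transferring the known NP-hardness of $3$-terminal cut to \textbf{0-Ext}$[\mu]$.

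For case (\ref{dichotomy: not orientable}), one may assume $\mu$ is modular by case (\ref{dichotomy:not modular}). Non-orientability of $H_\mu$ means that some closed sequence of $4$-cycles sharing successive edges forces an inconsistent orientation. I would isolate a minimal such frustrated configuration, translate it into a local gadget, and use it to encode a hard CSP-style constraint (for example a Max-Cut-like choice on pairs of opposite edges of each $4$-cycle). The global frustration then forces the composite instance to simulate an NP-hard combinatorial problem. The main obstacle is precisely this step: modular graphs are very rigid (bipartite, with well-behaved gated subgraphs), so the reduction must locate a substructure that is combinatorially minimal yet still rich enough to carry genuine hardness, rather than collapsing to a tractable instance via median and gate properties. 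Identifying the precise frustration-producing obstruction, constructing the gadget around it, and proving that the gadget's optimum matches the value of the target hard problem is the technical core of the theorem.
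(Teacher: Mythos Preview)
The paper does not prove Theorem~\ref{thm:undirected-nph} itself --- it is cited from~\cite{karzanov2004}. However, Section~\ref{sec:proof of hardness} proves the directed generalization (Theorem~\ref{thm:directed-nph-extend2}) by explicitly extending Karzanov's argument, so that section shows what the original proof does.

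The actual scheme, in both cases, is a reduction from \textsc{Max~Cut}, not from $3$-terminal cut. One builds a gadget $(V,c)$ with distinguished non-terminals $x,y$ and terminals $s,t$ satisfying condition~(\ref{nph-condition}): the optimum is attained exactly when $\{\gamma(x),\gamma(y)\}=\{s,t\}$, and is larger by at least some fixed $\delta>0$ in all other cases. Replacing each edge of a \textsc{Max~Cut} instance by a copy of the gadget gives the reduction. For case~(\ref{dichotomy:not modular}) the gadget is built from a medianless triple $(s_0,s_1,s_2)$ chosen with $\Delta(s_0,s_1,s_2)$ \emph{minimal}, via a six-vertex ``hexagon'' (Lemma~\ref{lem:nphcondition2} and the construction following it); the minimality is what drives Claim~\ref{claim1}, which controls where the auxiliary vertices can land. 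For case~(\ref{dichotomy: not orientable}) the gadget is a $2k$-cycle of non-terminals threaded through the frustrated chain of $4$-cycles witnessing non-orientability.

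Your case~(\ref{dichotomy: not orientable}) sketch is on the right track: the chain of $4$-cycles with inconsistent orientation is precisely the obstruction used, and the encoded hard problem is indeed \textsc{Max~Cut}.

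Your case~(\ref{dichotomy:not modular}) sketch has a real gap. Reducing from $3$-terminal cut requires that some optimal $\gamma$ map every $v\in V'$ into $\{s_0,s_1,s_2\}$, but your exchange lemma runs in the wrong direction: you propose to replace $\gamma(v)$ by a median of its neighbours' labels, yet the whole point is that the triple has \emph{no} median. In fact the absence of a median typically means there is some interior point $w\in T$ that is strictly better than all three corners for a vertex whose neighbours are split among the $s_i$, so the restriction to $\{s_0,s_1,s_2\}$ cannot be enforced this way, and the claim that the objective becomes an affine image of the $3$-way cut value is unsupported. Karzanov's construction never attempts this; instead a large-weight layer pins each auxiliary $z_i$ into the interval $I(s_{i-1},s_{i+1})\cap I(s_{i+1},s_{i-1})$, and the minimality of $\Delta(s_0,s_1,s_2)$ together with Claim~\ref{claim1} is used to show that the interval endpoints dominate its interior points. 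That structural step is the missing ingredient.
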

\begin{theorem}[\cite{hirai2016}]
\label{thm:undirected-p}
\textit{Let }$\mu$ \textit{be a rational-valued metric. If }$\mu$ \textit{is modular and} $H_\mu$ \textit{is orientable, then }\textbf{0-Ext}$[\mu]$ \textit{is solvable in polynomial time.}
\end{theorem}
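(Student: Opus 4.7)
The plan is to reduce $\textbf{0-Ext}[\mu]$ to the minimization of a discrete convex function on a combinatorial complex built from the orientable modular graph $H_\mu$. Fix an edge-orientation of $H_\mu$ satisfying the $4$-cycle condition stated in the excerpt. This orientation partitions the edges of $H_\mu$ into equivalence classes (``orbits'') — two edges being equivalent when they appear as a pair of parallel sides of some $4$-cycle — and an orbit-invariant edge-length makes $\mu$ the shortest-path metric of the oriented weighted graph. The orientation lets one think of orbits as coordinate directions, and of geodesics as coordinate-monotone walks in a lattice-like structure.

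First, I would construct an ambient ``orthoscheme-like'' complex $\mathcal{K}$ whose $0$-skeleton is $T$ and whose faces correspond to the Boolean intervals (i.e.\ median-closed subsets) of $H_\mu$ induced by the modularity of $\mu$. Then I would extend the objective $\sum_{xy} c(xy)\,\mu(\gamma(x),\gamma(y))$ from integer labelings $\gamma:V\to T$ to configurations valued in $\mathcal{K}$, pairing this with an LP-type relaxation that becomes tight on the vertex set. The key structural claim is that, in a suitable choice of coordinates induced by the orientation, the extension is \emph{L-convex} (in the sense of discrete convex analysis on oriented modular graphs); this is proven by reducing to a pairwise submodularity inequality for each term $\mu(\gamma(x),\gamma(y))$ on every $4$-cycle (and, more generally, on every folder) of $H_\mu$. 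Here the orientability condition is used decisively: it provides globally consistent meets and joins, so that the local submodular inequality on one $4$-cycle propagates to the entire graph.

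Finally, I would invoke the steepest descent algorithm for L-convex function minimization: each iteration amounts to minimizing a submodular set function of size polynomial in $|V|$ (hence solvable in polynomial time by standard submodular minimization), and the number of outer iterations is bounded by the diameter of $\mathcal{K}$, itself polynomial in $|T|$ and in the binary encoding of the edge-lengths. Feasibility ($\gamma(t)=t$ for $t\in T$) is enforced by freezing those coordinates throughout the descent. The main obstacle is the L-convexity verification: establishing that the extended objective satisfies the submodular-type inequality at every folder is the technical heart of the argument, and it is precisely there that the combination of modularity of $\mu$, orbit-invariance of the edge-lengths, and orientability of $H_\mu$ must all be exploited simultaneously.
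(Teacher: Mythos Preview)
This theorem is not proved in the present paper; it is quoted as a known result from \cite{hirai2016}. The only information the paper gives about the method is the remark that it ``was proved by utilizing a related tractability condition obtained by Kolmogorov, Thapper, and \v{Z}ivn\'{y}'' --- i.e., by exhibiting a fractional polymorphism with a semilattice operation in its support and invoking Theorem~\ref{vcsp}. That is also the template the present paper follows for its own tractability results (Theorems~\ref{thm:directed-p} and~\ref{thm:directed-p-star}): verify the submodular-type inequality (\ref{polymorphism}) directly for $\mu$ with respect to lattice operations $\wedge,\vee$, then appeal to the general VCSP machinery.

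Your proposal takes a genuinely different route: build an orthoscheme complex over $H_\mu$, establish L-convexity of the extended objective, and run steepest descent with a submodular-minimization subroutine. This is a legitimate program --- indeed it is the algorithmic counterpart of the polymorphism approach in Hirai's broader work --- and the ``technical heart'' you identify (pairwise submodularity on every $4$-cycle/folder) is essentially the same inequality one checks to certify the fractional polymorphism. The polymorphism route is more black-box (one inequality, then a general theorem), while yours is constructive and yields an explicit algorithm. One technical caution: your claim that the number of outer iterations is ``polynomial in the binary encoding of the edge-lengths'' is off. The correct bound in this framework is the combinatorial diameter of $H_\mu$, which is polynomial in $|T|$ and independent of the edge-length magnitudes; the edge-lengths enter only through the submodular subproblems, not the iteration count.
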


The minimum 0-extension problem
constitutes a fundamental class of \textit{valued CSPs} (\textit{valued constraint satisfaction problem}) \cite{hirai2016} --- a minimization problem of a sum of functions having a constant number of variables. More concretely, \textbf{0-Ext}$[\mu]$ is precisely the \textit{finite-valued CSP} generated by a single binary function $\mu:T\times T\rightarrow \mathbb{Q}_+$ that is a metric. Thapper and \v{Z}ivn\'{y} \cite{thapper2016} established a P-or-NP-hard dichotomy theorem for finite-valued CSPs in terms of a certain fractional polymorphism, and moreover, gave a polynomial time algorithm for the \textit{meta problem} of deciding whether a given \textit{template} (meaning $\mu$ in the case of \textbf{0-Ext}$[\mu]$) defines tractable valued CSPs. Their framework is so general and powerful as to be applicable to \textbf{0-Ext}$[\mu]$, which also provides the complexity dichotomy property of \textbf{0-Ext}$[\mu]$ and a polynomial time checking of a tractable metric $\mu$ for \textbf{0-Ext}$[\mu]$. On the other hand, it does not unravel which combinatorial or geometric properties of metric $\mu$ are connected to the tractability of \textbf{0-Ext}$[\mu]$. Although the proof of Theorem \ref{thm:undirected-p} also utilized a related polymorphism condition obtained by Kolmogorov, Thapper, and \v{Z}ivn\'{y} \cite{kolmogorov2015}, it required a deep study on modular metric spaces to verify this condition. In addition to geometric insights, such an explicit tractability characterization as in Theorems \ref{thm:undirected-nph} and \ref{thm:undirected-p} brings an efficient combinatorial algorithm checking the \textbf{0-Ext}-tractability of metric $\mu$, i.e., the meta problem for \textbf{0-Ext}$[\mu]$.
Indeed, we can verify modularity of $\mu$ by checking whether $m$ is a median of triple $t_1,t_2,t_3$ for every $m,t_1,t_2,t_3\in T$. We can also verify the orientability of $H_\mu$ by orienting each edge in depth first order with respect to an adjacency relation such that edges $\{u,v\}$ and $\{z,w\}$ in each 4-cycle $(u,v,w,z,u)$ are said to be adjacent. It is an $O(|T|^4)$-time algorithm. On the other hand, a polynomial time algorithm obtained by the framework of \cite{thapper2016} requires repeated uses of the ellipsoid method. So it is a natural direction to seek such an efficient combinatorial characterization for a more general binary function $\mu:T\times T\rightarrow \mathbb{Q}_+$ for which the corresponding valued CSP is tractable.

Motivated by these facts, in this paper, we consider a \textit{directed} version of the minimum 0-extension problem, aiming to extend the above results. Here, by ``directed'' we mean that symmetry of $\mu$ and $c$ is not assumed. A \textit{directed metric} on a finite set $T$ is a function $\mu:T\times T\rightarrow \mathbb{R}_+$ that satisfies $\mu(x,x)=0$ and $\mu(x,y)+\mu(y,z)\geq \mu(x,z)$ for every $x,y,z\in T$, and $\mu(x,y)+\mu(y,x)>0$ for every $x\neq y\in T$. For a rational-valued directed metric $\mu$ on $T$, the \textit{directed minimum 0-extension problem} $\overrightarrow{\textbf{0}}$\textbf{-Ext}$[\mu]$ on $\mu$ is defined as follows:
\begin{align*}
\overrightarrow{\textbf{0}}\textbf{-Ext}[\mu]\mathrm{:\ \ \ \ } \mathrm{Instance}:&\mathrm{\ \ }V\supseteq T,\mathrm{\ }c:V\times V
\rightarrow \mathbb{Q}_{+}\\
\mathrm{Min.}&\displaystyle\sum_{\tiny (x,y)\in
V\times V\normalsize}c(x,y)\mu(\gamma(x),\gamma(y))
\\
\mathrm{s.t.}&\mathrm{\ \ }\gamma:V\rightarrow T\mathrm{\ with\ }\gamma(t)=t\mathrm{\ for\ all\ }t\in T.\mathrm{\ \ \ \ \ \ \ \ \ \ \ \ \ \ \ \ \ \ \ \ \ \ \ \ \ \ \ \ \ \ \ \ \ }
\end{align*}
The minimum $s$-$t$ cut problem on a directed network is a typical example of $\overrightarrow{\textbf{0}}$\textbf{-Ext}$[\mu]$ in the case of $T=\{s,t\},\ \mu(s,t)=1,$ and $\mu(t,s)=0$. Also, the directed minimum 0-extension problem contains the undirected version. Hence, the complexity classification of the directed version is an extension of that of the undirected version.

In this paper, we explore sufficient conditions for which $\overrightarrow{\textbf{0}}$\textbf{-Ext}$[\mu]$ is tractable, and for which $\overrightarrow{\textbf{0}}$\textbf{-Ext}$[\mu]$ is NP-hard. Our first contribution is an extension of Theorem \ref{thm:undirected-nph} to the directed version:
\begin{theorem}
\label{thm:directed-nph-extend}
\textit{Let }$\mu$ \textit{be a rational-valued directed metric}. $\overrightarrow{\textbf{0}}$\textbf{-Ext}$[\mu]$ \textit{is strongly NP-hard if one of the following holds:}
\begin{enumerate}
[label=(\roman*),ref=\roman*]
    \item $\mu$ \textit{is not modular.}
    \label{thm:directed-nph-extend-case1}
    \item $H_\mu$ \textit{is not orientable.}
    \label{thm:directed-nph-extend-case2}
    \item $\mu$ \textit{is not directed orbit-invariant.}
    \label{condition:not directed orbit-invariant}
\end{enumerate}
\end{theorem}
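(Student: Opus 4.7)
The plan is to handle the three cases separately. Cases (i) and (ii) should reduce to the undirected NP-hardness result (Theorem~\ref{thm:undirected-nph}) by symmetrizing $\mu$, while case (iii) requires a new, direct gadget-based reduction specific to the directed setting.

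For (i) and (ii), define the symmetrization $\mu_s(x,y) := \mu(x,y) + \mu(y,x)$. A short check shows that $\mu_s$ is a rational-valued undirected metric on $T$, and that the natural extensions of ``modular'' and ``underlying graph'' to a directed metric coincide with the undirected notions for $\mu_s$ (so $\mu$ is modular iff $\mu_s$ is, and $H_\mu = H_{\mu_s}$). Under (i) or (ii), Theorem~\ref{thm:undirected-nph} then implies strong NP-hardness of \textbf{0-Ext}$[\mu_s]$. It remains to reduce \textbf{0-Ext}$[\mu_s]$ to $\overrightarrow{\textbf{0}}$\textbf{-Ext}$[\mu]$: given an undirected instance $(V,c^{\mathrm{und}})$, set $c(x,y) := c^{\mathrm{und}}(xy)$ for every ordered pair $x\neq y$. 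For any $\gamma:V\to T$ fixing $T$,
\[
\sum_{(x,y)\in V\times V} c(x,y)\,\mu(\gamma(x),\gamma(y)) \;=\; \sum_{xy\in\binom{V}{2}} c^{\mathrm{und}}(xy)\,\mu_s(\gamma(x),\gamma(y)),
\]
so the objectives agree pointwise and NP-hardness transfers.

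For (iii), the hypothesis supplies two arcs $(x,y)$ and $(x',y')$ lying in the same orbit of the automorphism group of the orientable modular representation of $\mu$ but whose directed edge-lengths disagree. The plan is to build a small gadget on $\{x,y,x',y'\}$ with a few auxiliary vertices, choosing directed costs so that this length discrepancy forces the optimal extension $\gamma$ to encode a Boolean choice that correlates the two ``twinned'' arcs. Such a gadget then enables a reduction from a strongly NP-hard problem --- a natural candidate is a weighted Max Directed Cut, or an asymmetric two-variable CSP known to be hard by the Thapper--\v{Z}ivn\'{y} dichotomy --- by instantiating one copy of the gadget per constraint of the source instance. A technical point is enforcing $\gamma|_T = \mathrm{id}$; this is standard and can be arranged by attaching auxiliary vertices to the terminals with sufficiently heavy directed costs.

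The main obstacle is clearly (iii): cases (i) and (ii) amount to symmetrization plus an appeal to Theorem~\ref{thm:undirected-nph}, whereas (iii) is the genuinely new part and demands both the right source problem and a gadget that effectively exploits the orbit-witnessing arcs. One also needs to confirm that non--orbit-invariance can always be certified by an arc pair with differing \emph{directed} lengths --- not merely differing symmetrized lengths, which would already fall under (i) or (ii) --- so that the gadget really uses directedness rather than silently reducing to the undirected setting.
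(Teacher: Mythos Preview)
Your symmetrization idea for cases (i) and (ii) is a genuinely different route from the paper's, and it essentially works. The paper does \emph{not} symmetrize; instead it reformulates the theorem as Theorem~\ref{thm:directed-nph-extend2} and builds directed gadgets from scratch in each case, adapting Karzanov's constructions. Your reduction of \textbf{0-Ext}$[\mu_s]$ to $\overrightarrow{\textbf{0}}$\textbf{-Ext}$[\mu]$ is correct, and the equivalence ``$\mu$ modular $\Leftrightarrow$ $\mu_s$ modular'' is true (both directions follow because equality in $\mu_s(s_i,s_j)=\mu_s(s_i,m)+\mu_s(m,s_j)$ forces equality in both directed triangle inequalities). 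One correction: your claim $H_\mu=H_{\mu_s}$ is \emph{false} in general. Take $T=\{x,y,z_1,z_2\}$ where $z_1$ lies on a $\mu$-shortest path from $x$ to $y$ but not from $y$ to $x$, and $z_2$ does the opposite; then $xy\notin H_\mu$ but $xy\in H_{\mu_s}$. The identity $H_\mu=H_{\mu_s}$ does hold once $\mu$ is modular, because then $I_\mu(x,y)=I_\mu(y,x)$ (this is Lemma~\ref{metric->graph-shortest}(\ref{lemma14:case1}) and is not a ``short check''). So for (ii) you must first dispose of the non-modular case via (i), then invoke $H_\mu=H_{\mu_s}$; after that your argument goes through and is cleaner than the paper's.

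Case (iii) is where your proposal has a real gap. First, a directed orbit is \emph{not} an orbit of any automorphism group: it is the equivalence class of oriented edges under projectivity through $4$-cycles of $H_\mu$ (Section~\ref{sec:directed metric spaces}). Second, you have no construction. The paper's proof here (Theorem~\ref{thm:directed-nph-extend2}(ii)) is the substantive new contribution: assuming $\mu$ modular (else case (i) applies), Lemma~\ref{lem:orbitvariant} produces a \emph{directed orbit-varying modular cycle} $(s_0,s_1,s_2,s_3)$---a $4$-cycle in $H_\mu$ where the asymmetry $\mu(s_0,s_1)-\mu(s_3,s_2)=k\neq 0$ propagates around all four sides. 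The gadget then uses eight auxiliary vertices and five nested cost layers $c_0,\ldots,c_4$ with increasing weights $N_1\ll\cdots\ll N_4$: the outer layers force each auxiliary vertex into a two-element set, an intermediate layer with carefully chosen coefficients (cubic polynomials in $k$ and the four edge-lengths) breaks the symmetry so that the $y_i$ prefer opposite diagonals, and the innermost layer creates the $\tau^*$ versus $\tau^*+\delta$ dichotomy of condition~(\ref{nph-condition}). None of this is suggested by your sketch, and the coefficient choices that make the intermediate layer work are the heart of the argument. Your closing remark---that one must check the orbit-violation is genuinely directed and not already covered by (i) or (ii)---is exactly right, and is precisely why the paper restructures the case split as Theorem~\ref{thm:directed-nph-extend2}.
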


The modularity and the underlying graph $H_\mu$ of a directed metric $\mu$ are natural extensions of those of a metric. In \textbf{0-Ext}$[\mu]$, the condition (\ref{dichotomy:not modular}) in Theorem \ref{thm:undirected-nph} contains the condition (\ref{condition:not directed orbit-invariant}) in Theorem \ref{thm:directed-nph-extend}. See Section \ref{sec:directed metric spaces} for the precise definitions of the terminologies. 

We next consider the converse of Theorem \ref{thm:directed-nph-extend}. It is known \cite{bandelt1985} that a canonical example of a modular metric is the graph metric of the covering graph of a modular lattice with respect to an orbit-invariant edge-length. Moreover, a tractable metric $\mu$ in Theorem \ref{thm:undirected-p} is obtained by \textit{gluing} such metrics of modular lattices \cite{hirai2016}. It turns out in Section \ref{sec:directed metric spaces} that a directed metric excluded by (\ref{thm:directed-nph-extend-case1}), (\ref{thm:directed-nph-extend-case2}), and (\ref{condition:not directed orbit-invariant}) in Theorem \ref{thm:directed-nph-extend} also admits an amalgamated structure of modular lattices.
Our second contribution is the tractability for the building block of such a directed metric.
\begin{theorem}
\label{thm:directed-p}
\textit{Let} $\mu$ \textit{be a rational-valued directed metric. Suppose that} $H_\mu$ \textit{is the covering graph of a modular lattice and} $\mu$ \textit{is directed orbit-invariant. Then }$\overrightarrow{\textbf{0}}$\textbf{-Ext}$[\mu]$ \textit{is solvable in polynomial time.}
\end{theorem}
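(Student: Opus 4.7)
The plan is to adapt the approach of Hirai \cite{hirai2016} used in proving Theorem \ref{thm:undirected-p} to the directed setting, with directed orbit-invariance replacing ordinary orbit-invariance.

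Let $L = (T, \le)$ be the modular lattice whose covering graph is $H_\mu$, with meet $\wedge$ and join $\vee$. Directed orbit-invariance implies that the up-length $\ell^+$ and down-length $\ell^-$ assigned to each covering edge are invariant under lattice automorphisms; by modularity, their cumulative values along any maximal chain between comparable elements $x \le y$ depend only on $x$ and $y$. The first step would be to derive an explicit formula for $\mu(a,b)$ in terms of the lattice operations, showing that the shortest directed path from $a$ to $b$ in the weighted Hasse diagram can be realized either through $a \wedge b$ (descend then ascend) or through $a \vee b$ (ascend then descend), whichever is cheaper.

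The heart of the proof is to establish that the pair $(\wedge, \vee)$ is a symmetric fractional polymorphism of $\mu$, namely
\begin{equation*}
\mu(a \wedge b,\, c \wedge d) + \mu(a \vee b,\, c \vee d) \;\le\; \mu(a,c) + \mu(b,d) \quad \text{for all } a,b,c,d \in T.
\end{equation*}
By the tractability criterion of Kolmogorov, Thapper and \v{Z}ivn\'{y} \cite{kolmogorov2015}, this inequality implies that the basic LP relaxation of $\overrightarrow{\textbf{0}}$\textbf{-Ext}$[\mu]$ is tight and can be solved in polynomial time, which yields the theorem.

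The main obstacle is verifying the above inequality in the directed setting. In the undirected case it reduces to the classical submodularity of orbit-invariant modular lattice metrics (Bandelt \cite{bandelt1985}). Here the $\ell^+$- and $\ell^-$-contributions must be treated separately; the orbit-invariance of each component is precisely what enables one to apply the undirected submodular argument to $\ell^+$ and $\ell^-$ individually. The subtlety is that the explicit formula for $\mu$ involves a minimum over the two route types, so combining the separate inequalities requires a careful case analysis on which route realizes the shortest directed distance for each of the four pairs $(a,c), (b,d), (a\wedge b, c\wedge d), (a\vee b, c\vee d)$, as well as a verification that these four witnessing paths can be chosen consistently so that the edge-by-edge contributions balance.
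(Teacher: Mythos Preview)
Your high-level strategy matches the paper's exactly: prove that $\mu$ is submodular on the product lattice $\mathcal{L}\times\mathcal{L}$ (equivalently, that $(\wedge,\vee)$ is a binary fractional polymorphism), and then invoke the Kolmogorov--Thapper--\v{Z}ivn\'{y} criterion.

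However, the ``main obstacle'' you identify is not actually there. You anticipate a case analysis on whether the shortest directed path from $a$ to $b$ runs through $a\wedge b$ or through $a\vee b$, but these two routes always have the \emph{same} $\mu$-length. Both are $d_{H_\mu}$-shortest (their graph-lengths coincide by the modular rank identity $r(a)+r(b)=r(a\wedge b)+r(a\vee b)$), and Lemma~\ref{lem:d-shortest->mu-shortest} says that every $d_{H_\mu}$-shortest sequence is automatically $\mu$-shortest when $\mu$ is directed orbit-invariant. Hence $\mu(a,b)$ equals the $\mu$-length of \emph{any} geodesic in $H_\mu$; there is no minimum to take and no case split on ``which route is cheaper'' for the four pairs. (A related minor slip: directed orbit-invariance is invariance under the projectivity relation generated by 4-cycles, not under lattice automorphisms.)

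Once this is absorbed, the paper's proof is more local than the global $\ell^+/\ell^-$ decomposition you sketch. It first reduces submodularity on $\mathcal{L}\times\mathcal{L}$ to 2-covered pairs (Lemma~\ref{submodular}), leaving two elementary configurations: either the first coordinates agree and the second coordinates form a height-2 diamond, or one coordinate moves up by a cover while the other moves down by a cover. Each is dispatched by a short computation using the gated-set structure of intervals in modular graphs (Lemma~\ref{gated}), directed orbit-invariance on a single 4-cycle, and Lemma~\ref{lem:d-shortest->mu-shortest}. No separation into $\ell^+$ and $\ell^-$ components, and no appeal to the undirected theorem, is needed.
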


See Sections \ref{sec:preliminaries} and \ref{sec:directed metric spaces} for the undefined terminologies.

The converse of Theorem \ref{thm:directed-nph-extend} is not true: Even if $H_\mu$ is a tree (that is excluded by (\ref{thm:directed-nph-extend-case1}), (\ref{thm:directed-nph-extend-case2}), and (\ref{condition:not directed orbit-invariant}) in Theorem \ref{thm:directed-nph-extend}), $\overrightarrow{\textbf{0}}$\textbf{-Ext}$[\mu]$ can be NP-hard. On the other hand, \textbf{0-Ext}$[\mu]$ for which $H_\mu$ is a tree is always tractable (see \cite{tansel1983}). This is a notable difference between \textbf{0-Ext}$[\mu]$ and $\overrightarrow{\textbf{0}}$\textbf{-Ext}$[\mu]$. Our third contribution is a new hardness condition capturing this difference. For $x,y\in T$, let $I_\mu(x,y):=\{z\in T\mid \mu(x,y)=\mu(x,z)+\mu(z,y)\}$, which is called the \textit{interval} from $x$ to $y$. We denote $I:=I_\mu$ if $\mu$ is clear in the context. For $x,y\in T$, the \textit{ratio} $R_{\mu}(x,y)$ from $x$ to $y$ is defined by $R_{\mu}(x,y):=\mu(x,y)/\mu(y,x)$ (if $\mu(y,x)=0$, then $R_{\mu}(x,y):=\infty$). A pair $(x,y)\in
{\scriptsize\begin{pmatrix}T \\
                     2 \\
      \end{pmatrix}}$
is called a \textit{biased pair} if $R_{\mu}(x,z)>R_{\mu}(z,y)$ holds for every $z\in I(x,y)\cap I(y,x)\setminus \{x,y\}$, or $R_{\mu}(x,z)<R_{\mu}(z,y)$ holds for every $z\in I(x,y)\cap I(y,x)\setminus \{x,y\}$. A triple $(s_0,s_1,s_2)$ is called a \textit{non-collinear triple} if $s_i\notin I(s_{i-1},s_{i+1})\cap I(s_{i+1},s_{i-1})$ holds for every $i\in \{0,1,2\}$ (the indices of $s_i$ are taken modulo 3). A non-collinear triple $(s_0,s_1,s_2)$ is also called a \textit{biased non-collinear triple} if $(s_i,s_j)$ is a biased pair for every $i\neq j$. We now state an additional NP-hardness condition of $\overrightarrow{\textbf{0}}$\textbf{-Ext}$[\mu]$:
\begin{theorem}
\label{thm:directed-nph-new}
\textit{Let} $\mu$ \textit{be a rational-valued directed metric on }$T$. \textit{If there exists a biased non-collinear triple for }$\mu$\textit{, then }$\overrightarrow{\textbf{0}}$\textbf{-Ext}$[\mu]$ \textit{is strongly NP-hard.}
\end{theorem}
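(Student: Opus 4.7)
The plan is to prove NP-hardness by a polynomial reduction from the undirected $3$-terminal cut problem (classically NP-hard by Dahlhaus, Johnson, Papadimitriou, Seymour, and Yannakakis), using the biased non-collinear triple $(s_0, s_1, s_2)$ as the three target labels.

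Concretely, given a 3-cut instance $(G = (V_G, E_G), w, u_0, u_1, u_2)$, I would construct an instance of $\overrightarrow{\textbf{0}}\textbf{-Ext}[\mu]$ on $V := V_G \cup T$ by identifying $u_i$ with $s_i$, setting $c(u,v) = c(v,u) := N w(\{u,v\})$ for each $\{u,v\} \in E_G$ (with $N$ polynomial in the input), and $c = 0$ on all other pairs, including those involving $T \setminus \{s_0, s_1, s_2\}$. Any labeling $\gamma$ with $\gamma(V_G) \subseteq \{s_0, s_1, s_2\}$ corresponds to a $3$-partition of $V_G$ respecting the terminals, with cost $N$ times a nonnegative linear combination of the quantities $\mu(s_i, s_j) + \mu(s_j, s_i)$ summed over cut edges. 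By a suitable pre-weighting of the edges of $G$ depending on the pair of endpoint labels, minimum $3$-cuts of $G$ map to minimum-cost integer labelings into $\{s_0, s_1, s_2\}$.

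The crux is the converse direction: showing that some optimum $\gamma^*$ of the constructed instance satisfies $\gamma^*(V_G) \subseteq \{s_0, s_1, s_2\}$. The intended strategy is an exchange argument. For any $v \in V_G$ with $\gamma^*(v) = z \notin \{s_0, s_1, s_2\}$, non-collinearity of the triple yields $z \notin I(s_i, s_j) \cap I(s_j, s_i)$ for at least one pair, so the directed triangle inequality between that pair and $z$ is strict in some direction. On each pair $(s_i, s_j)$ for which $z$ does lie in the bidirectional interval, the biased condition forces $R_\mu(s_i, z)$ and $R_\mu(z, s_j)$ into a strict and consistent order, supplying a directional asymmetry. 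Combining these effects across the three pairs, some index $i \in \{0, 1, 2\}$ yields a weakly cost-improving reassignment $\gamma^*(v) \mapsto s_i$; iterating produces an optimal labeling whose image on $V_G$ is contained in the triple, which then decodes to a $3$-terminal cut of the correct cost.

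The main obstacle will be formalizing this exchange step. The subtlety is that for a fixed $\gamma^*$ and a fixed vertex $v$, the choice $\gamma^*(v) = z$ can in fact be pointwise strictly cheaper than any individual $s_i$ — as happens, for example, in a directed star whose center has biased edge-ratios — so the exchange cannot be purely local. I would expect to resolve this by either (i) a randomized rounding with probabilities tuned to the biased ratios so that the expected cost change is nonpositive, (ii) a simultaneous reassignment of an entire class of $z$-labeled vertices guided by the non-collinear geometry, or (iii) augmenting the construction with auxiliary gadget edges that exploit the biased asymmetry to penalize mapping to $z$ globally. Whichever variant works, this is where the quantitative content of \emph{biased} (consistent ratio asymmetry) must be combined with the combinatorial content of \emph{non-collinear} (absence of a centrality obstruction) into a single rounding inequality; once established, strong NP-hardness of $\overrightarrow{\textbf{0}}\textbf{-Ext}[\mu]$ transfers directly from that of the $3$-terminal cut problem.
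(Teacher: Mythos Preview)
Your proposal has a genuine gap at exactly the point you flag. The construction you describe encodes the $3$-terminal cut instance but uses the biased hypothesis only in the analysis, hoping to round an arbitrary optimum $\gamma^*$ into $\{s_0,s_1,s_2\}$ after the fact. As you yourself observe, the local exchange $\gamma^*(v)\mapsto s_i$ can strictly increase cost, so option~(i) or~(ii) would need a global inequality relating $\mu(z,\cdot)$ to a convex combination of the $\mu(s_i,\cdot)$; the biased condition as stated does not supply one, because it only constrains ratios along each single interval $I(s_{i-1},s_{i+1})\cap I(s_{i+1},s_{i-1})$ and says nothing about how $z$ interacts with the third vertex $s_i$. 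Option~(iii) is where a proof actually lives, but ``augment with gadget edges'' is precisely the content that is missing.

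The paper takes a different route that sidesteps the global rounding problem entirely. It reduces from MAX~CUT rather than $3$-cut, via a generic lemma (Lemma~\ref{lem:nphcondition2}) that says: if one can build a gadget with six free vertices $z_0,\dots,z_5$ whose near-optimal labelings all satisfy $\gamma(z_i)\in\{s_{i-1},s_{i+1}\}$, then $\overrightarrow{\textbf{0}}\textbf{-Ext}[\mu]$ is NP-hard. The biased hypothesis is then used, not in an exchange argument, but to \emph{construct} the gadget weights: one puts a large symmetric weight between $z_i$ and $s_{i\pm1}$ to force $\gamma(z_i)$ into the bidirectional interval, and then adds a small \emph{asymmetric} term $c'$ whose direction is chosen according to which way the ratio inequality goes on that interval. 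Concretely, if $R_\mu(s_{i-1},x)>R_\mu(x,s_{i+1})$ for all interior $x$, one sets $c'(s_{i-1},z_i)=\mu(s_{i+1},s_{i-1})$ and $c'(s_{i+1},z_i)=\mu(s_{i-1},s_{i+1})$; in the opposite case one reverses the arrows. A two-line computation then shows that the $c'$-contribution at any interior $x$ strictly exceeds the value $\mu(s_{i-1},s_{i+1})\mu(s_{i+1},s_{i-1})$ attained at both endpoints. This is a purely local, per-$z_i$ statement, so no global rounding is needed.
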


\begin{figure}[tbp]
\begin{center}
\begin{overpic}[width=14cm]{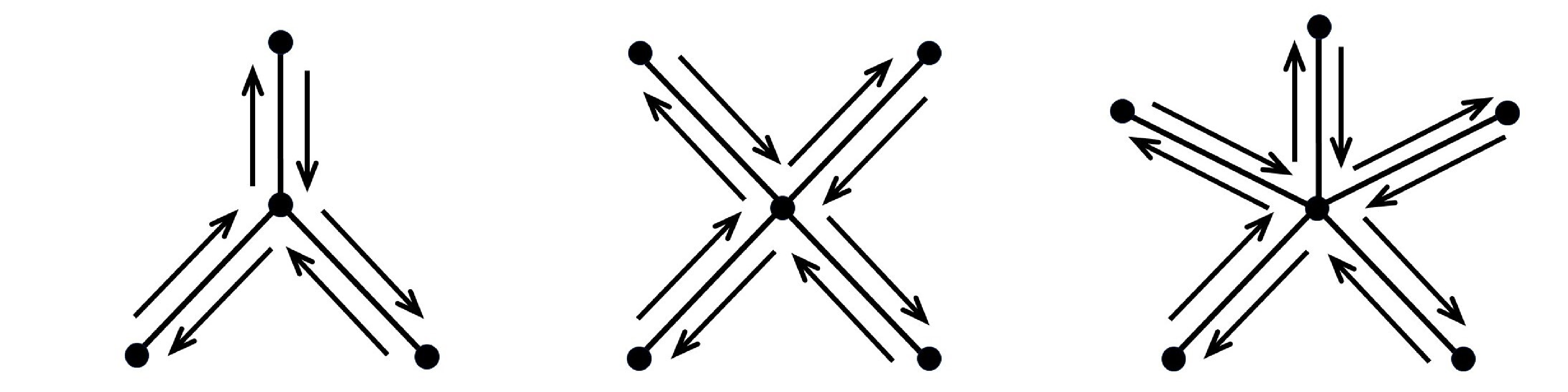}
\put(10,8){1}
\put(14.5,4){2}
\put(20,3.7){1}
\put(23.5,9){2}
\put(14,15.5){2}
\put(20.5,15.5){1}
\put(15.6,11.3){$r$}
\put(5.5,1.5){$x$}
\put(17,24.5){$y$}
\put(28.8,1.5){$z$}
\put(42,8){1}
\put(47,4){2}
\put(56,8.5){2}
\put(52,3.5){4}
\put(43,13){1}
\put(46.7,18.5){3}
\put(56.5,13.7){2}
\put(51.7,18){6}
\put(47.6,11){$r$}
\put(37.5,1.5){$x$}
\put(38,22){$y$}
\put(60.8,21.5){$z$}
\put(60.8,1.5){$w$}
\put(76,8){1}
\put(80.7,4){2}
\put(90.2,8.5){2}
\put(86.1,3.5){4}
\put(75.5,11.5){1}
\put(77,17){1}
\put(80.5,17){3}
\put(86,17){3}
\put(92,11.8){2}
\put(89,16.8){2}
\end{overpic}
\caption{(a)\ NP-hard\ case\ \ \ \ \ \ \ \ (b)\ tractable\ case\ \ \ \ \ \ \ \ \ \ \ \ (c)\ tractable\ case\ \ \ \ \ }
\end{center}
\end{figure}
Figure 1 (a) is an example of a directed metric satisfying the condition in Theorem \ref{thm:directed-nph-new}, as described below. Suppose that $\mu$ is the directed metric such that the underlying graph $H_\mu$ of $\mu$ is the undirected graph described in Figure~1~(a). Then, we have $\mu(v,r)=1,\ \mu(r,v)=2$ for every $v\in \{x,y,z\}$. Since $R_\mu(x,r)=1/2$ and $R_\mu(x,y)=3/3=1$, $(x,y)$ is a biased pair. Similarly, $(y,z)$ and $(z,x)$ are biased pairs. Hence, $(x,y,z)$ is a biased non-collinear triple. 

Our fourth contribution says that the non-existence of a biased non-collinear triple implies tractability, provided the underlying graph is a star.
\begin{theorem}
\label{thm:directed-p-star}
\textit{Let} $\mu$ \textit{be a rational-valued directed metric on $T$. If }$H_\mu$ \textit{is a star and there exists no biased non-collinear triple for $\mu$, then }$\overrightarrow{\textbf{0}}$\textbf{-Ext}$[\mu]$ \textit{is solvable in polynomial time.}
\end{theorem}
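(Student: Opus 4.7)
Let $r$ be the center of the star $H_\mu$ with leaves $t_1,\ldots,t_n$, and write $a_i := \mu(t_i,r)$, $b_i := \mu(r,t_i)$; the star structure forces $\mu(t_i,t_j)=a_i+b_j$ for distinct leaves. For a candidate $\gamma\colon V\to T$, setting $S_k := \gamma^{-1}(t_k)$ and $S_r := \gamma^{-1}(r)$, a direct expansion gives the objective as
\[
\sum_{k=1}^n\bigl[\, a_k\,C(S_k,V\setminus S_k) + b_k\,C(V\setminus S_k,S_k)\,\bigr],\qquad C(A,B):=\sum_{x\in A,\,y\in B}c(x,y).
\]
So $\overrightarrow{\textbf{0}}$\textbf{-Ext}$[\mu]$ becomes the problem of partitioning $V$ into $\{S_r,S_1,\ldots,S_n\}$ with $r\in S_r$ and $t_k\in S_k$, minimizing this sum. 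Each term $f_k(S):=a_k C(S,\overline S)+b_k C(\overline S,S)$ is a submodular directed cut function whose constrained minimum is computable by a single directed minimum cut.

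\paragraph{Hypothesis translation and algorithm.} For distinct leaves one has $I_\mu(t_i,t_j)\cap I_\mu(t_j,t_i)\setminus\{t_i,t_j\}=\{r\}$, so $(t_i,t_j)$ is biased iff $\rho_i\rho_j\ne 1$, where $\rho_k := a_k/b_k$; every triple of distinct leaves is non-collinear. The hypothesis thus becomes: among any three leaves at least one pair satisfies $\rho_i\rho_j=1$. A short case analysis forces the multiset $\{\rho_i\}$ into one of two configurations: either (A) all $\rho_i=1$, or (B) at most four of the $\rho_i$ differ from $1$, every two such ``asymmetric'' values coinciding or forming an inverse pair ($\rho_u\rho_v=1$, equivalently $a_u a_v = b_u b_v$). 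In Case~(A), $\mu$ is an undirected star tree metric and the result follows from Theorem~\ref{thm:undirected-p} (or the classical tree algorithm of \cite{tansel1983}). Otherwise, I plan to handle the constantly many asymmetric leaves via coupled auxiliary directed min-cut constructions that exploit the identity $a_u a_v = b_u b_v$ on inverse pairs, and to merge their outputs with the undirected subproblem on the symmetric leaves using the tree algorithm on the latter.

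\paragraph{Main obstacle.} The crux is the coupled treatment of asymmetric leaves. The directed cut function $f_k$ is submodular but \emph{not} posimodular when $a_k\ne b_k$: a short case analysis on the $2^4$ placements of an arc's endpoints in two sets shows that the defect $f_i(A)+f_j(B)-f_i(A\setminus B)-f_j(B\setminus A)$ can be negative on individual arcs. Thus two individually optimal cuts $S_u^\ast,S_v^\ast$ for asymmetric leaves cannot be uncrossed by a naive argument. My plan is to invoke the identity $a_u a_v = b_u b_v$, forced precisely by the biased non-collinear hypothesis, to cancel these defects across each inverse pair, either through a gadget reduction to a single directed minimum cut or through tightness of a suitably designed LP relaxation. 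Making this symmetrization explicit, and verifying compatibility with the undirected subproblem on the symmetric leaves, is the main technical step.
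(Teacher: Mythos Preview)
Your proposal is genuinely different from the paper's argument, but it has two gaps, one admitted and one not.

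\textbf{The admitted gap.} You explicitly leave the ``main technical step'' undone: handling the asymmetric leaves by some yet-to-be-specified gadget reduction or LP tightness argument that exploits $a_ua_v=b_ub_v$. You correctly observe that the directed cut functions $f_k$ fail posimodularity when $a_k\ne b_k$, so independent minimum cuts cannot be uncrossed. Nothing after that is a proof; it is a plan, and the plan does not say how the cancellation would survive the interaction with the remaining symmetric leaves, nor how two inverse pairs would be coupled to one another.

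\textbf{The structural error in Case~(B).} Your claim that ``every two asymmetric values coincide or form an inverse pair'' is false. With no symmetric leaves, the hypothesis is compatible with four leaves having ratios $\alpha,\,1/\alpha,\,\beta,\,1/\beta$ for unrelated $\alpha,\beta\neq 1$ (check: any triple among these four contains at least one of the unbiased pairs $\{\alpha,1/\alpha\}$ or $\{\beta,1/\beta\}$). For the cross pair $(\alpha,\beta)$ one has $a_ua_v\ne b_ub_v$, so the identity you plan to exploit is unavailable precisely where the two inverse pairs must be made to interact. The correct structural consequence of the hypothesis is not your (B) but the paper's: the leaves can be partitioned into at most two \emph{unbiased} sets $X,Y$ (cliques in the ``$\rho_i\rho_j=1$'' graph), with $\lvert X\rvert\ge 3$ forcing $\rho_i=1$ on $X$; the two cases are $\lvert X\rvert,\lvert Y\rvert\le 2$ and $\lvert X\rvert\ge 3,\ \lvert Y\rvert\le 2$.

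\textbf{What the paper does instead.} The paper does not attempt a direct min-cut algorithm at all. It builds, case by case, an explicit fractional polymorphism of the language $\{\mu,\ \mu(\cdot,t),\ \mu(t,\cdot)\}$ whose support contains a semilattice operation, and then invokes the Kolmogorov--Thapper--\v{Z}ivn\'y theorem (Theorem~\ref{vcsp}). Concretely, one puts a partial order $y_i\prec r\prec x_j$ on $T$, extends it to total orders on each side so that meets $\wedge_1,\wedge_2$ and joins $\vee_1,\vee_2$ (or a family $\vee_g$ in the $\lvert X\rvert\ge 3$ case) exist, and verifies the weighted inequality
\[
\mu(t^1)+\mu(t^2)\ \ge\ \tfrac{1}{l+1}\mu(t^1\wedge_1 t^2)+\tfrac{l}{l+1}\mu(t^1\wedge_2 t^2)+\sum_{\ast}\omega_\ast\,\mu(t^1\vee_\ast t^2)
\]
for all $t^1,t^2\in T\times T$, where the weights are determined by the edge lengths. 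This bypasses the uncrossing obstacle entirely: tractability comes from the LP relaxation being exact, not from combining individual cuts.
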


Figure 1 (b) is an example of a directed metric satisfying the condition in Theorem \ref{thm:directed-p-star}, as $(x,w)$ and $(y,z)$ are not biased pairs in the directed metric of Figure 1 (b) (the directed metric of Figure 1 (b) is defined in the same way as that of Figure 1 (a)). The directed metric of Figure 1 (c) also satisfies the condition in Theorem \ref{thm:directed-p-star}.

The organization of this paper is as follows. Section \ref{sec:preliminaries} provides preliminary arguments which are necessary for the proofs. Section \ref{sec:directed metric spaces} introduces some notions and shows several properties in directed metric spaces. Section \ref{sec:proof of tractability} proves the tractability results of $\overrightarrow{\textbf{0}}$\textbf{-Ext}$[\mu]$.
To show Theorem~\ref{thm:directed-p}~and~\ref{thm:directed-p-star}, we utilize the tractability condition of valued CSPs by Kolmogorov, Thapper, and \v{Z}ivn\'{y}~\cite{kolmogorov2015}, as was utilized in \cite{hirai2016} to prove Theorem \ref{thm:undirected-p}. Section \ref{sec:proof of hardness} shows the hardness results of $\overrightarrow{\textbf{0}}$\textbf{-Ext}$[\mu]$. We prove Theorem \ref{thm:directed-nph-extend} and \ref{thm:directed-nph-new} by use of the \textit{MC condition} in \cite{cohen2006}, which is similar to the hardness proof of the multiterminal cut problem \cite{dahlhaus1994}, and the proof of Theorem \ref{thm:undirected-nph} in \cite{karzanov1998,karzanov2004}. 

A preliminary version \cite{hirai2020} of this paper has appeared in Proceedings of the 45th International Symposium on Mathematical Foundations of Computer Science (MFCS 2020).

\paragraph*{Notation.} Let $\mathbb{R},\ \mathbb{Q},$ $\mathbb{Z}$, and $\mathbb{N}$ denote the sets of reals, rationals, integers, and positive integers, respectively. We also denote the sets of nonnegative reals and rationals by $\mathbb{R}_+$ and $\mathbb{Q}_+$.

\section{Preliminaries}
\label{sec:preliminaries}
\subsection{Valued CSP}
\label{subsec:valued csp}
Let $D$ be a finite set. For a positive integer $k$, a function $f:D^k \rightarrow \mathbb{Q}$ is called a $k$\textit{-ary cost function} on $D$. Let $ar(f):=k$ denote the arity of $f$. 
The \textit{valued constraint satisfaction problem} (\textit{VCSP}) on $D$ is defined as follows \cite{zivny2012}:
\begin{align*}
\textbf{VCSP:}\mathrm{\ \ \ \ } \mathrm{Instance}:&\mathrm{\ \ }\mathrm{a\ set\ }V=\{x_1,x_2,\ldots,x_n\}\mathrm{\ of\ variables},\\
&\mathrm{\ \ rational\mathchar`-valued\ cost\ functions\ }f_1,f_2,\ldots,f_q\ \mathrm{on\ }D,\mathrm{\ }\\
&\mathrm{\ \ weights\ }w_1,w_2,\ldots,w_q\in \mathbb{Q}_+\ \mathrm{of\ cost\ functions},\\
&\mathrm{\ \ assignments\ }\sigma_i:\{1,2,\ldots,ar(f_i)\}\rightarrow \{1,2,\ldots,n\}\ (i=1,2,\ldots,q)\\
\mathrm{Min.}&\ \displaystyle\sum_{i=1}^{q}w_i\cdot f_i\left(x_{\sigma_i(1)},x_{\sigma_i(2)},\ldots,x_{\sigma_i(ar(f_i))}\right)\\ \mathrm{s.t.}&\mathrm{\ \ }(x_1,x_2,\ldots,x_n)\in D^n.
\end{align*}
In an instance of VCSP, cost functions are extensionally represented; we are given all possible values of all cost functions. Hence, the size of an instance is exponential in the arity of each cost function.

A set of cost functions on $D$ is called a \textit{constraint language} on $D$. Unless specifically said otherwise, we assume that all constraint languages are finite.
Let $\Gamma$ be a constraint language on $D$. The instance of VCSP is called a $\Gamma$\textit{-instance} if all cost functions in the instance belong to $\Gamma$. 
Let VCSP$[\Gamma]$ denote the class of the optimization problems whose instances are restricted to $\Gamma$-instances.

Let $\mu$ be a directed metric on $T$. The directed minimum 0-extension problem $\overrightarrow{\textbf{0}}$\textbf{-Ext}$[\mu]$ is viewed as a subclass of VCSP. Indeed, let
\begin{align}
    D&:=T,\notag\\
    f(x,y)&:=\mu(x,y),\notag\\
    g_t(x)&:=\mu(x,t),\notag\\ 
    h_t(x)&:=\mu(t,x), \notag
\end{align}
and let
\begin{align}
\label{0-ext-->vcsp}
    \Gamma:=\{f\}\cup \{g_t\mid t\in T\}\cup \{h_t\mid t\in T\}.
\end{align}
Then we can conclude that $\overrightarrow{\textbf{0}}$\textbf{-Ext}$[\mu]$ is exactly VCSP$[\Gamma]$ on $D$.

Kolmogorov, Thapper, and \v{Z}ivn\'{y} \cite{kolmogorov2015} discovered a criterion for a constraint language $\Gamma$ such that VCSP$[\Gamma]$ is tractable. To describe this, we need some definitions. A function $\varphi:D^m\rightarrow D$ is called an $m$\textit{-ary operation} on $D$. We denote by $\mathcal{O}_D^{(m)}$ the set of $m$-ary operations on $D$. 
A vector $\omega:\mathcal{O}^{(m)}_D\rightarrow \mathbb{R}_+$ is called a \textit{fractional operation} of arity $m$ on $D$ if 
\begin{align}
\label{frac}
    \sum_{\varphi \in \mathcal{O}^{(m)}_D}\omega(\varphi)=1
\end{align}
is satisfied. We denote the support of $\omega$ by
\begin{align}
\label{supp}
    \mathrm{supp}(\omega):=\{\varphi \in \mathcal{O}^{(m)}_D\mid \omega(\varphi)>0\}.
\end{align}
Let $\Gamma$ be a constraint language on $D$. A fractional operation $\omega:\mathcal{O}^{(m)}_D\rightarrow \mathbb{R}_+$ is called a \textit{fractional polymorphism} of $\Gamma$ if for every cost function $f\in \Gamma\ (f:D^n\rightarrow \mathbb{Q})$ and vectors $x^1,\ldots,x^m\in D^n$, 
\begin{align}
\label{polymorphism}
    \sum_{\varphi \in \mathcal{O}^{(m)}_D}\omega(\varphi)f(\varphi(x^1,\ldots,x^m))\leq \frac{1}{m}\sum_{i=1}^mf(x^i),
\end{align}
where $\varphi(x^1,\ldots,x^m)$ indicates
\begin{align}
    \varphi(x^1,\ldots,x^m):=\left(
    \begin{array}{c}
      \varphi(x_1^1,\ldots,x_1^m) \\
      \vdots \\
      \varphi(x_n^1,\ldots,x_n^m)
    \end{array}
  \right).\notag
\end{align}

We now state a powerful theorem on the relation between fractional polymorphisms and tractability of VCSP$[\Gamma]$ by Kolmogorov, Thapper, and \v{Z}ivn\'{y} \cite{kolmogorov2015}:
\begin{theorem}[\cite{kolmogorov2015}]
\label{vcsp}
\textit{Let $\Gamma$ be a constraint language. If}\textit{ $\Gamma$ admits a fractional polymorphism with a semilattice operation in its support},\textit{ then} VCSP$[\Gamma]$ \textit{can be solved in polynomial time.}
\end{theorem}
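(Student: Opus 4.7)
The plan is to show that the Basic LP Relaxation (BLP) of VCSP$[\Gamma]$ is tight under the hypothesis, which yields a polynomial-time algorithm by solving the LP and rounding.

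First, I would formulate BLP. For an instance with $n$ variables and constraint set $\mathcal{C}$, introduce nonnegative LP variables $\lambda_i(a)$ for $i\in\{1,\ldots,n\}$, $a\in D$, interpreted as marginals $\Pr[x_i=a]$, and $\mu_{(f,\sigma)}(t)$ for each $(f,\sigma)\in\mathcal{C}$ and $t\in\mathrm{dom}f$, interpreted as joint probabilities over the scope of $(f,\sigma)$. Impose the standard simplex constraints $\sum_a\lambda_i(a)=1$, $\sum_t\mu_{(f,\sigma)}(t)=1$, and marginal-consistency constraints $\sum_{t:t_j=a}\mu_{(f,\sigma)}(t)=\lambda_{\sigma(j)}(a)$, and minimize $\sum_{(f,\sigma)\in\mathcal{C}}\sum_t \mu_{(f,\sigma)}(t)f(t)$. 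The LP has polynomial size and is solvable in polynomial time, and its optimum is a lower bound on the VCSP optimum.

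Second, I would prove BLP-tightness: every optimal BLP solution can be converted into an integer solution of the same cost. Let $s\in\mathrm{supp}(\omega)$ be a semilattice operation. Using idempotency, commutativity, and associativity of $s$, together with iteration of $\omega$, I would bootstrap the given arity-$m$ fractional polymorphism into a family of \emph{symmetric} fractional polymorphisms $\omega_k$ of every arity $k\geq 2$. Invoking the framework of Thapper and \v{Z}ivn\'{y}~\cite{thapper2012}, the existence of symmetric fractional polymorphisms of all arities forces BLP-tightness; a rounding procedure then extracts an integer optimum. A concrete alternative for the rounding is to repeatedly pick an index $i$ and two elements $a,b\in\mathrm{supp}(\lambda_i)$ and replace them by their $s$-join, using the polymorphism inequality applied to the constraints touching $x_i$ to argue that the LP objective does not increase, and iterating until each $\lambda_i$ is concentrated on a single element.

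The main obstacle is the bootstrap from an arity-$m$ fractional polymorphism with $s\in\mathrm{supp}(\omega)$ to symmetric fractional polymorphisms of all higher arities. The semilattice axioms are essential here: idempotency protects fixed inputs from being altered, commutativity allows reordering of inputs, and associativity ensures iterated $s$-applications are independent of bracketing, so one can define a symmetric $k$-ary aggregation by folding inputs with $s$ in any order. The delicate combinatorial step is to exhibit, for each $k$, a distribution $\omega_k$ over symmetric $k$-ary operations satisfying the polymorphism inequality (\ref{polymorphism}), starting from $\omega$ and using only the semilattice slice of its support; this is the technical heart of~\cite{kolmogorov2015}. Once accomplished, tightness of BLP, and hence polynomial-time solvability of VCSP$[\Gamma]$, follows.
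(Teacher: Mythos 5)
This statement is not proved in the paper at all: it is quoted verbatim from Kolmogorov, Thapper, and \v{Z}ivn\'{y} \cite{kolmogorov2015} and used as a black box, so there is no internal argument to compare yours against. Your sketch does correctly reproduce the strategy of the cited source --- formulate the basic LP relaxation, show it is tight for languages admitting symmetric fractional polymorphisms of every arity, and obtain such a family by bootstrapping from a fractional polymorphism whose support contains a semilattice operation (note that such a polymorphism is necessarily binary, since a semilattice operation is a binary operation, so the ``arity-$m$'' phrasing is a harmless imprecision).

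As a standalone proof, however, your write-up has a gap exactly where you say the ``technical heart'' lies: the construction of the symmetric fractional polymorphisms $\omega_k$ and the derivation of BLP-tightness are precisely the content of \cite{kolmogorov2015} (and, for the finite-valued case, \cite{thapper2012}), so deferring them means nothing new has been established. Moreover, your ``concrete alternative'' rounding --- repeatedly merging two elements $a,b\in\mathrm{supp}(\lambda_i)$ into $s(a,b)$ and claiming the LP objective does not increase --- does not follow from a single application of the inequality (\ref{polymorphism}): that inequality compares the average cost of $m$ full assignments with the $\omega$-average of their images, and it does not directly control the change in the fractional objective when one locally modifies the marginal $\lambda_i$ and the joint distributions of the constraints touching $x_i$ (the other operations in $\mathrm{supp}(\omega)$ contribute as well, and consistency of the modified joint distributions with the other marginals must be re-established). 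Making that rounding rigorous again requires the full machinery of the cited paper, e.g.\ tightness plus self-reduction by fixing variables one at a time and re-solving the LP. So the proposal is a faithful outline of the known proof rather than a proof, which is consistent with the paper's own choice to cite the result rather than prove it.
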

Here a \textit{semilattice operation} $\varphi$ on $D$ is a binary operation which satisfies $\varphi(x,x)=x,\ \varphi(x,y)=\varphi(y,x)$, and $\varphi(x,\varphi(y,z))=\varphi(\varphi(x,y),z)$ for every $x,y,z\in D$.

\begin{remark}
Thapper and \v{Z}ivn\'{y} \cite{thapper2012} showed that the tractability of finite-valued CSP VCSP$[\Gamma]$ is completely characterized by the existence of a certain fractional polymorphism for $\Gamma$.
The existence of this fractional polymorphism reduces to the feasibility of a linear program over the space of binary fractional polymorphisms.
Via a version of Farkas' lemma, the infeasibility of the LP leads to instances that can solve maximum cut problems. They showed that by using the ellipsoid method (i.e., the machinery of separation-optimization equivalence), the feasibility of this LP can be checked in polynomial time provided the constraint language is a `core'.
Here a constraint language $\Gamma$ is called a `core' if every unary fractional polymorphism $\omega$ of $\Gamma$ consists of injective operations in its support.

Our constraint language $\Gamma$ defined in (\ref{0-ext-->vcsp}) is a `core'. Indeed, $\Gamma$ contains $g_t$ and $h_t$, and hence every operation $\varphi$ in a unary fractional polymorphism must be an identity map (since $g_t(t)+h_t(t)\geq g_t(\varphi(t))+h_t(\varphi(t))$ implies $\varphi(t)=t$).
\end{remark}

Cohen et al. \cite{cohen2006} discovered a sufficient condition for a constraint language $\Gamma$ such that VCSP$[\Gamma]$ is NP-hard. We describe this condition with some definitions.
For a constraint language $\Gamma$, let $\langle \Gamma \rangle$ denote the set of all functions $f(x_1,\ldots,x_m)$ such that for some instance $I$ of VCSP$[\Gamma]$ with objective function $f_I(x_1,\ldots,x_m,x_{m+1},\ldots,x_n)$, we have 
\begin{align}
    f(x_1,\ldots,x_m)=\min_{x_{m+1},\ldots,x_n}f_I(x_1,\ldots,x_m,x_{m+1},\ldots,x_n).
\end{align}
For a constraint language $\Gamma$ on $D$, we define the following condition (MC):
\vspace{0.1in} \\*
\textbf{(MC)} There exist distinct $a,b\in D$ such that $\langle \Gamma \rangle$ contains a binary cost function $f$ satisfying $\mathrm{argmin}\ f=\{(a,b),(b,a)\}$.
\vspace{0.1in} \\*
Cohen et al. \cite{cohen2006} revealed the relation between the condition (MC) and NP-hardness of VCSP$[\Gamma]$ (also see \cite{thapper2016}).
\begin{proposition}[{\cite[Proposition 5.1]{cohen2006}}; see {\cite[Lemma 2]{thapper2016}}]
\label{prop:mcNP-hard}
If a constraint language $\Gamma$ on $D$ satisfies the condition (MC), then VCSP$[\Gamma]$ is strongly NP-hard.
\end{proposition}
Let $\Gamma$ be a constraint language on $D$ satisfying (MC). In \cite{cohen2006}, only the NP-hardness of VCSP$[\Gamma]$ was shown, whereas the ``strongly'' NP-hardness was not explicitly shown. However, the strongly NP-hardness easily follows from the proofs of Proposition 5.1 and Theorem 3.4 in \cite{cohen2006}.  
\begin{remark}
Let $\Gamma$ be a constraint language satisfying (MC). The proof of NP-hardness of VCSP$[\Gamma]$ in \cite{cohen2006} was based on the reduction from the maximum cut problem (MAX CUT). This reduction is an extension of the hardness proofs of the multiterminal cut problem \cite{dahlhaus1994} and \textbf{0-Ext}$[\mu]$ \cite{karzanov1998, karzanov2004} (Theorem \ref{thm:undirected-nph}). We prove NP-hardness of $\overrightarrow{\textbf{0}}$\textbf{-Ext}$[\mu]$ with the aid of NP-hardness of VCSP$[\Gamma]$ in Section \ref{sec:proof of hardness}.
Thapper and \v{Z}ivn\'{y} \cite{thapper2016} also showed the complete characterization of  constraint languages for which finite-valued CSPs are NP-hard, see \cite{thapper2016} for details.
\end{remark}

\subsection{Modular graphs}
Let $G=(V,E)$ be a connected undirected graph. The \textit{graph metric} $d_G:V\times V\rightarrow \mathbb{Z}$ is defined as follows:
\begin{align}
\label{graph metric}
    d_G(x,y):=\mathrm{the\ number\ of\ edges\ in\ a\ shortest\ path\ from}\ x\ \mathrm{to}\ y\ \mathrm{in}\ G\ (x,y\in V).
\end{align}
We denote $d_G$ simply by $d$ if $G$ is clear in the context. We say that $G$ is \textit{modular} if its graph metric $d_G$ is modular (remember that a metric $\mu$ on $T$ is called modular if for every $s_0,s_1,s_2\in T$, there exists a median $m\in T$, which satisfies $\mu(s_i,s_j)=\mu(s_i,m)+\mu(m,s_j)$ for every $0\leq i<j\leq 2$). We show examples of a modular graph and a nonmodular graph in Figure 2.
\begin{figure}[tbp]
\begin{center}
\begin{overpic}[width=170mm]{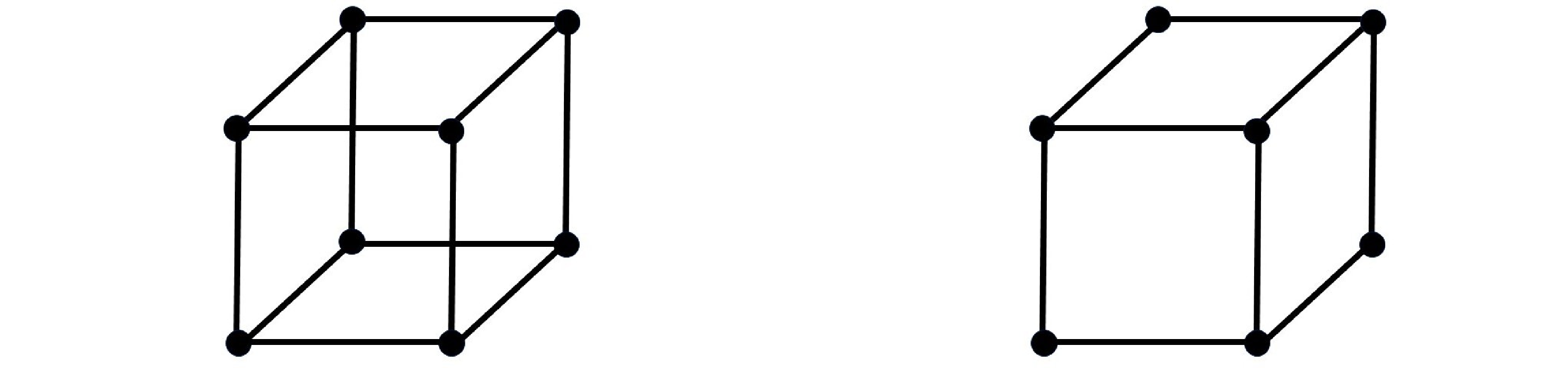}
\put(12,15){$p$}
\put(11.5,1.2){$p_1$}
\put(18.3,22){$p_2$}
\put(37.8,7.3){$q$}
\put(63.6,15){$p$}
\put(63.1,1.2){$p_1$}
\put(69.9,22){$p_2$}
\put(89.4,7.3){$q$}
\put(19.2,7.6){$p'$}
\end{overpic}
\caption{(a)\ a modular graph\ \ \ \ \ \ \ \ \ \ \ \ \ \ \ \ \ \ \ \ \ \ \ \ \ \ \ \ \ \ \ \ (b)\ a nonmodular graph}
\end{center}
\end{figure}
\begin{lemma}[\cite{bandelt1993}]
\label{modularcondition}
\textit{A connected undirected graph $G=(V,E)$ is modular if and only if  the following two conditions hold:}
\begin{enumerate}[label=(\roman*),ref=\roman*]
    \item $G$ \textit{is a bipartite graph}.
    \item \textit{For every pair of vertices} $p,q\in V$ \textit{and neighbors} $p_1,p_2$ \textit{of} $p$ \textit{with} $d(p,q)=1+d(p_1,q)=1+d(p_2,q)$, \textit{there exists a common neighbor} $p'$ \textit{of} $p_1,p_2$ \textit{with} $d(p,q)=2+d(p',q)$.
\end{enumerate}
\end{lemma}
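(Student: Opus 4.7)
The statement in question is the classical characterization of modular graphs as bipartite graphs with a ``quadrilateral'' property on geodesics. I would prove it in the usual two directions, with the bulk of the work on the ``if'' direction.

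\medskip
\noindent\textbf{Necessity of (i) and (ii).} For bipartiteness, I would argue that an odd cycle produces a triple with no median. A triangle $s_0s_1s_2$ already fails: any median $m$ must satisfy $1 = d(s_i,m)+d(m,s_j)$ for each pair, forcing $m\in\{s_0,s_1,s_2\}$, but then the distance equation fails for the remaining pair. For a shortest odd cycle $C$ of length $2k+1\ge 5$, I would pick three vertices on $C$ at cyclic distances $k, k, 1$; the minimality of $C$ guarantees these are also graph distances, and any median $m$ would, via the triangle equalities, yield a strictly shorter odd closed walk through $m$ and hence a shorter odd cycle, a contradiction.

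For condition (ii), let $p,q,p_1,p_2$ be as stated, and let $m$ be a median of the triple $\{p_1,p_2,q\}$, which exists by modularity. Since $d(p_1,q)=d(p_2,q)=d(p,q)-1$, bipartiteness (already proved) puts $p_1,p_2$ in the same color class, so $d(p_1,p_2)$ is even; but $d(p_1,p_2)\le 2$ because both are neighbors of $p$, and $p_1\ne p_2$, so $d(p_1,p_2)=2$. From $d(p_1,m)+d(m,p_2)=2$, either $m\in\{p_1,p_2\}$ or $m$ is a common neighbor of $p_1,p_2$. The first case is impossible: if $m=p_1$, then $d(p_2,q)=d(p_2,p_1)+d(p_1,q)=2+(d(p,q)-1)=d(p,q)+1$, contradicting $d(p_2,q)=d(p,q)-1$. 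Hence $m$ is a common neighbor of $p_1,p_2$, and the median equality for $(p_1,q)$ yields $d(m,q)=d(p_1,q)-1=d(p,q)-2$, so $p':=m$ satisfies (ii).

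\medskip
\noindent\textbf{Sufficiency of (i) and (ii).} Given $s_0,s_1,s_2\in V$, I would prove that a median exists by induction on the quantity $\Sigma:=d(s_0,s_1)+d(s_1,s_2)+d(s_2,s_0)$. If two of the $s_i$'s coincide, one of them serves as the median, so suppose they are pairwise distinct. Choose any one, say $p=s_0$, and pick neighbors $p_1,p_2$ of $p$ lying on shortest paths from $p$ to $s_1$ and to $s_2$, respectively, so $d(p_j,s_j)=d(p,s_j)-1$ for $j=1,2$. If $p_1=p_2$, then replacing $p$ by this common neighbor reduces the sum $d(p,s_1)+d(p,s_2)$ by two while leaving $d(s_1,s_2)$ unchanged, and one induction step on the triple $(p_1,s_1,s_2)$ delivers the desired median of $(s_0,s_1,s_2)$ after a short check that $s_0$ still lies on a geodesic. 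If $p_1\ne p_2$, I would split on whether one of $p_1,p_2$ is already closer to both $s_1$ and $s_2$: if $d(p_1,s_2)=d(p,s_2)-1$, then $p_1$ strictly decreases $\Sigma$ and induction applies; otherwise by bipartiteness $d(p_1,s_2)=d(p,s_2)+1$ and symmetrically $d(p_2,s_1)=d(p,s_1)+1$, and I would apply condition (ii) to $p$, $q=s_1$, $p_1,p_2$ (noting $d(p,s_1)=1+d(p_1,s_1)=1+d(p_2,s_1)$ after a parity argument using (i)) to obtain a common neighbor $p'$ of $p_1,p_2$ with $d(p',s_1)=d(p,s_1)-2$, and analogously with respect to $s_2$. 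A short argument shows the two outputs coincide, giving a vertex $p'$ with $d(p',s_1)<d(p,s_1)$ and $d(p',s_2)<d(p,s_2)$; then induction on the triple $(p',s_1,s_2)$ produces a median, which is also a median of $(s_0,s_1,s_2)$.

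\medskip
\noindent\textbf{Main obstacle.} The delicate point is the inductive step when $p_1\ne p_2$ and neither of them is simultaneously closer to both $s_1$ and $s_2$. Condition (ii) is exactly the local tool tailored for this, but applying it correctly requires verifying the hypothesis $d(p,q)=1+d(p_1,q)=1+d(p_2,q)$ via bipartiteness, and then checking that the ``common neighbor'' produced with respect to $s_1$ and the one produced with respect to $s_2$ can be reconciled into a single vertex that strictly decreases $\Sigma$.
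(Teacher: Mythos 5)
The paper itself gives no proof of this lemma (it is quoted from Bandelt), so your argument can only be judged on its own merits. Your necessity direction is essentially correct: the isometric shortest-odd-cycle argument for bipartiteness and the median-of-$\{p_1,p_2,q\}$ argument for condition (ii) both work.

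The sufficiency direction, however, has a genuine gap at exactly the point you call delicate. In your remaining case you have, by the choice of $p_1,p_2$ and bipartiteness, $d(p_1,s_1)=d(p,s_1)-1$ but $d(p_2,s_1)=d(p,s_1)+1$ (and the mirror situation with $s_2$). Hence the hypothesis of (ii) for $q=s_1$, namely $d(p,s_1)=1+d(p_1,s_1)=1+d(p_2,s_1)$, is simply false; no parity argument can repair this, since parity only gives $d(p_2,s_1)=d(p,s_1)\pm 1$ and you are precisely in the $+1$ case. So (ii) cannot be applied to the pair $p_1,p_2$ with $q=s_1$ or $q=s_2$, the two ``outputs'' you intend to reconcile are never produced, and the induction does not close. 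What is actually needed here is to prove that if no neighbor of $s_0$ lies in $I(s_0,s_1)\cap I(s_0,s_2)$, then $s_0\in I(s_1,s_2)$ (so $s_0$ itself is the median); the standard way to get this from (i) and (ii) is a separate argument in which (ii) is applied with $q=s_0$ at interior peaks of a geodesic from $s_1$ to $s_2$: if every neighbor of $s_1$ in $I(s_1,s_2)$ and every neighbor of $s_2$ in $I(s_2,s_1)$ were farther from $s_0$, the distance-to-$s_0$ profile along any $s_1$--$s_2$ geodesic would rise at both ends, hence have an interior local maximum $x_i$, where (ii) legitimately applies (both geodesic neighbors of $x_i$ are closer to $s_0$) and yields a replacement geodesic with strictly smaller total distance to $s_0$; iterating gives a contradiction, and the remaining case recurses on a triple with smaller perimeter. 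This ``leveling a geodesic'' use of (ii) — with $q=s_0$ at peaks of an $s_1$--$s_2$ geodesic, not with $q=s_i$ at $s_0$ — is the missing idea, and without it (or an equivalent double induction) the proof of sufficiency is incomplete.
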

By Lemma \ref{modularcondition}, we can easily check (non)modularity of the graphs in Figure 2. The graph in Figure 2 (b) satisfies $d(p,q)=1+d(p_1,q)=1+d(p_2,q)$. However, there exists no common neighbor $p'$ of $p_1,p_2$ with $d(p,q)=2+d(p',q)$, which implies that this graph is not modular. On the other hand, the graph in Figure 2 (a) has common neighbor $p'$ of $p_1,p_2$ with $d(p,q)=2+d(p',q)$. By similar arguments, we can easily verify that the graph in Figure 2 (a) satisfies the condition (ii) of Lemma \ref{modularcondition}. It is also easy to verify that this graph satisfies the condition (i) of Lemma \ref{modularcondition}. Hence, the graph in Figure 2 (a) is modular.

Let $(T,\mu)$ be a metric space. For $x,y\in T$, we denote the interval of $x,y$ by $I_\mu(x,y):=\{z\in T\mid \mu(x,y)=\mu(x,z)+\mu(z,y)\}$. We denote $I:=I_\mu$ if $\mu$ is clear in the context. A subset $X\subseteq T$ is called a \textit{convex set} if $I(p,q)\subseteq X$ for every $p,q\in X$. A subset $X\subseteq T$ is called a \textit{gated set} if for every $p\in T$, there exists $p'\in X$, called the \textit{gate} of $p$ at $X$, such that $\mu(p,q)=\mu(p,p')+\mu(p',q)$ for every $q\in X$. The gate of $p$ at $X$ is unique. Chepoi \cite{chepoi1989} showed the following relation between convex sets and gated sets:
\begin{lemma}[\cite{chepoi1989}]
\label{gated}
\textit{Let }$G=(V,E)$ \textit{be a modular graph. For the metric space} $(V,d)$ \textit{and a subset} $X\subseteq V$\textit{, the following conditions are equivalent}:
\begin{enumerate}[label=(\roman*),ref=\roman*]
    \item $X$ \textit{is convex.}
    \item $X$ \textit{is gated.}
\end{enumerate}
\end{lemma}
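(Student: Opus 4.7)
The plan is to prove the two implications separately. The direction (ii)$\Rightarrow$(i) uses no modularity at all, only the triangle inequality. Suppose $X$ is gated, take $p,q\in X$ and $z\in I(p,q)$, and let $z'$ be the gate of $z$ at $X$. The gate property applied to $p$ and to $q$ gives $d(z,p)=d(z,z')+d(z',p)$ and $d(z,q)=d(z,z')+d(z',q)$. Summing these and using $d(p,q)=d(p,z)+d(z,q)$ yields $d(p,q)=2d(z,z')+d(z',p)+d(z',q)$, while the triangle inequality gives $d(z',p)+d(z',q)\geq d(p,q)$. Hence $d(z,z')=0$, so $z=z'\in X$.

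For (i)$\Rightarrow$(ii), the key idea is to choose the candidate gate as a closest point of $X$ to $p$ and use the median axiom to upgrade closeness into the full gate identity. Given $p\in V$, pick $p'\in X$ minimizing $d(p,x)$ over $x\in X$, and take an arbitrary $q\in X$. Modularity supplies a median $m$ of $\{p,p',q\}$, so
\[
d(p,p')=d(p,m)+d(m,p'),\quad d(p,q)=d(p,m)+d(m,q),\quad d(p',q)=d(p',m)+d(m,q).
\]
The third identity says $m\in I(p',q)$, and convexity of $X$ together with $p',q\in X$ gives $m\in X$. Minimality of $d(p,\cdot)$ on $X$ then forces $d(p,m)\geq d(p,p')$, which combined with the first identity gives $d(m,p')=0$, i.e.\ $m=p'$. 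Substituting into the second identity yields $d(p,q)=d(p,p')+d(p',q)$, which is exactly the gate identity for $p'$ with respect to an arbitrary $q\in X$; the uniqueness of the gate asserted in the paper is then immediate by applying this identity twice to two candidate gates.

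I expect the only delicate point to be the existence of the median of the triple $\{p,p',q\}$, and the hypothesis that $G$ is modular supplies exactly this. Once the median is in hand, neither bipartiteness nor the neighbor-level condition in Lemma~\ref{modularcondition} is needed: the proof reduces to a short manipulation of the three median identities against the minimality of $p'$ and the convexity of $X$.
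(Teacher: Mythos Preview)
Your proof is correct. Note, however, that the paper does not actually supply its own proof of this lemma: it is quoted from Chepoi~\cite{chepoi1989} and used as a black box. So there is no ``paper's proof'' to compare against; what you have written is the standard argument one finds in the literature on gated sets in metric graph theory.

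A couple of minor remarks on your write-up. First, in (i)$\Rightarrow$(ii) you should say explicitly that a nearest point $p'\in X$ exists; this follows because $V$ is finite (and the case $X=\emptyset$ is vacuous). Second, your closing sentence about uniqueness is fine but could be spelled out: if $p'$ and $p''$ both satisfy the gate identity, then $d(p,p'')=d(p,p')+d(p',p'')$ and $d(p,p')=d(p,p'')+d(p'',p')$, whence $d(p',p'')=0$. Neither of these is a gap, just points where a grader might want one more line.
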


\subsection{Modular lattices}
Let $\mathcal{L}$ be a partially ordered finite set with a partial order $\preceq$. By $a\prec b$ we mean $a\preceq b$ and $a\neq b$. For $a,b\in \mathcal{L}$, we denote by $a\vee b$ the minimum element of the set $\{c\in \mathcal{L}\mid c\succeq a$ and $c\succeq b\}$, and denote by $a\wedge b$ the maximum element of the set $\{c\in \mathcal{L}\mid c\preceq a$ and $c\preceq b\}$. If for every $a,b\in \mathcal{L}$ there exist $a\vee b$ and $a\wedge b$, then $\mathcal{L}$ is called a \textit{lattice}. A lattice $\mathcal{L}$ is called \textit{modular} if for every $a,b,c\in \mathcal{L}$ with $a\preceq c$ it holds that $a\vee (b\wedge c)=(a\vee b)\wedge c$. For $a\preceq b\in \mathcal{L}$, we let $[a,b]$ denote the interval $\{c\in \mathcal{L}\mid a\preceq c \preceq b\}$. For $a\prec b\in \mathcal{L}$, a sequence $(a=u_0,u_1,\ldots,u_n=b)$ is called a \textit{chain} from $a$ to $b$ if $u_{i-1}\prec u_i$ holds for all $i\in \{1,2,\ldots,n\}$. Here the length of a chain $(u_0,u_1,\ldots,u_n)$ is $n$. We denote by $r[a,b]$ the length of the longest chain from $a$ to $b$. For a lattice $\mathcal{L}$, let $\textbf{0}$ denote the minimum element of $\mathcal{L}$, and let $\textbf{1}$ denote the maximum element of $\mathcal{L}$. The rank $r(a)$ of an element $a$ is defined by $r(a):=r[\textbf{0},a]$.
\begin{lemma}[see {\cite[Chapter I\hspace{-.1em}I]{birkhoff1967}}]
\label{jordandedekind}
\textit{Let $\mathcal{L}$ be a modular lattice. For $a\preceq b\in \mathcal{L}$, the following condition }(\textit{called} Jordan--Dedekind chain condition) holds:
\begin{align}
    \mathrm{\textit{All\ maximal\ chains\ from\ }}a \ \mathrm{\textit{to}}\ b\ \mathrm{\textit{have\ the\ same\ length.}}
\end{align}
\end{lemma}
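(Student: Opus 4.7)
The plan is to induct on $r[a,b]$, the length of the longest chain from $a$ to $b$. The base cases $r[a,b]\in \{0,1\}$ are immediate: when $r[a,b]=0$ we have $a=b$ and the only chain has length $0$, and when $r[a,b]=1$ the element $b$ covers $a$, so $(a,b)$ is the unique maximal chain. Assume the statement holds for all intervals of rank strictly less than $r[a,b]$, and fix two maximal chains $a=x_0\prec x_1\prec \cdots \prec x_n=b$ and $a=y_0\prec y_1\prec \cdots \prec y_m=b$. If $x_1=y_1$, I would apply the inductive hypothesis to the interval $[x_1,b]$ to get $n-1=m-1$.

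The interesting case is $x_1\neq y_1$. The central auxiliary claim I would prove is: \emph{in a modular lattice, if $x_1$ and $y_1$ both cover $a$ and $x_1\neq y_1$, then $z:=x_1\vee y_1$ covers both $x_1$ and $y_1$.} Granting this claim, I pick any maximal chain $z=w_0\prec w_1\prec \cdots \prec w_k=b$. Then $(x_1,x_2,\ldots,x_n)$ and $(x_1,z,w_1,\ldots,w_k)$ are both maximal chains in $[x_1,b]$, which has strictly smaller rank, so by induction $n-1=k+1$. Symmetrically, from $(y_1,y_2,\ldots,y_m)$ and $(y_1,z,w_1,\ldots,w_k)$ in $[y_1,b]$, we get $m-1=k+1$, hence $n=m$.

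The main obstacle is the auxiliary diamond-covering claim, since it is the one place where the modular identity $a\vee(b\wedge c)=(a\vee b)\wedge c$ (for $a\preceq c$) genuinely enters. To prove $z$ covers $x_1$, I would suppose for contradiction that $x_1\prec w\preceq z$ and apply modularity with $x_1\preceq w$ to obtain
\[
x_1\vee (y_1\wedge w)=(x_1\vee y_1)\wedge w=z\wedge w=w.
\]
Since $y_1\wedge w\preceq y_1$ and $y_1$ covers $a$, the element $y_1\wedge w$ must equal either $a$ or $y_1$; the value $a$ would force $w=x_1\vee a=x_1$, contradicting $x_1\prec w$, so $y_1\wedge w=y_1$. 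This gives $y_1\preceq w$, hence $z\preceq w$, so $w=z$, and $z$ covers $x_1$. By symmetry $z$ also covers $y_1$, establishing the claim and thereby completing the induction. Everything outside this single modular-identity step is routine chain-length bookkeeping.
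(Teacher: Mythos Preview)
Your proof is correct and is essentially the standard textbook argument for the Jordan--Dedekind chain condition in modular lattices. The paper itself does not supply a proof of this lemma: it is stated with a reference to Birkhoff~\cite{birkhoff1967} and used as a black box, so there is no ``paper's own proof'' to compare against beyond that citation.

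A couple of minor remarks on your write-up. First, when you invoke the inductive hypothesis on $[x_1,b]$, you are implicitly using that $r[x_1,b]<r[a,b]$; this is immediate since any chain in $[x_1,b]$ can be prepended by $a$ to give a strictly longer chain in $[a,b]$, but it is worth saying explicitly. Second, in the diamond-covering step you should note that $a\preceq y_1\wedge w$ (since $a\preceq y_1$ and $a\preceq x_1\preceq w$), which is what lets you conclude $y_1\wedge w\in\{a,y_1\}$ from the fact that $y_1$ covers $a$. Both points are routine, and with them the argument is complete.
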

By Lemma \ref{jordandedekind}, we can see that for a modular lattice $\mathcal{L}$ and $a\in \mathcal{L}$, $r(a)$ is equal to the length of a maximal chain from $\textbf{0}$ to $a$. A modular lattice is also characterized by rank as follows:
\begin{lemma}[see {\cite[Chapter I\hspace{-.1em}I]{birkhoff1967}}]
\label{lem:modular}
\textit{A lattice }$\mathcal{L}$ \textit{is modular if and only if for every} $a,b\in \mathcal{L}$, $r(a)+r(b)=r(a\wedge b)+r(a\vee b)$ \textit{holds.}
\end{lemma}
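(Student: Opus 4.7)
The plan is to prove both implications. For the forward direction (modularity $\Rightarrow$ rank identity), the key tool is the classical \emph{diamond isomorphism}: for any $a,b\in\mathcal{L}$, the map $\varphi\colon [a\wedge b,a]\to [b,a\vee b]$ defined by $\varphi(x):=x\vee b$ admits the inverse $\psi(y):=y\wedge a$. Both maps are order-preserving and clearly send their domains into the claimed intervals (e.g.\ $x\preceq a$ yields $x\vee b\preceq a\vee b$). To check that they are mutually inverse, for $x\in[a\wedge b,a]$ one computes $\psi(\varphi(x))=(x\vee b)\wedge a=x\vee(b\wedge a)=x$, where the middle step is the modular law applied with $x\preceq a$ and the last step uses $a\wedge b\preceq x$; the verification $\varphi(\psi(y))=y$ is symmetric. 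Hence $[a\wedge b,a]$ and $[b,a\vee b]$ are isomorphic as (sub)lattices, and Lemma~\ref{jordandedekind} then gives $r[a\wedge b,a]=r[b,a\vee b]$, which rearranges to $r(a)+r(b)=r(a\wedge b)+r(a\vee b)$.

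For the converse I argue contrapositively. Suppose the rank identity holds but $\mathcal{L}$ is not modular; pick $a\preceq c$ and $b$ with $d:=a\vee(b\wedge c)\neq (a\vee b)\wedge c=:e$. Since $a\preceq c$ forces $d\preceq c$, and trivially $d\preceq a\vee b$, we have $d\preceq e$, hence $d\prec e$. A short direct manipulation yields
\[
d\wedge b = b\wedge c = e\wedge b, \qquad d\vee b = a\vee b = e\vee b,
\]
using $d\preceq c$, the absorption $(a\vee b)\wedge c\wedge b=b\wedge c$, and $a\preceq e$ (which holds because $a\preceq c$ and $a\preceq a\vee b$). Applying the assumed rank identity to $(d,b)$ and to $(e,b)$ therefore gives $r(d)+r(b)=r(b\wedge c)+r(a\vee b)=r(e)+r(b)$, so $r(d)=r(e)$; but prolonging any maximum chain from $\mathbf{0}$ to $d$ by $e$ produces a chain to $e$ of length $r(d)+1$, forcing $r(e)\geq r(d)+1$, a contradiction.

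The main obstacle is the bookkeeping in the diamond isomorphism: the modular law $u\vee(v\wedge w)=(u\vee v)\wedge w$ must be applied with the correct comparability hypothesis $u\preceq w$ at each inverse check, and one must confirm both $\varphi$ and $\psi$ really do land in their target intervals. Once this is set up, the forward direction is essentially a one-line appeal to Jordan--Dedekind (Lemma~\ref{jordandedekind}) on the isomorphic intervals, and the converse reduces to the rank arithmetic shown above.
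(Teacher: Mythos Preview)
The paper does not supply its own proof of this lemma; it is stated with a citation to Birkhoff and left unproved. Your argument is correct and is exactly the classical textbook proof: the diamond isomorphism $[a\wedge b,a]\cong[b,a\vee b]$ together with the Jordan--Dedekind chain condition (Lemma~\ref{jordandedekind}) gives the forward direction, and for the converse you correctly identify the pentagon configuration $d\prec e$ with $d\wedge b=e\wedge b$ and $d\vee b=e\vee b$, from which the rank identity forces the contradiction $r(d)=r(e)$. One small remark: in the forward direction the isomorphism by itself already yields $r[a\wedge b,a]=r[b,a\vee b]$; the appeal to Lemma~\ref{jordandedekind} is really needed in the next step, to guarantee $r(a)=r(a\wedge b)+r[a\wedge b,a]$ and $r(a\vee b)=r(b)+r[b,a\vee b]$ (i.e., that ranks are additive along intervals), which then gives the stated rearrangement.
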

For a poset $\mathcal{L}$ and $a,b\in \mathcal{L}$, we say that $b$ \textit{covers} $a$ if $a\prec b$ holds and there is no $c\in \mathcal{L}$ with $a\prec c\prec b$. The \textit{covering graph} of $\mathcal{L}$ is the undirected graph obtained by linking all pairs $a,b$ of $\mathcal{L}$ such that $a$ covers $b$, or $b$ covers $a$. Here we have the following relation between modular lattices and modular graphs:
\begin{lemma}[\cite{vel1993}]
\label{modular}
\textit{A lattice }$\mathcal{L}$ \textit{is modular if and only if the covering graph of} $\mathcal{L}$ \textit{is modular.}
\end{lemma}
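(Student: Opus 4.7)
The plan is to establish the two directions separately, relying in both cases on the graph-theoretic characterization of modular graphs in Lemma \ref{modularcondition} and the rank-theoretic characterization of modular lattices.

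For the forward direction ($\mathcal{L}$ modular $\Rightarrow$ covering graph $G$ modular), I would first prove the key distance formula $d_G(a,b) = r(a \vee b) - r(a \wedge b)$. The upper bound comes from concatenating the descending chain from $a$ to $a \wedge b$ and the ascending chain from $a \wedge b$ to $b$, whose total length is $r(a)+r(b)-2r(a\wedge b)=r(a\vee b)-r(a\wedge b)$ by the modular rank identity. The lower bound follows from a standard induction: any edge $xy\in G$ has $|r(x)-r(y)|=1$, and one checks $r(x\vee b)-r(x\wedge b)$ can decrease by at most $1$ per edge step. Bipartiteness (condition (i) of Lemma \ref{modularcondition}) is then immediate by parity of $r$. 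For the quadrilateral condition (ii), given $p,q$ and neighbors $p_1,p_2$ of $p$ with $d(p,q) = 1 + d(p_i,q)$, I would split into three subcases. If both $p_1,p_2\succ p$, set $p' := p_1 \vee p_2$; the rank identity with $p_1 \wedge p_2 = p$ gives $r(p')=r(p)+2$, so $p'$ covers both neighbors, and a rank computation yields $d(p',q)=d(p,q)-2$. The case $p_1,p_2\prec p$ is symmetric with $p' := p_1\wedge p_2$. In the mixed case $p_1\succ p\succ p_2$, the hypothesis forces $p_1 \vee q = p \vee q$ and $p_2 \wedge q = p \wedge q$ (via the rank equation applied to the identities $d(p_i,q)=d(p,q)-1$), and I would take $p' := p_2 \vee (p_1 \wedge q)$; using the modular law one verifies $p_2 \preceq p' \preceq p_1$, $r(p')=r(p)$, $p' \neq p$, and $d(p',q)=r(p_2\vee q)-r(p_1\wedge q)=d(p,q)-2$.

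For the backward direction ($G$ modular $\Rightarrow$ $\mathcal{L}$ modular), I would argue by contrapositive using Dedekind's $N_5$-criterion: if $\mathcal{L}$ is non-modular, it contains a sublattice isomorphic to $N_5$ with elements $\mathbf{0}_*, a, b, c, \mathbf{1}_*$, where $a \wedge c = \mathbf{0}_*$ and $b \vee c = \mathbf{1}_*$. Inside $\mathcal{L}$, choose a maximal chain $C_1$ from $\mathbf{0}_*$ to $\mathbf{1}_*$ passing through $a$ and $b$, and a maximal chain $C_2$ passing through $c$. Each chain is a path in $G$, so $C_1$ and $C_2$ together give a closed walk in $G$ whose length equals $|C_1|+|C_2|$. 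Using the modular rank formula which holds in $G$ under the bipartition/quadrilateral hypothesis (derived from modularity of $G$ by picking a root and defining $r(x):=d_G(\mathbf{0}_*,x)$, then verifying the modular equation on $\wedge,\vee$), I would argue that $|C_1|$ and $|C_2|$ are forced to have the same parity; however, a direct analysis of the $N_5$ substructure (in particular, the existence of the short $c$-chain forces an odd-length closed walk through $\mathbf{0}_*, a, b, \mathbf{1}_*, c, \mathbf{0}_*$ once the non-modular collapse is accounted for) produces an odd closed walk in $G$, contradicting bipartiteness.

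The main obstacle I anticipate is the backward direction. The subtlety is that the $N_5$-sublattice embedding need not be cover-preserving: what appears as a single cover $a \prec_{N_5} b$ may in $\mathcal{L}$ expand into a long chain, and so the ``pentagon" need not appear as an induced subgraph of $G$. One must therefore extract the parity obstruction from the global structure of maximal chains inside $[\mathbf{0}_*,\mathbf{1}_*]$ rather than from a local 5-cycle. I would handle this by first upgrading the modularity hypothesis on $G$ to a statement about $\mathcal{L}$ being graded with the modular rank identity (using bipartiteness to define a signed rank via distances from $\mathbf{0}$, and the quadrilateral condition to verify the rank equation on joins and meets), and then observing that this rank identity applied to $a,c$ directly contradicts $N_5$'s rank mismatch $r[\mathbf{0}_*,\mathbf{1}_*] = r[\mathbf{0}_*,a]+r[a,b]+r[b,\mathbf{1}_*] \neq r[\mathbf{0}_*,c]+r[c,\mathbf{1}_*]$.
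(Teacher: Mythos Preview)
The paper does not prove this lemma; it is quoted from van~de~Vel \cite{vel1993} without argument. So there is no ``paper's own proof'' to compare against, and I evaluate your proposal on its merits.

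Your forward direction is correct. The distance formula $d_G(a,b)=r(a\vee b)-r(a\wedge b)$ and the three-case verification of the quadrilateral condition in Lemma~\ref{modularcondition} are right; in particular the mixed case with $p':=p_2\vee(p_1\wedge q)$ works exactly as you describe (one checks $p'\wedge q=p_1\wedge q$ and $p'\vee q=p_2\vee q$, giving $d(p',q)=d(p,q)-2$, and $r(p')=r(p)$ with $p'\neq p$).

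The backward direction has a genuine gap. You correctly recognise that the naive odd-cycle argument fails because the $N_5$ sublattice need not be cover-preserving. Your fix is to show that modularity of $G$ forces $\mathcal{L}$ to be graded and to satisfy the rank identity $r(a)+r(b)=r(a\vee b)+r(a\wedge b)$; after that, applying the identity to the pairs $(a,c)$ and $(b,c)$ in $N_5$ (where $a\vee c=b\vee c$ and $a\wedge c=b\wedge c$) yields $r(a)=r(b)$, contradicting $a\prec b$. That final contradiction is fine, but you give no argument for the implication ``$G$ modular $\Rightarrow$ $\mathcal{L}$ graded with the rank identity,'' and this is precisely the content of the direction you are trying to prove. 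Bipartiteness of $G$ only shows that any two maximal chains between the same endpoints have the same length \emph{modulo~$2$}; it does not give Jordan--Dedekind. Likewise, it is not clear how the quadrilateral condition alone yields $r(a)+r(b)=r(a\vee b)+r(a\wedge b)$ without first knowing that $d_G(\mathbf{0},\cdot)$ coincides with the lattice rank and that $a\wedge b$ lies on a $G$-geodesic from $a$ to $b$---both of which are exactly what is at stake. (Your final displayed ``rank mismatch'' is also misstated: in a graded lattice the two chain-length sums through $a,b$ and through $c$ would in fact be \emph{equal}; the contradiction must come from the rank identity on $(a,c)$ versus $(b,c)$, not from comparing maximal-chain lengths.)

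To close the gap you need an independent argument linking the graph hypothesis to the lattice structure \emph{before} invoking $N_5$: for instance, show directly that modularity of $G$ implies both upper and lower semimodularity of $\mathcal{L}$ (whence modularity), or prove that for every $x$ the lattice interval $[\mathbf{0},x]$ coincides with the graph interval $I_{d_G}(\mathbf{0},x)$, which would give $d_G(\mathbf{0},x)=r(x)$ and then the rank identity via medians of $\mathbf{0},a,b$.
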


Let $\mathcal{L}$ be a lattice. A function $f:\mathcal{L}\rightarrow \mathbb{R}$ is called \textit{submodular} if $f(p)+f(q)\geq f(p\vee q)+f(p\wedge q)$ holds for every $p,q\in \mathcal{L}$. If $a,b\in \mathcal{L}$ are covered by $a\vee b$, then the pair $(a,b)$ is called a 2-\textit{covered pair}. We have the following characterization of submodular functions on modular lattices:
\begin{lemma}
\label{submodular}
\textit{Let} $\mathcal{L}$ \textit{be a modular lattice. A function} $f:\mathcal{L}\rightarrow \mathbb{R}$ \textit{is submodular if and only if} $f(a)+f(b)\geq f(a\vee b)+f(a\wedge b)$ \textit{holds for every} 2-\textit{covered pair} $(a,b)$.
\end{lemma}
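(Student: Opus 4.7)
The ``only if'' direction is immediate, since every 2-covered pair is, in particular, a pair in $\mathcal{L}\times\mathcal{L}$. For the converse, the plan is to induct on the quantity
\[ k(a,b) := r(a\vee b) - r(a\wedge b), \]
which, by the rank identity $r(a)+r(b) = r(a\wedge b) + r(a\vee b)$ for modular lattices stated just above, equals $r[a\wedge b,a] + r[a\wedge b,b]$.

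For the base cases $k(a,b)\le 2$, either one of the two summands is zero---in which case $a$ and $b$ are comparable and the desired inequality trivially reads $f(a)+f(b)\ge f(a)+f(b)$---or both summands equal $1$. In the latter case $a,b$ both cover $a\wedge b$ and are both covered by $a\vee b$, so $(a,b)$ is a 2-covered pair, whose inequality is the hypothesis.

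For the inductive step, assume $k(a,b)\ge 3$. Since the two summands sum to at least $3$, at least one is at least $2$; without loss of generality take $r[a\wedge b,a]\ge 2$. Pick $a'$ with $a\wedge b\preceq a'\prec a$ and $a$ covering $a'$. I expect the rank identity to yield
\[ a'\wedge b = a\wedge b, \quad r(a'\vee b) = r(a\vee b) - 1, \quad a\wedge (a'\vee b) = a', \quad a\vee (a'\vee b) = a\vee b, \]
each reducing to a one-line rank computation (combined with the observation that comparable elements of equal rank in a modular lattice must coincide). These identities give $k(a',b) = k(a,b)-1$ and $k(a, a'\vee b) = r(a\vee b)-r(a') = r[a\wedge b,b]+1 < k(a,b)$, so the inductive hypothesis applies to both $(a',b)$ and $(a, a'\vee b)$. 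Summing the two inequalities causes $f(a')$ and $f(a'\vee b)$ to cancel, leaving $f(a)+f(b)\ge f(a\vee b)+f(a\wedge b)$.

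The main subtle point is the asymmetric choice to shrink $a$ rather than $b$: it is precisely this that makes $k(a, a'\vee b)$ strictly smaller than $k(a,b)$, and is why the WLOG assumption $r[a\wedge b,a]\ge 2$ is needed. Everything else is routine bookkeeping with ranks on a modular lattice.
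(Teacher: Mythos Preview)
Your proof is correct. The rank identities you list are all straightforward consequences of modularity exactly as you indicate, and the two applications of the inductive hypothesis do combine to give the desired inequality; the WLOG assumption $r[a\wedge b,a]\ge 2$ is indeed what makes $k(a,a'\vee b)=r[a\wedge b,b]+1<k(a,b)$ go through.

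Your route differs from the paper's, though. The paper does not induct: it fixes maximal chains $(p\wedge q=p_0,\ldots,p_k=p)$ and $(p\wedge q=q_0,\ldots,q_l=q)$, sets $a_{i,j}:=p_i\vee q_j$, and shows directly (via modularity) that each $(a_{i+1,j},a_{i,j+1})$ is a 2-covered pair with join $a_{i+1,j+1}$ and meet $a_{i,j}$. The submodularity defect $f(p)+f(q)-f(p\vee q)-f(p\wedge q)$ then telescopes as the sum $\sum_{i,j}\bigl(f(a_{i+1,j})+f(a_{i,j+1})-f(a_{i+1,j+1})-f(a_{i,j})\bigr)$ over the whole $k\times l$ grid. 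So the paper exhibits in one shot exactly which 2-covered inequalities are being summed, whereas your induction peels off a single cover at each step and recurses on the asymmetric pair $(a,a'\vee b)$. The paper's argument is a bit more transparent about the underlying combinatorics; yours is shorter to write and avoids having to verify the grid structure globally.
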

\begin{proof}
The only if part is obvious. We prove the if part. Take $p,q\in \mathcal{L}$ and maximal chains $(p\wedge q=p_0,p_1,\ldots,p_k=p)$ and $(p\wedge q=q_0,q_1,\ldots,q_l=q)$. First we show that $p_{i+1}\vee q_j$ covers $p_i\vee q_j$. Note that $p_{i+1}$ covers $p_i$. Since $\mathcal{L}$ is modular, we have $p_{i+1}\wedge (p_i\vee q_{j})=p_i\vee (q_j\wedge p_{i+1})=p_i$ and$\ p_{i+1}\vee (p_i\vee q_{j})=p_{i+1}\vee q_j$. Hence, we can conclude that $p_{i+1}\vee q_j$ covers $p_i\vee q_j$ by Lemma 2.4. Similarly, $p_i\vee q_{j+1}$ covers $p_i\vee q_j$. Also, by modularity we have $(p_{i+1}\vee q_j)\vee (p_i\vee q_{j+1})=p_{i+1}\vee q_{j+1}$ and $(p_{i+1}\vee q_j)\wedge (p_i\vee q_{j+1})=p_i\vee (q_{j+1}\wedge (p_{i+1}\vee q_j))=p_i\vee q_j\vee (p_{i+1}\wedge q_{j+1})=p_{i}\vee q_{j}$. Then we conclude that $(p_{i+1}\vee q_j,p_i\vee q_{j+1})$ is a 2-covered pair. Let $a_{i,j}:=p_i\vee q_j$. Then we have $f(p)+f(q)-f(p\vee q)-f(p\wedge q)=\sum_{i,j}(f(a_{i+1,j})+f(a_{i,j+1})-f(a_{i+1,j+1})-f(a_{i,j}))\geq 0$. Thus, we conclude that $f$ is submodular.
\end{proof}

\section{Directed metric spaces}
\label{sec:directed metric spaces}
\subsection{Modular directed metrics}
We first extend the notions of modularity, medians, and underlying graphs to directed metric spaces. Let $\mu$ be a directed metric on $T$. We say that $\mu$ is \textit{modular} if and only if for every $s_0,s_1,s_2\in T$, there exists an element $m\in T$, called a \textit{median}, such that $\mu(s_i,s_j)=\mu(s_i,m)+\mu(m,s_j)$ for every $0\leq i,j\leq 2\ (i\neq j)$. See Figure 3 (a) for an example of modular directed metrics. We define the \textit{underlying graph} of $\mu$ as the undirected graph $H_\mu=(T,U)$, where
\begin{align}
    U:=\{\{x,y\} \mid x,y\in T\ (x\neq y),\ \forall z\in T\setminus \{x,y\},\ \mu(x,y)<\mu(x,z)+\mu(z,y)\notag\\ \mathrm{or}\ \forall z\in T\setminus \{x,y\},\ \mu(y,x)<\mu(y,z)+\mu(z,x)\}.
\end{align}
\begin{figure}[tbp]
\begin{center}
\begin{overpic}[width=14cm]{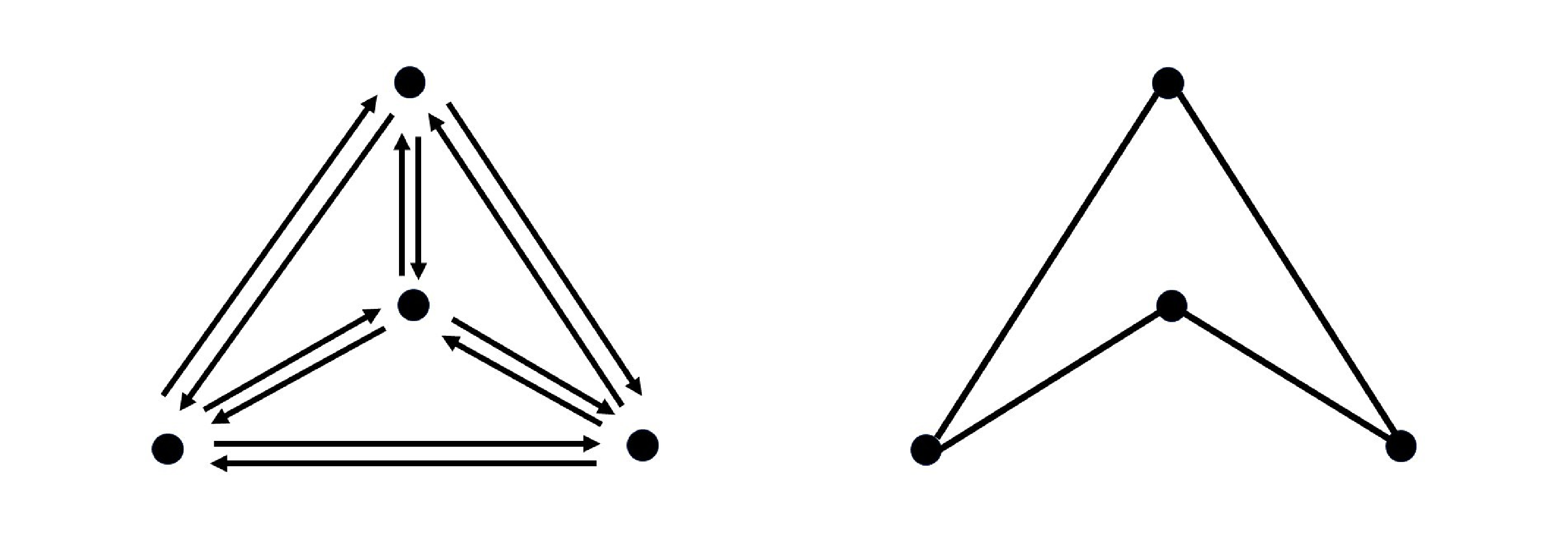}
\put(7.2,5){$p$}
\put(25.5,11.5){$q$}
\put(43,5){$r$}
\put(25.5,31){$s$}
\put(55.5,5){$p$}
\put(74,11.5){$q$}
\put(91,5){$r$}
\put(74,31){$s$}
\put(25.2,1.7){2}
\put(25.2,6.8){3}
\put(19.5,8){1}
\put(17.5,12){2}
\put(31.4,8){1}
\put(33,11.9){1}
\put(23.8,20){1}
\put(27.3,20){4}
\put(16,20){0}
\put(19,16){2}
\put(31.2,16.2){0}
\put(34.5,20){3}
\end{overpic}
\caption{(a)\ a modular directed metric\ \ \ \ \ \ \ \ \ \ \ \ (b)\ the underlying graph of (a)}
\end{center}
\end{figure}For a directed metric $\mu$ on $T$ and $v_0,v_1,\ldots,v_n\in T$, we say that a sequence $(v_0,v_1,\ldots,v_n)$ is $\mu$-\textit{shortest} if $\mu(v_0,v_n)=\sum_{i=0}^{n-1}\mu(v_i,v_{i+1})$. Bandelt \cite{bandelt1985} showed that for a modular (undirected) metric $\mu$, a $\mu$-shortest sequence is also $d_{H_\mu}$-shortest. We have the following directed version of this property:
\begin{lemma}
\label{metric->graph-shortest}
\textit{Let} $\mu$\textit{ be a modular directed metric on} $T$\textit{, and let $v_0,v_1,\ldots,v_n\in T$}.
\begin{enumerate}[label=(\arabic*),ref=\arabic*]
    \item \textit{If a sequence} $(v_0,v_1,\ldots,v_n)$ \textit{is} $\mu$-\textit{shortest}, \textit{then the inverted sequence} $(v_n,v_{n-1},\ldots,v_0)$ \textit{is also} $\mu$-\textit{shortest}.
    \label{lemma14:case1}
    \item \textit{If a sequence} $(v_0,v_1,\ldots,v_n)$ \textit{is} $\mu$-\textit{shortest, then this sequence is also} $d_{H_\mu}$-\textit{shortest}.
    \label{lemma14:case2}
\end{enumerate}
\end{lemma}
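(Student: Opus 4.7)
I would induct on $n$, reducing to the base case $n=2$. For $n=2$: given that $(v_0,v_1,v_2)$ is $\mu$-shortest, I pick a median $m$ of $\{v_0,v_1,v_2\}$ guaranteed by modularity, and substitute the three median identities $\mu(v_i,v_j)=\mu(v_i,m)+\mu(m,v_j)$ into the shortestness relation $\mu(v_0,v_2)=\mu(v_0,v_1)+\mu(v_1,v_2)$. The substitution forces $\mu(v_1,m)+\mu(m,v_1)=0$, whence the positivity axiom $\mu(x,y)+\mu(y,x)>0$ for $x\neq y$ yields $m=v_1$. So $v_1$ is itself a median of the triple, and the remaining median identity $\mu(v_2,v_0)=\mu(v_2,v_1)+\mu(v_1,v_0)$ is precisely the reversal for $n=2$. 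The inductive step $n\geq 3$ uses the standard fact that every subsequence of a $\mu$-shortest sequence is $\mu$-shortest (by triangle inequality applied at each insertion point, which forces equality throughout). Applying the base case to $(v_0,v_1,v_n)$ and the induction hypothesis to $(v_1,\dots,v_n)$ then yields $\mu(v_n,v_0)=\mu(v_n,v_1)+\mu(v_1,v_0)=\sum_{i=0}^{n-1}\mu(v_{i+1},v_i)$ immediately.

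\textbf{Plan for (2).} My approach is to refine the $\mu$-shortest sequence into a walk in $H_\mu$ and then argue the walk is a geodesic. The refinement step is: if $(x,y)$ is $\mu$-shortest with $x\neq y$ and $xy\notin U$, then the definition of $U$ combined with (1) and the median-forcing argument above produce $z\in T\setminus\{x,y\}$ such that $(x,z,y)$ is $\mu$-shortest (splitting the pair in both directions). Iterating on each sub-pair --- which must terminate as $T$ is finite --- refines $(x,y)$ into a $\mu$-shortest sequence $(x=u_0,u_1,\dots,u_k=y)$ with each $u_iu_{i+1}\in U$, i.e.\ a walk in $H_\mu$. Concatenating these refinements over the consecutive pairs $(v_i,v_{i+1})$ of the original sequence produces a single $\mu$-shortest walk of total length $K=\sum_i k_i$ in $H_\mu$.

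\textbf{Finishing step and main obstacle.} It remains to prove $K=d_{H_\mu}(v_0,v_n)$ and $k_i=d_{H_\mu}(v_i,v_{i+1})$ for each $i$. A useful observation is that the concatenated walk is chord-free in $H_\mu$: any chord $u_pu_q\in U$ with $q\geq p+2$ would be split by $u_{p+1}$, contradicting the defining condition of $U$ together with the reverse identity supplied by (1). The main obstacle I expect is showing that a chord-free $\mu$-shortest walk in the modular graph $H_\mu$ is necessarily a geodesic: chord-freeness alone is insufficient in a generic modular graph (e.g.\ the induced path $(000,100,101,001)$ in $Q_3$ is chord-free but not a $d$-geodesic, although it fails to be $\mu$-shortest for the Hamming metric), so the argument must genuinely use the $\mu$-shortest hypothesis. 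My plan here is first to verify that $H_\mu$ is a modular graph --- which should follow from modularity of $\mu$ via Lemma~\ref{modularcondition} by transporting medians from $\mu$ to $d_{H_\mu}$ --- and then to exploit gatedness of intervals in modular graphs (Lemma~\ref{gated}) together with the neighbor-matching property of Lemma~\ref{modularcondition}(ii) to rule out any strictly shorter $H_\mu$-path and simultaneously certify that the walk respects concatenation at the $v_i$'s.
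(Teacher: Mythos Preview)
Your treatment of part~(1) is correct and essentially identical to the paper's: both take a $\mu$-median $m$ of $v_0,v_1,v_n$, force $m=v_1$ via $\mu(v_1,m)+\mu(m,v_1)=0$, read off $\mu(v_n,v_0)=\mu(v_n,v_1)+\mu(v_1,v_0)$, and recurse on $(v_1,\dots,v_n)$.

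For part~(2) your refinement step and chord-freeness observation are fine, but the finishing step contains a genuine circularity. You propose to ``first verify that $H_\mu$ is a modular graph \dots\ by transporting medians from $\mu$ to $d_{H_\mu}$.'' Transporting a median means: given a $\mu$-median $m$ of $x,y,z$, concluding that $(x,m,y)$, $(y,m,z)$, $(x,m,z)$ are $d_{H_\mu}$-shortest. That is precisely part~(2) applied to length-two $\mu$-shortest sequences, so you would be invoking the statement you are trying to prove. In the paper this dependency is explicit: Lemma~\ref{lemma:mu is modular -> H is modular} (modularity of $H_\mu$) is stated as a \emph{corollary} of the present lemma, not as an input to it. Consequently the subsequent appeals to Lemma~\ref{gated} and Lemma~\ref{modularcondition}(ii), which presuppose modularity of $H_\mu$, are not available at this point.

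The paper avoids the circularity by never assuming anything about $H_\mu$ beyond its definition. Its argument for~(2) is an induction on $d_{H_\mu}(v_0,v_n)$: pick an $H_\mu$-geodesic $(u_0,\dots,u_l)$ from $v_0$ to $v_n$, take a $\mu$-median $m$ of the triple $u_0,u_{l-1},v_{n-1}$ (using modularity of $\mu$, not of $H_\mu$), and use the inductive hypothesis on the strictly shorter pairs $(u_0,m)$, $(m,v_n)$, etc.\ to splice the $\mu$-shortest sequence into the $H_\mu$-geodesic. A symmetric median step with $u_l,u_{l-1},v_1$ handles the degenerate case $m=u_0$. If you want to rescue your route, you would need to replace the appeal to modularity of $H_\mu$ with a direct argument of this kind, working only with $\mu$-medians.
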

\begin{proof}
In this proof, we denote $I:=I_\mu$. We first show (\ref{lemma14:case1}) by induction on $n$. Suppose that $(v_0,v_1,\ldots,v_n)$ is $\mu$-shortest. If $n=1$ then $(v_1,v_0)$ is $\mu$-shortest and (\ref{lemma14:case1}) holds. Suppose that $n\geq 2$. Since $\mu$ is modular, there exists a median $m\in T$ of $v_0,v_1,v_n$. There exists no element $z\in I(v_0,v_1)\cap I(v_1,v_n)\setminus \{v_1\}$, since otherwise a sequence $(v_0,z,v_1,z,v_n)$ is $\mu$-shortest, which is impossible by $\mu(z,v_1)+\mu(v_1,z)>0$. Hence, we have $m=v_1$. Then a sequence $(v_n,v_1,v_0)$ is $\mu$-shortest. Therefore, it suffices to show that a sequence $(v_n,v_{n-1},\ldots,v_1)$ is $\mu$-shortest, which is implied by induction. Hence, we can conclude that $(v_n,v_{n-1},\ldots,v_0)$ is $\mu$-shortest.

Next we show (\ref{lemma14:case2}). We denote $d:=d_{H_\mu}$ for simplicity. Suppose that $(v_0,v_1,\ldots,v_n)$ is $\mu$-shortest. Without loss of generality, we may assume that $v_i\neq v_j$ for any $i\neq j$. If $n=1$, then $(v_0,v_1)$ is obviously $d$-shortest. Suppose that $n\geq 2$. If $d(v_0,v_n)=1$, then $H_\mu$ has the edge $\{v_0,v_n\}$. However, this contradicts $\mu(v_0,v_n)=\mu(v_0,v_1)+\mu(v_1,v_n)$ and the definition of $H_\mu$ because of (\ref{lemma14:case1}). Hence, it suffices to consider the case when $d(v_0,v_n)\geq 2$. In the case of $n=2$ with $I(v_0,v_1)=\{v_0,v_1\}$ and $I(v_1,v_2)=\{v_1,v_2\}$, the underlying graph $H_\mu$ has the edges $\{v_0,v_1\}$ and $\{v_1,v_2\}$. Then the edge $v_0v_2$ is not contained in $H_\mu$, and hence $(v_0,v_1,v_2)$ is $d$-shortest. In the case of $n=2$ with $I(v_0,v_1)\neq \{v_0,v_1\}$, there exists an element $z\in I(v_0,v_1)\setminus \{v_0,v_1\}$. Then the case of $(v_0,v_1,v_2)$ is reduced to the case of $(v_0,z,v_1,v_2)$. Similarly, the case of $n=2$ with $I(v_1,v_2)\neq \{v_1,v_2\}$ is reduced to the case of $n=3$. 

Thus, it suffices to consider the case of $n\geq 3$. We show by induction on $d(v_0,v_n)$ that $(v_0,v_1,\ldots,v_n)$ is $d$-shortest. Let
$(u_0,u_1,\ldots,u_l)$ be a shortest path in $H_\mu$ from $v_0$ to $v_n$ ($u_0=v_0,u_l=v_n$). Since $\mu$ is modular, there exists a median $m\in T$ of $u_0,u_{l-1},v_{n-1}$. Then we have $m\in I(u_0,u_{l-1})$. We may assume that $(u_0,m,u_{l-1})$ is $d$-shortest by induction hypothesis. Then $(u_0,m,u_l)$ is also $d$-shortest. Since $(u_0,m,v_{n-1})$ is $\mu$-shortest, $(m,v_{n-1},v_n)$ is also $\mu$-shortest. Suppose that $m\neq u_0$. In this case, $(m,v_{n-1},v_n)$ is $d$-shortest by induction hypothesis. This implies that $(u_0,m,v_{n-1},v_n)$ is $d$-shortest, because $(u_0,m,u_l)$ is $d$-shortest. We may also assume that $(v_0,v_1,\ldots,v_{n-1})$ is $d$-shortest by induction hypothesis. Hence, we conclude that  $(v_0,v_1,\ldots,v_{n})$ is $d$-shortest. 

Consider the case when $m=u_0$. In this case, $(u_{l-1},u_0,v_{n-1})$ is $\mu$-shortest. Then we have $\mu(u_{l-1},v_{n-1})=\mu(u_{l-1},v_1)+\mu(v_1,v_{n-1})$. 
We next apply the similar argument to $u_l,u_{l-1},v_1$, which we apply to $u_0,u_{l-1},v_{n-1}$ above. Since $\mu$ is modular, there exists a median $m'\in T$ of $u_l,u_{l-1},v_1$. Then we may assume that $(u_{l-1},m',u_l)$ is $d$-shortest by induction hypothesis. Hence, $(u_0,m',u_l)$ is also $d$-shortest. Since $(v_1,m',u_l)$ is $\mu$-shortest, $(v_0,v_1,m')$ is also $\mu$-shortest. Suppose that $m'\neq u_l$. In this case, $(v_0,v_1,m')$ is $d$-shortest by induction hypothesis, which implies that $(v_0,v_1,m',u_l)$ is $d$-shortest. We may also assume that $(v_1,v_2,\ldots,v_n)$ is $d$-shortest by induction hypothesis. Hence, we conclude that  $(v_0,v_1,\ldots,v_{n})$ is $d$-shortest. Thus, it suffices to consider the case when $m'=u_l$. In this case, since $(u_{l-1},u_l,v_1)$ is $\mu$-shortest, we have $\mu(u_{l-1},v_1)=\mu(u_{l-1},v_{n-1})+\mu(v_{n-1},v_1)$.  
Thus, we have $\mu(v_1,v_{n-1})+\mu(v_{n-1},v_1)=0$ (recall that $\mu(u_{l-1},v_{n-1})=\mu(u_{l-1},v_1)+\mu(v_1,v_{n-1})$). This is a contradiction.
\end{proof}

For a modular directed metric $\mu$ on $T$, let $m$ be a median of $x,y,z\in T$ in $\mu$. Then, by Lemma~\ref{metric->graph-shortest} $m$ is also a median of $x,y,z$ in $H_\mu$. Hence, we have the following lemma:
\begin{lemma}
\label{lemma:mu is modular -> H is modular}
\textit{If a directed metric} $\mu$ \textit{is modular}, \textit{then} $H_\mu$ \textit{is also modular}.
\end{lemma}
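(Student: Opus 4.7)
The plan is to promote, for any triple $x,y,z \in T$, the median supplied by modularity of $\mu$ to a median in the graph metric $d_{H_\mu}$, since the latter is the defining property of a modular graph. Concretely, let $m \in T$ satisfy $\mu(s_i,s_j) = \mu(s_i,m)+\mu(m,s_j)$ for every $i \neq j$, where $s_0=x,\ s_1=y,\ s_2=z$. Then each of the three sequences $(x,m,y)$, $(y,m,z)$, $(x,m,z)$ is $\mu$-shortest, so by Lemma \ref{metric->graph-shortest}(\ref{lemma14:case2}) each is also $d_{H_\mu}$-shortest, which unpacks to $d_{H_\mu}(s_i,s_j) = d_{H_\mu}(s_i,m)+d_{H_\mu}(m,s_j)$ for every $i \neq j$. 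Hence $m$ is a median of $x,y,z$ in $(T,d_{H_\mu})$, so $H_\mu$ is modular as a graph.

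One ancillary point to dispose of first is that $H_\mu$ is connected, so that $d_{H_\mu}$ is indeed a finite metric on $T$: for any $u \neq v$ with $uv \notin U$, the definition of $U$ furnishes some $z \in T \setminus \{u,v\}$ with $\mu(u,v) = \mu(u,z)+\mu(z,v)$ (or the symmetric inequality, which by Lemma \ref{metric->graph-shortest}(\ref{lemma14:case1}) is equally usable); recursing on the strictly shorter pairs $(u,z)$ and $(z,v)$ and invoking finiteness of $T$ produces an $H_\mu$-path from $u$ to $v$. Aside from this bookkeeping, Lemma \ref{metric->graph-shortest} has already absorbed all the substantive work about how $\mu$-shortest sequences behave in the underlying graph, so no genuine obstacle remains; the only subtlety worth double-checking is that the directed definition of $U$ (with an ``or'' between the two strict inequalities) is compatible with the above, which is precisely what Lemma \ref{metric->graph-shortest} is set up to handle.
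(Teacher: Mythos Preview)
Your proof is correct and takes essentially the same approach as the paper: both arguments lift a $\mu$-median $m$ of $x,y,z$ to a $d_{H_\mu}$-median by applying Lemma~\ref{metric->graph-shortest}(\ref{lemma14:case2}) to each of the three sequences $(s_i,m,s_j)$. The paper's proof is just the one-sentence remark immediately preceding the lemma; you additionally spell out the connectivity of $H_\mu$, which the paper leaves implicit.
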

\subsection{Directed orbits and directed orbit invariance}
Let $G=(V,E)$ be an undirected graph. Let $\overleftrightarrow{E}:=\{(u,v),\ (v,u)\mid \{u,v\}\in E\}\subseteq V\times V$, and $\overleftrightarrow{G}:=(V,\overleftrightarrow{E})$. An element of $\overleftrightarrow{E}$ is called an \textit{oriented edge} of $E$. For a path $P$ from $s$ to $t$ in $G$, we orient each edge of $P$ along the direction of $P$, and we denote by $\overrightarrow{P}$ the corresponding path in $\overleftrightarrow{G}$. Let $\overrightarrow{P}$ and $\overrightarrow{W}$ be paths in $\overleftrightarrow{G}$ such that the end point of $\overrightarrow{P}$ and the start point of $\overrightarrow{W}$ are identified. Then we denote by $\overrightarrow{P}\cup \overrightarrow{W}$ the path obtained by concatenating $\overrightarrow{P}$ and $\overrightarrow{W}$ in this order. In particular, if $\overrightarrow{W}$ consists of one oriented edge $(p,q)$, then we simply denote $\overrightarrow{W}:=(p,q)$ and $\overrightarrow{P}\cup \overrightarrow{W}:=\overrightarrow{P}\cup (p,q)$. For $\overrightarrow{e},\overrightarrow{e'}\in \overleftrightarrow{E}$, we say that $\overrightarrow{e}$ and $\overrightarrow{e'}$ are \textit{projective} if there exists a sequence $(\overrightarrow{e}= \overrightarrow{e_0}, \overrightarrow{e_1},...,\overrightarrow{e_m}=\overrightarrow{e'})\ (\overrightarrow{e_i}=(p_i,q_i)\in \overleftrightarrow{E}\ \mathrm{for\ each\ }i)$ such that
$(p_i,q_i,q_{i+1},p_{i+1},p_i)$ is a 4-cycle in $G$ for each $i$. An equivalence class of the projectivity relation is called a \textit{directed orbit}. Then we have the following lemma about the number of oriented edges of each directed orbit included in a shortest path. This is a sharpening of the result for undirected graphs due to Bandelt \cite{bandelt1985}, and is similarly shown by the proof of the undirected version.
\begin{lemma}[\cite{bandelt1985}]
\label{lemma:orbiteq}
\textit{Let} $G=(V,E)$ \textit{be a modular graph, and let }$\overrightarrow{Q}$ \textit{be a directed orbit. For} $x,y\in V$, \textit{let} $P$ \textit{be a path from} $x$ \textit{to} $y$, \textit{and let} $P^*$ \textit{be a shortest path from} $x$ \textit{to} $y$. \textit{Then we have}
\begin{align}
\label{orbiteq}
    |\overrightarrow{P^*}\cap \overrightarrow{Q}|\leq |\overrightarrow{P}\cap \overrightarrow{Q}|.
\end{align}
\end{lemma}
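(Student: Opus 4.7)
The plan is to mimic Bandelt's proof of the undirected version and verify that its two elementary reduction moves on paths---(A) a four-cycle flip, and (B) a back-and-forth cancellation---behave correctly when we count \emph{oriented} rather than unoriented edges. In a modular graph these two moves suffice to transform any path $P$ from $x$ to $y$ into a prescribed shortest path $P^*$: we first shorten $P$ by back-and-forth cancellations (possibly after rerouting via 4-cycle flips) until $\ell(P)=d(x,y)$, and then convert the resulting shortest path into $P^*$ by additional 4-cycle flips, using Lemma~\ref{modularcondition}(ii) to guarantee the availability of the required 4-cycles. Inequality~(\ref{orbiteq}) will follow once we check that neither move ever increases $|\overrightarrow{P}\cap \overrightarrow{Q}|$.

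For move (A), a flip that replaces a subpath $(p,q,r)$ of $P$ by $(p,q',r)$ uses the 4-cycle $(p,q,r,q',p)$. Setting $p_i=p$, $q_i=q$, $q_{i+1}=r$, $p_{i+1}=q'$ in the definition of projectivity shows that $(p,q)$ and $(q',r)$ lie in the same directed orbit; setting $p_i=q$, $q_i=r$, $q_{i+1}=q'$, $p_{i+1}=p$ shows that $(q,r)$ and $(p,q')$ also lie in the same directed orbit. Therefore the multiset of directed orbits represented by the oriented edges of $\overrightarrow{P}$ is preserved by the flip, and $|\overrightarrow{P}\cap \overrightarrow{Q}|$ is unchanged for every $\overrightarrow{Q}$. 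For move (B), cancelling a back-and-forth $(u,v,u)$ removes exactly one copy of the oriented edge $(u,v)$ and one copy of $(v,u)$, so $|\overrightarrow{P}\cap \overrightarrow{Q}|$ drops by at most one and certainly never increases.

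The main obstacle is showing that every non-shortest path in a modular graph admits a length-reducing reduction of the above form. This is where we truly use modularity: given a non-shortest $P=(v_0,\ldots,v_n)$, we locate the first index $k$ with $d(x,v_k)<k$, note that bipartiteness (Lemma~\ref{modularcondition}(i)) forces $d(x,v_k)=k-2$, and then iterate Lemma~\ref{modularcondition}(ii) to cascade 4-cycle flips along $P$ until a back-and-forth appears, which we cancel. Combining this with the two invariants and inducting on $\ell(P)$ yields $|\overrightarrow{P^*}\cap \overrightarrow{Q}|\leq |\overrightarrow{P}\cap \overrightarrow{Q}|$, completing the argument.
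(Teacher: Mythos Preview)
Your proposal is correct and matches the paper's approach exactly: the paper gives no independent proof but simply cites Bandelt and remarks that the directed statement ``is similarly shown by the proof of the undirected version,'' which is precisely the 4-cycle-flip and back-and-forth reduction you carry out. One harmless imprecision: in move (B) the count $|\overrightarrow{P}\cap\overrightarrow{Q}|$ can in fact drop by two rather than one (namely when $(u,v)$ and $(v,u)$ lie in the same directed orbit, as can happen in a non-orientable modular graph), but your conclusion that it never increases is all that is needed.
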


Let $G=(V,E)$ be an undirected graph. If a function $h:\overleftrightarrow{E}\rightarrow \mathbb{R}_{+}$ satisfies $h(\overrightarrow{e})=h(\overrightarrow{e'})$ for every $\overrightarrow{e},\overrightarrow{e'}\in \overleftrightarrow{E}$ belonging to the same directed orbit, then we say that $h$ is \textit{directed orbit-invariant}. Let $\mu$ be a directed metric on $T$ with the underlying graph $H_\mu=(T,U)$. We say that $\mu$ is directed orbit-invariant if $\mu(u_1,u_2)=\mu(u_1',u_2')$ holds for every $\overrightarrow{u}=(u_1,u_2),\overrightarrow{u'}=(u_1',u_2')\in \overleftrightarrow{U}$ belonging to the same directed orbit in $H_\mu$. A 4-cycle $(p,q,r,s,p)$ in $H_\mu$ is called a \textit{directed orbit-varying modular cycle} if $\mu(p,q)-\mu(s,r)=\mu(r,s)-\mu(q,p)=\mu(p,s)-\mu(q,r)=\mu(r,q)-\mu(s,p)\neq0$. The cycle $(p,q,r,s,p)$ in Figure 3 (b) is an example of a directed orbit-varying modular cycle. 

Bandelt \cite{bandelt1985} showed that a metric $\mu$ is orbit-invariant if $\mu$ is modular. A directed metric $\mu$ is not necessarily directed orbit-invariant even if $\mu$ is modular. For example, if $H_\mu$ is a directed orbit-varying modular cycle, then $\mu$ is modular but not directed orbit-invariant. The name ``directed orbit-varying modular cycle'' is motivated by this fact. We now have the following sufficient condition of a directed metric to be directed orbit-invariant. 
\begin{lemma}
\label{lem:orbitvariant}
\textit{Let} $\mu$ \textit{be a modular directed metric}. \textit{Suppose that} $H_\mu$ \textit{has no directed orbit-varying modular cycle. Then}, $\mu$ \textit{is directed orbit-invariant}.
\end{lemma}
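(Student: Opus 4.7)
The approach is to verify that $\mu$ agrees on any two oriented edges of $H_\mu$ that are projective via a single $4$-cycle; since the directed orbit is the transitive closure of this relation, this yields the lemma. Fix a $4$-cycle $C=(p,q,r,s,p)$ in $H_\mu$. By Lemma~\ref{lemma:mu is modular -> H is modular}, $H_\mu$ is modular and hence bipartite, so $d_{H_\mu}(p,r)=d_{H_\mu}(q,s)=2$.

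First I would show that each length-two walk along $C$ is $\mu$-shortest, e.g.\ $\mu(p,r)=\mu(p,q)+\mu(q,r)$. Take a median $m$ of $\{p,q,r\}$ in $\mu$, which exists by modularity. Then $m\in I_\mu(p,q)\cap I_\mu(q,p)$. The definition of $U$ together with $pq\in U$ forces $I_\mu(p,q)=\{p,q\}$ or $I_\mu(q,p)=\{p,q\}$, so $m\in\{p,q\}$. The case $m=p$ is excluded by applying the same principle to $qr\in U$: it would put $p\neq q,r$ into $I_\mu(q,r)\cap I_\mu(r,q)$, contradicting the edge condition on $qr$. Hence $m=q$ and $(p,q,r)$ is $\mu$-shortest. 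The same argument applied to the triples $\{p,s,r\}$, $\{q,p,s\}$, $\{q,r,s\}$, followed by Lemma~\ref{metric->graph-shortest}(\ref{lemma14:case1}) to invert, gives $\mu$-shortness of all eight length-two walks around $C$.

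These identities combine to
\[
\mu(p,q)+\mu(q,r)=\mu(p,s)+\mu(s,r),\qquad \mu(r,q)+\mu(q,p)=\mu(r,s)+\mu(s,p),
\]
together with the analogous pair coming from $\mu(q,s)$ and $\mu(s,q)$. A short rearrangement yields
\[
\mu(p,q)-\mu(s,r)=\mu(p,s)-\mu(q,r)=\mu(r,s)-\mu(q,p)=\mu(r,q)-\mu(s,p).
\]
Since $C$ is not a directed orbit-varying modular cycle by hypothesis, this common value must be $0$, giving $\mu(p,q)=\mu(s,r)$, $\mu(p,s)=\mu(q,r)$, $\mu(r,s)=\mu(q,p)$, and $\mu(r,q)=\mu(s,p)$. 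Reading off the four projectivity relations that the cycle $C$ imposes on oriented edges (one for each cyclic starting point of $C$), these are exactly the required invariances on $C$, so $\mu$ is constant on every directed orbit.

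The main obstacle is the first step: membership in $H_\mu$ only requires tightness in \emph{one} direction, so the median of $\{p,q,r\}$ is not obviously $q$; it is the simultaneous membership $m\in I_\mu(p,q)\cap I_\mu(q,p)$ that collapses $m$ onto the cycle. Everything after that is algebraic bookkeeping with the eight shortness identities.
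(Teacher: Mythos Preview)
Your proof is correct and follows the same route as the paper: reduce directed orbit-invariance to checking a single $4$-cycle, establish that every length-two walk along the cycle is $\mu$-shortest, equate the two $\mu$-values between opposite vertices, rearrange into the four equal differences, and invoke the hypothesis to force that common value to $0$. The paper's version simply asserts ``sequences $(p,q,r)$ and $(p,s,r)$ are $\mu$-shortest'' without comment (it follows quickly from the median argument combined with Lemma~\ref{metric->graph-shortest}(\ref{lemma14:case2}), since a median $m$ of $p,q,r$ gives $d$-shortest $(p,m,q)$ with $d(p,q)=1$, forcing $m\in\{p,q\}$); you instead justify this step directly from the edge definition of $H_\mu$, which is equally valid and arguably more self-contained.
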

\begin{proof}
It suffices to show that $\mu(p,q)=\mu(s,r)$ for any 4-cycle $(p,q,r,s,p)$ in $H_\mu$. Suppose to the contrary that a 4-cycle $(p,q,r,s,p)$ in $H_\mu$ satisfies $\mu(p,q)\neq \mu(s,r)$. Let $k:=\mu(p,q)-\mu(s,r)\neq 0$. Since $\mu$ is modular, the triple $p,q,r$ has a median $m$. The underlying graph $H_\mu$ has edges $\{p,q\}$ and $\{q,r\}$, which implies $m=q$. Hence, the sequence $(p,q,r)$ is $\mu$-shortest. Similarly, $(p,s,r)$ is $\mu$-shortest. Then we have $\mu(p,q)+\mu(q,r)=\mu(p,s)+\mu(s,r)=\mu(p,r)$. Hence, we have $\mu(p,s)-\mu(q,r)=\mu(p,q)-\mu(s,r)=k$. Similarly, sequences $(s,p,q)$ and $(s,r,q)$ are $\mu$-shortest, then we have $\mu(s,p)+\mu(p,q)=\mu(s,r)+\mu(r,q)$. Hence, we have $\mu(r,q)-\mu(s,p)=\mu(p,q)-\mu(s,r)=k$. Similarly, sequences $(q,p,s)$ and $(q,r,s)$ are $\mu$-shortest, then we have $\mu(q,p)+\mu(p,s)=\mu(q,r)+\mu(r,s)$. Hence, we have $\mu(r,s)-\mu(q,p)=\mu(p,s)-\mu(q,r)=k$. Therefore, we have $\mu(p,q)-\mu(s,r)=\mu(p,s)-\mu(q,r)=\mu(r,q)-\mu(s,p)=\mu(r,s)-\mu(q,p)=k\neq 0$, then we conclude that the cycle $(p,q,r,s,p)$ is a directed orbit-varying modular cycle. This is a contradiction.
\end{proof}

We now consider a sufficient condition for the converse of Lemma \ref{metric->graph-shortest} (\ref{lemma14:case2}) to hold. For an undirected metric $\mu$, Bandelt \cite{bandelt1985} showed that if $\mu$ is orbit-invariant and $H_\mu$ is modular, then a $d_{H_\mu}$-shortest sequence is also $\mu$-shortest. The similar property also holds for a directed metric as follows:
\begin{lemma}
\label{lem:d-shortest->mu-shortest}
\textit{Let $\mu$ be a directed metric on $T$, and let $v_0,v_1,\ldots,v_n\in T$. If $\mu$ is directed orbit-invariant and $H_\mu$ is modular, then the following condition holds:}
\begin{align}
    \textit{If\ }(v_0,v_1,\ldots,v_n)\ \textit{is}\ 
    d_{H_\mu}\textit{-shortest,}\ \textit{then\ it\ is\ also}\ \mu\textit{-shortest.}
\end{align}
\end{lemma}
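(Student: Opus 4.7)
My plan is to proceed by induction on $n$. The base cases $n\in\{0,1\}$ are immediate, so assume $n\ge 2$ and that the assertion holds for all smaller lengths. The triangle inequality already gives $\mu(v_0,v_n)\le \sum_{i=0}^{n-1}\mu(v_i,v_{i+1})$, and applying the inductive hypothesis to the $d_{H_\mu}$-shortest subpath $(v_1,\ldots,v_n)$ of length $n-1$ yields $\mu(v_1,v_n)=\sum_{i=1}^{n-1}\mu(v_i,v_{i+1})$. Consequently, the whole task reduces to proving the single identity $\mu(v_0,v_n)=\mu(v_0,v_1)+\mu(v_1,v_n)$.

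To get this, I exhibit a $\mu$-shortest sequence of length $n$ from $v_0$ to $v_n$ and compare its first step with $v_1$. By iterated refinement (inserting an element of $I_\mu(w_i,w_{i+1})\setminus\{w_i,w_{i+1}\}$ whenever such an element exists, as in the proof of Lemma~\ref{metric->graph-shortest}) one obtains a $\mu$-shortest sequence all of whose consecutive pairs are edges of $H_\mu$; Lemma~\ref{metric->graph-shortest}~(\ref{lemma14:case2}) then forces its length to be $d_{H_\mu}(v_0,v_n)=n$. Write it $(v_0=u_0,u_1,\ldots,u_n=v_n)$. If $u_1=v_1$ the $\mu$-shortest relation $\mu(v_0,v_n)=\mu(v_0,u_1)+\mu(u_1,v_n)$ is already what I need. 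Otherwise $u_1\neq v_1$ are distinct neighbors of $v_0$ both at $d_{H_\mu}$-distance $n-1$ from $v_n$, and since $H_\mu$ is modular (Lemma~\ref{lemma:mu is modular -> H is modular}), Lemma~\ref{modularcondition}~(ii) produces a common neighbor $p'$ of $u_1,v_1$ with $d_{H_\mu}(p',v_n)=n-2$.

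The $4$-cycle $(v_0,v_1,p',u_1,v_0)$ in $H_\mu$ places the oriented edges $(v_0,v_1)$ and $(u_1,p')$ in the same directed orbit, and similarly places $(v_0,u_1)$ and $(v_1,p')$ in the same directed orbit; directed orbit-invariance of $\mu$ therefore gives $\mu(v_0,v_1)=\mu(u_1,p')$ and $\mu(v_0,u_1)=\mu(v_1,p')$. Extending $p'$ to $v_n$ along any $d_{H_\mu}$-shortest path of length $n-2$ yields two $d_{H_\mu}$-shortest sequences of length $n-1$, namely $(v_1,p',\ldots,v_n)$ and $(u_1,p',\ldots,v_n)$; applying the inductive hypothesis to both gives $\mu(v_1,v_n)=\mu(v_1,p')+\mu(p',v_n)$ and $\mu(u_1,v_n)=\mu(u_1,p')+\mu(p',v_n)$. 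Chaining these,
\begin{align*}
\mu(v_0,v_n) &= \mu(v_0,u_1)+\mu(u_1,v_n) = \mu(v_0,u_1)+\mu(u_1,p')+\mu(p',v_n) \\
&= \mu(v_1,p')+\mu(v_0,v_1)+\mu(p',v_n) = \mu(v_0,v_1)+\mu(v_1,v_n),
\end{align*}
which completes the inductive step.

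The main obstacle is bookkeeping the projective pairings on the $4$-cycle so that directed orbit-invariance delivers the identities in the form needed for the chain above; a wrong pairing would yield identities with reversed orientations and collapse the argument. Once those two equalities are in place the rest is a short substitution, and the only remaining care is checking that the square-property vertex $p'$ is genuinely distinct from $v_0$ (which follows from $d_{H_\mu}(v_0,p')=2$) and that the $d_{H_\mu}$-shortest extensions through $p'$ can indeed be fed into the inductive hypothesis, which is immediate from $d_{H_\mu}(v_1,v_n)=d_{H_\mu}(u_1,v_n)=n-1$.
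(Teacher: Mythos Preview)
Your argument is circular at the step where you invoke Lemma~\ref{metric->graph-shortest}~(\ref{lemma14:case2}) to force the refined $\mu$-shortest sequence $(u_0,\ldots,u_m)$ to have length $d_{H_\mu}(v_0,v_n)=n$. Lemma~\ref{metric->graph-shortest} is stated under the hypothesis that $\mu$ is \emph{modular}, whereas Lemma~\ref{lem:d-shortest->mu-shortest} assumes only that $H_\mu$ is modular and $\mu$ is directed orbit-invariant. That $\mu$ itself is modular under these hypotheses is exactly the content of the lemma stated immediately after this one in the paper, and that lemma is derived \emph{from} Lemma~\ref{lem:d-shortest->mu-shortest}. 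So at this point you have no non-circular justification for $m=n$, and hence none for $d_{H_\mu}(u_1,v_n)=n-1$; without the latter, Lemma~\ref{modularcondition}~(ii) cannot be applied to the pair $u_1,v_1$, and the inductive step collapses. (The same confusion shows up in your parenthetical citation of Lemma~\ref{lemma:mu is modular -> H is modular} for the modularity of $H_\mu$: that implication goes the other way, and in any case modularity of $H_\mu$ is already a hypothesis here.)

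The paper sidesteps this difficulty by never needing $m=n$. It compares the given shortest path $(v_0,\ldots,v_n)$ against the refined path $(u_0,\ldots,u_m)$ directed-orbit by directed-orbit via Lemma~\ref{lemma:orbiteq} (which requires only that $H_\mu$ be modular), and directed orbit-invariance of $\mu$ turns the resulting edge-count inequalities into $\sum_{i}\mu(v_i,v_{i+1})\le\sum_{i}\mu(u_i,u_{i+1})=\mu(v_0,v_n)$; the triangle inequality then closes this to an equality. Lemma~\ref{lemma:orbiteq} is precisely the missing ingredient in your approach, but once you invoke it the induction and the $4$-cycle argument become superfluous.
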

\begin{proof}
Without loss of generality, we may assume that $(v_0,v_1,\ldots,v_n)$ is a shortest path from $v_0$ to $v_n$ in $H_\mu$. Also, by the definition of $H_\mu$ we see that there is a path $(v_0=u_0,u_1,\ldots,u_m=v_n)$ in $H_\mu$ with $\mu(u_0,u_m)=\mu(u_0,u_1)+\mu(u_1,u_2)+\cdots+\mu(u_{m-1},u_m)$. Hence, by Lemma \ref{lemma:orbiteq} and directed orbit-invariance of $\mu$, we have
\begin{align}
    \mu(v_0,v_1)+\mu(v_1,v_2)+\cdots+\mu(v_{n-1},v_n)&\leq \mu(u_0,u_1)+\mu(u_1,u_2)+\cdots+\mu(u_{m-1},u_m)\notag\\&=\mu(v_0,v_n).
\end{align}
Thus, we have $\mu(v_0,v_1)+\mu(v_1,v_2)+\cdots+\mu(v_{n-1},v_n)=\mu(v_0,v_n)$.
\end{proof}
As we obtain Lemma \ref{lemma:mu is modular -> H is modular} from Lemma \ref{metric->graph-shortest}, we also obtain the following property from Lemma~\ref{lem:d-shortest->mu-shortest} by applying a similar argument:
\begin{lemma}
\textit{Let $\mu$ be a directed metric. If $\mu$ is directed orbit-invariant and $H_\mu$ is modular, then $\mu$ is also modular.}
\end{lemma}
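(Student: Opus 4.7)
The plan is to mimic the derivation of Lemma 3.4 from Lemma 3.3, but with Lemma 3.6 playing the role of Lemma 3.3(2). The key observation is that being a median is phrased entirely in terms of a three-term sequence being ``shortest'', so a statement transferring $d_{H_\mu}$-shortest sequences to $\mu$-shortest sequences immediately transfers medians from $H_\mu$ to $\mu$.

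More concretely, fix arbitrary $s_0,s_1,s_2\in T$. Since $H_\mu$ is modular, its graph metric $d:=d_{H_\mu}$ is modular, so there exists an element $m\in T$ which is a median of $s_0,s_1,s_2$ with respect to $d$; that is, $d(s_i,s_j)=d(s_i,m)+d(m,s_j)$ for all $i\neq j\in\{0,1,2\}$. Consequently, the three-term sequence $(s_i,m,s_j)$ is $d$-shortest for each such pair.

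I would then invoke Lemma \ref{lem:d-shortest->mu-shortest}: since $\mu$ is directed orbit-invariant and $H_\mu$ is modular, every $d$-shortest sequence is $\mu$-shortest. Applying this to $(s_i,m,s_j)$ yields $\mu(s_i,s_j)=\mu(s_i,m)+\mu(m,s_j)$ for every $0\le i,j\le 2$ with $i\neq j$. This is precisely the defining property of $m$ being a median of $s_0,s_1,s_2$ with respect to $\mu$, so $\mu$ is modular.

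Because each step is essentially a direct invocation of a previously established lemma, I do not anticipate a real obstacle; the only subtle point is to confirm that the notion of ``median'' used for a modular directed metric matches the combinatorial median condition in $H_\mu$ when routed through the three-term-sequence formulation, but this is immediate from the definitions given in Section \ref{sec:directed metric spaces}.
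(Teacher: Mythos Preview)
Your proposal is correct and is exactly the approach the paper indicates: it states that this lemma follows from Lemma~\ref{lem:d-shortest->mu-shortest} by the same argument that derives Lemma~\ref{lemma:mu is modular -> H is modular} from Lemma~\ref{metric->graph-shortest}, which is precisely the transfer-of-medians argument you wrote out.
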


\section{Proof of tractability}
\label{sec:proof of tractability}
In this section, we give proofs of Theorem \ref{thm:directed-p} and Theorem \ref{thm:directed-p-star}. Let $\mu$ be a directed metric on $T$, and $\Gamma$ be the constraint language defined in (\ref{0-ext-->vcsp}). Then, as we see in Section \ref{subsec:valued csp}, $\overrightarrow{\textbf{0}}$\textbf{-Ext}$[\mu]$ is exactly VCSP$[\Gamma]$. Hence, by Theorem \ref{vcsp} we can prove the tractability of $\overrightarrow{\textbf{0}}$\textbf{-Ext}$[\mu]$ by constructing a fractional polymorphism with a semilattice operation in its support. In the proof of Theorem~\ref{thm:directed-p}, we construct a fractional polymorphism that characterizes submodularity of $\mu$. To show submodularity, we imitate the proof of submodularity of metric functions on modular semilattices in the undirected version \cite{hirai2016}. In the proof of Theorem \ref{thm:directed-p-star}, we also construct fractional polymorphisms that are similar to the fractional polymorphisms characterizing submodularity.

\subsection{Proof of Theorem \ref{thm:directed-p}}
Note that the underlying graph $H_\mu$ of $\mu$ is the covering graph of a modular lattice $\mathcal{L}$ with a partial order $\preceq$. We define a partial order $\preceq$ on $\mathcal{L}\times \mathcal{L}$ by $(a,b)\preceq (c,d)\Longleftrightarrow a\preceq c$ and $b\preceq d\ (a,b,c,d\in \mathcal{L})$. Then $\mathcal{L}\times \mathcal{L}$ is also a modular lattice. If $\mu$ is a submodular function on $\mathcal{L}\times \mathcal{L}$, then by Theorem~\ref{vcsp} we can conclude that $\overrightarrow{\textbf{0}}$\textbf{-Ext}$[\mu]$ is solvable in polynomial time. Hence, the following property completes the proof:
\begin{theorem}
\textit{Let }$\mu$ \textit{be a directed metric. Suppose that }$H_\mu$ \textit{is the covering graph of a modular lattice }$\mathcal{L}$\textit{ and }$\mu$\textit{ is directed orbit-invariant. Then the function }$\mu:\mathcal{L}\times \mathcal{L}\rightarrow \mathbb{R}_+$\textit{ is submodular.}
\end{theorem}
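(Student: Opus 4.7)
The plan is to invoke Lemma~\ref{submodular} and check submodularity of $\mu$ only on $2$-covered pairs in the product modular lattice $\mathcal{L}\times \mathcal{L}$. Because covers in $\mathcal{L}\times \mathcal{L}$ change exactly one coordinate by a cover in $\mathcal{L}$, a direct enumeration shows that every $2$-covered pair $((a_1,b_1),(a_2,b_2))$ falls into one of three types: Type~I, where $a_1=a_2=a$ and $(b_1,b_2)$ is a $2$-covered pair in $\mathcal{L}$; Type~II, the mirror case with $b_1=b_2$; and Type~III, where after relabeling $a_1$ covers $a_2$ and $b_2$ covers $b_1$ in $\mathcal{L}$, so that the join is $(a_1,b_2)$ and the meet is $(a_2,b_1)$.

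For Types~I and II (symmetric under swapping coordinates), I would treat Type~I by gating. The order interval $[b_1\wedge b_2,b_1\vee b_2]$ is convex in the covering graph $H_\mu$, hence gated by Lemma~\ref{gated}; let $g$ be the gate of $a$. Combining the undirected gatedness with Lemma~\ref{lem:d-shortest->mu-shortest} (valid since $\mu$ is directed orbit-invariant and $H_\mu$ is modular), I would upgrade it to the directed identity $\mu(a,p)=\mu(a,g)+\mu(g,p)$ for every $p\in[b_1\wedge b_2,b_1\vee b_2]$, and likewise for $\mu(p,a)$. Substituting cancels the $\mu(a,g)$ contributions and reduces the target to the local inequality $\mu(g,b_1)+\mu(g,b_2)\geq \mu(g,b_1\vee b_2)+\mu(g,b_1\wedge b_2)$ with $g$ inside the rank-$2$ sublattice $[b_1\wedge b_2,b_1\vee b_2]$. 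I would verify this by a case split on the location of $g$, repeatedly using directed orbit-invariance on the $4$-cycle $(b_1\wedge b_2,b_1,b_1\vee b_2,b_2,b_1\wedge b_2)$ and, when the interval is a general $M_n$ with $n\geq 3$, on the analogous $4$-cycles through any other atom.

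For Type~III, I would instead gate $a_1$ and $a_2$ at the two-element interval $[b_1,b_2]$. Each of the two gates is either $b_1$ or $b_2$, giving four sub-cases. A short modular-lattice rank computation using $a_2\preceq a_1$ and $b_1\preceq b_2$ rules out exactly one sub-case—precisely the one whose gate identities would violate submodularity—and in each of the remaining three sub-cases the two gate identities (again promoted from undirected to directed via Lemma~\ref{lem:d-shortest->mu-shortest}) immediately yield the desired inequality, with equality in the two ``parallel'' sub-cases and strict inequality in the ``crossed'' one.

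The main obstacle I anticipate is the Type~I local check when $[b_1\wedge b_2,b_1\vee b_2]$ is a non-diamond rank-$2$ sublattice $M_n$: for $g$ at a third atom one must thread together several projectivities so that directed orbit-invariance yields the right inequality, and one has to keep careful track of the orientations of the $4$-cycle edges (unlike the undirected situation, the two directions of each edge carry independent weights). The Type~III analysis is conceptually cleaner but hinges entirely on the modular-lattice step that excludes the ``bad'' gate configuration; this step has no direct counterpart in the undirected theory and is where the product-lattice structure of $\mathcal{L}\times \mathcal{L}$ interacts most delicately with directed orbit-invariance.
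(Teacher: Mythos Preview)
Your proposal is correct, and for Type~I/II it tracks the paper's argument almost exactly: gate the outside vertex at the rank-$2$ interval $[b_1\wedge b_2,b_1\vee b_2]$, promote the undirected gate identity to a directed one via Lemma~\ref{lem:d-shortest->mu-shortest}, then do a case split on the location of the gate using directed orbit-invariance on the $4$-cycles of the interval. The paper does not first reduce to your ``local inequality'' but computes each of the five gate cases directly; the content is the same.

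Your Type~III argument, however, is genuinely different from the paper's. The paper does \emph{not} gate at the edge $\{b_1,b_2\}$. Instead it compares $d(a_1,b_2)$ with $d(b_1,a_2)$ by parity (they agree or differ by $2$), dispatches the easy cases, and in the residual case $d(a_1,a_2)=d(b_1,b_2)=d(a_1,b_2)-1$ runs an inductive descent: take the first step $z$ of a shortest $a_1$--$a_2$ path, invoke the quadrangle condition of Lemma~\ref{modularcondition} to produce a common neighbour $w$ of $z$ and $b_1$, observe that $z$ covers $w$, and repeat with $(z,w)$ in place of $(a_1,b_1)$ until a contradiction with ``$b_2$ covers $a_2$'' is reached. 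Your approach replaces this descent by a single rank inequality: case~(B) forces $r(a_1\wedge b_2)+r(a_2\wedge b_1)<r(a_1\wedge b_1)+r(a_2\wedge b_2)$, contradicting submodularity of rank (equivalently, $(a_1\wedge b_1)\wedge(a_2\wedge b_2)=a_2\wedge b_1$ and $(a_1\wedge b_1)\vee(a_2\wedge b_2)\preceq a_1\wedge b_2$). This is cleaner and avoids induction, at the cost of appealing to the distance formula $d(x,y)=r(x\vee y)-r(x\wedge y)$ in the covering graph of a modular lattice; the paper's argument stays at the level of graph distances and Lemma~\ref{modularcondition}. One small caution: your ``symmetric under swapping coordinates'' remark for Type~II is not literal symmetry of $\mu$, but symmetry of the argument---the local inequality becomes $\mu(a_1,g)+\mu(a_2,g)\geq\mu(a_1\vee a_2,g)+\mu(a_1\wedge a_2,g)$, and the same case analysis on $g$ goes through with the directions of the orbit-invariance identities reversed.
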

\begin{proof}
Note that $\mathcal{L}\times \mathcal{L}$ is a modular lattice. By Lemma \ref{submodular}, $\mu$ is a submodular function on $\mathcal{L}\times \mathcal{L}$ if and only if $\mu(a)+\mu(b)\geq \mu(a\vee b)+\mu(a\wedge b)$ holds for every 2-covered pair $(a,b)\ (a,b\in \mathcal{L}\times \mathcal{L})$. Thus, it suffices to show that $\mu(a)+\mu(b)\geq \mu(a\vee b)+\mu(a\wedge b)$ holds for any 2-covered pair $(a,b)$. Let $a=(a_1,a_2),b=(b_1,b_2)\ (a_1,a_2,b_1,b_2\in \mathcal{L})$. Then, it suffices to consider  the following two cases:
\begin{enumerate}[label=(\roman*),ref=\roman*]
    \item $a_1=b_1$, and $a_2\vee b_2$ covers $a_2,b_2$.
    \label{case1}
    \item $a_1$ covers $b_1$, and $b_2$ covers $a_2$.
    \label{case2}
\end{enumerate}
We first consider the case (\ref{case1}). It suffices to show that $\mu(a_1,a_2)+\mu(a_1,b_2)\geq \mu(a_1,a_2\vee b_2)+\mu(a_1,a_2\wedge b_2)$. Let $Y:=[a_2\wedge b_2,a_2\vee b_2]$. Then, for every $y\in Y\setminus \{a_2\wedge b_2,a_2\vee b_2\}$, it holds that $a_2\vee b_2$ covers $y$ and $y$ covers $a_2\wedge b_2$, because of Lemma \ref{jordandedekind} and Lemma \ref{lem:modular}. Hence, $Y$ is a convex set in the metric space $(T,d)$ (in this proof, we denote $d:=d_{H_\mu}$ for simplicity). Since $\mathcal{L}$ is modular, by Lemma \ref{modular} $H_\mu$ is modular. Then, by Lemma \ref{gated} $Y$ is a gated set. Hence, there exists $y^*\in Y$ such that $d(a_1,y)=d(a_1,y^*)+d(y^*,y)$ holds for every $y\in Y$. Therefore, by Lemma \ref{lem:d-shortest->mu-shortest}, $(a_1,y^*,y)$ is $\mu$-shortest for every $y\in Y$. If $y^*=a_2$, then we have
\begin{align}
    \mu(a_1,a_2\vee b_2)&=\mu(a_1,a_2)+\mu(a_2,a_2\vee b_2),\notag\\
    \mu(a_1,a_2\wedge b_2)&=\mu(a_1,a_2)+\mu(a_2,a_2\wedge b_2), \notag \\
    \mu(a_1,b_2)&=\mu(a_1,a_2)+\mu(a_2,b_2).
\end{align}
Furthermore, since $\mu$ is directed orbit-invariant, we have $\mu(a_2\vee b_2,b_2)=\mu(a_2,a_2\wedge b_2)$. In addition, by Lemma \ref{lem:d-shortest->mu-shortest} we have $\mu(a_2,b_2)=\mu(a_2,a_2\vee b_2)+\mu(a_2\vee b_2,b_2)$. Hence, we have $\mu(a_1,b_2)=\mu(a_1,a_2)+\mu(a_2,a_2\vee b_2)+\mu(a_2,a_2\wedge b_2)$. Therefore, we obtain $\mu(a_1,a_2)+\mu(a_1,b_2)=\mu(a_1,a_2\vee b_2)+\mu(a_1,a_2\wedge b_2)$. Similarly, if $y^*=b_2$, we obtain $\mu(a_1,a_2)+\mu(a_1,b_2)=\mu(a_1,a_2\vee b_2)+\mu(a_1,a_2\wedge b_2)$. If $y^*=a_2\vee b_2$, then we have
\begin{align}
\label{equation:cube}
    \mu(a_1,a_2)&=\mu(a_1,a_2\vee b_2)+\mu(a_2\vee b_2,a_2),\notag\\
    \mu(a_1,b_2)&=\mu(a_1,a_2\vee b_2)+\mu(a_2\vee b_2,b_2), \notag \\
    \mu(a_1,a_2\wedge b_2)&=\mu(a_1,a_2\vee b_2)+\mu(a_2\vee b_2,a_2)+\mu(a_2,a_2\wedge b_2).
\end{align}
Since $\mu$ is directed orbit-invariant, we have $\mu(a_2\vee b_2,b_2)=\mu(a_2,a_2\wedge b_2)$. Hence, by (\ref{equation:cube}) we obtain $\mu(a_1,a_2)+\mu(a_1,b_2)=\mu(a_1,a_2\vee b_2)+\mu(a_1,a_2\wedge b_2)$. Similarly, if $y^*=a_2\wedge b_2$, then we obtain $\mu(a_1,a_2)+\mu(a_1,b_2)=\mu(a_1,a_2\vee b_2)+\mu(a_1,a_2\wedge b_2)$. Thus, it suffices to consider the case when $y^*\neq a_2,b_2,a_2\vee b_2,a_2\wedge b_2$. In this case, we have
\begin{align}
    \mu(a_1,a_2)&=\mu(a_1,y^*)+\mu(y^*,a_2\vee b_2)+\mu(a_2\vee b_2,a_2)\geq \mu(a_1,a_2\vee b_2),\notag\\ 
    \mu(a_1,b_2)&=\mu(a_1,y^*)+\mu(y^*,a_2\wedge b_2)+\mu(a_2\wedge b_2,b_2)\geq \mu(a_1,a_2\wedge b_2).
\end{align}
Hence, we have $\mu(a_1,a_2)+\mu(a_1,b_2)\geq \mu(a_1,a_2\vee b_2)+\mu(a_1,a_2\wedge b_2)$.

For the next, we consider the case (\ref{case2}). The submodularity is $\mu(a_1,a_2)+\mu(b_1,b_2)\geq \mu(a_1,b_2)+\mu(b_1,a_2)$. Since $H_\mu$ is bipartite, $d(a_1,b_2)$ is equal to  either $d(b_1,a_2)$ or $d(b_1,a_2)+2$ or $d(b_1,a_2)-2$. If $d(a_1,b_2)$ is equal to $d(b_1,a_2)+2$ or $d(b_1,a_2)-2$, then by Lemma \ref{lem:d-shortest->mu-shortest} we have $\mu(a_1,a_2)+\mu(b_1,b_2)= \mu(a_1,b_2)+\mu(b_1,a_2)$. Thus, it suffices to consider the case when $d(a_1,b_2)=d(b_1,a_2)$. In this case, $d(a_1,a_2)$ is equal to either $d(a_1,b_2)-1$ or $d(a_1,b_2)+1$. Suppose that $d(a_1,a_2)=d(a_1,b_2)+1$. Then, by Lemma \ref{lem:d-shortest->mu-shortest} we have $\mu(a_1,a_2)=\mu(a_1,b_2)+\mu(b_2,a_2)$. Hence, we obtain $\mu(a_1,a_2)+\mu(b_1,b_2)=\mu(a_1,b_2)+\mu(b_2,a_2)+\mu(b_1,b_2)\geq \mu(a_1,b_2)+\mu(b_1,a_2)$. Consider the case when $d(a_1,a_2)=d(a_1,b_2)-1$. Similarly, $d(b_1,b_2)$ is equal to either $d(a_1,b_2)-1$ or $d(a_1,b_2)+1$, and by the similar argument, we may assume that $d(b_1,b_2)=d(a_1,b_2)-1$. Let $P$ be a shortest path in $H_\mu$ from $a_1$ to $a_2$. Let $z$ be the vertex in $P$ that is adjacent to $a_1$. Then, we have $d(z,b_2)=d(b_1,b_2)=d(a_1,b_2)-1$. Hence, by Lemma \ref{modularcondition}, there exists a common neighbor $w$ of $z,b_1$ with $d(w,b_2)=d(a_1,b_2)-2$. Then, we have $d(z,b_2)=d(w,a_2)=d(a_1,b_2)-1,\ d(z,a_2)=d(z,b_2)-1,$ and $d(w,b_2)=d(z,b_2)-1$. Furthermore, since $a_1$ covers $b_1$, we see that $z$ covers $w$. Hence, we can apply the same argument to $z,w,a_2,b_2$ which we apply to $a_1,b_1,a_2,b_2$ above. 
By repeating this argument, we can see that $a_2$ covers $b_2$, but this is a contradiction.
\end{proof}

\subsection{Proof of Theorem \ref{thm:directed-p-star}}
To prove the tractability, we construct fractional polymorphisms which satisfy the property of Theorem \ref{vcsp}. Let $r$ denote the internal node of a star $H_\mu$. A subset $X\subseteq T\setminus \{r\}$ is called \textit{unbiased} if $R_{\mu}(p,r)=R_{\mu}(r,q)$ holds for every $p,q\in X\ (p\neq q)$. Note that if $X$ is unbiased and $|X|\geq 3$, then $\mu(p,r)=\mu(r,p)$ holds for any $p\in X$, since $R_\mu(p,r)=R_\mu(r,q)=R_\mu(s,r)=R_\mu(r,p)$ holds for $q,s\in X\setminus \{p\}\ (q\neq s)$. We divide $T\setminus \{r\}$ into the minimum number of disjoint unbiased sets, and denote by $\mathcal{F}$ the family of them. From the assumption that there exists no biased non-collinear triple, the family $\mathcal{F}$ consists of at most two sets. Hence, it suffices to consider the following three cases:
\begin{enumerate}[label=(\roman*),ref=\roman*]
    \item $\mathcal{F}=\{X,Y\}$, and $|X|, |Y|\leq 2$.
    \label{thm6:case1}
    \item $\mathcal{F}=\{X,Y\}$, and $|X|\geq 3,\ |Y|\leq 2$.
    \label{thm6:case2}
    \item $\mathcal{F}=\{X\}$.
    \label{thm6:case3}
\end{enumerate}

We first consider the case (\ref{thm6:case1}). It suffices to consider the case when $|X|=|Y|=2$, since fractional polymorphisms for this case work for the other cases. Let $X=\{x_1,x_2\}$ and $Y=\{y_1,y_2\}$. Let $\preceq$ be a partial order on $T$ defined by $y_i\prec r\prec x_j\ (i=1,2,\ j=1,2)$ (see Figure 4 (a)). For $i=1,2$, we extend $\preceq$ to $\preceq_i$ by adding relation $y_i \prec_i y_j\ (j\neq i)$. Then each pair of two elements $t_1,t_2$ in a partially ordered set $(T,\preceq_i)$ has a unique meet, denoted by $t_1\wedge_i t_2$ ($i=1,2$).
\begin{figure}[tbp]
\begin{center}
\begin{overpic}[width=14cm]{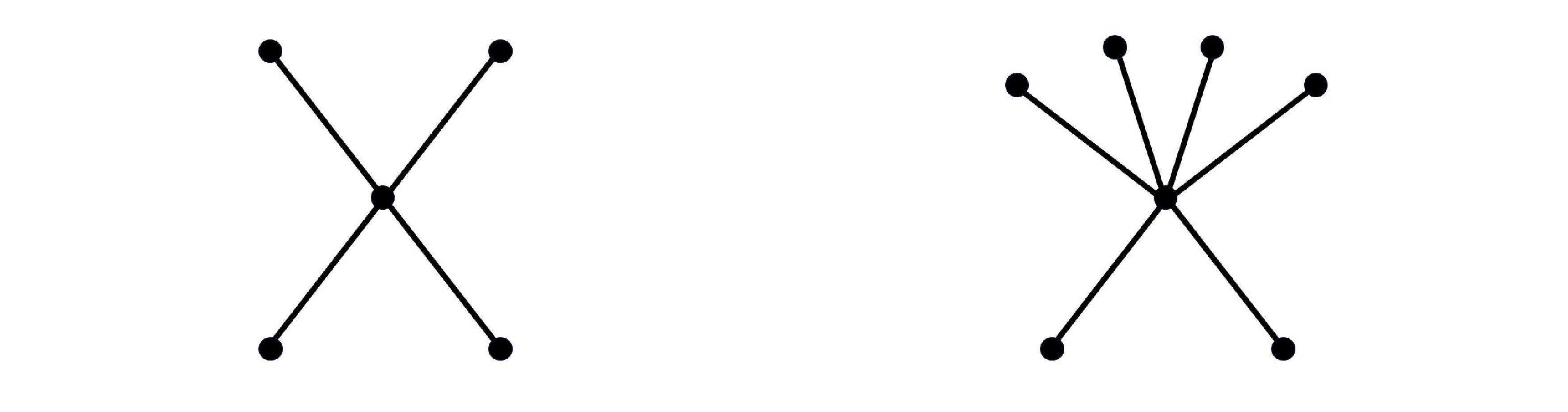}
\put(13,23){$x_1$}
\put(33.7,23){$x_2$}
\put(26,13){$r$}
\put(13,4){$y_1$}
\put(33.7,4){$y_2$}
\put(60.5,20.3){$x_1$}
\put(85.7,20.3){$x_4$}
\put(68,25){$x_2$}
\put(77.7,24.7){$x_3$}
\put(76.5,13){$r$}
\put(63,4){$y_1$}
\put(83.6,4){$y_2$}
\end{overpic}
\caption{(a)\ the case (i)\ \ \ \ \ \ \ \ \ \ \ \ \ \ \ \ \ \ \ \ \ \ (b)\ the case (ii)\ ($k=4$)}
\end{center}
\end{figure}
For $i=1,2$, similar to $\preceq_i$, we also extend $\preceq$ to $\preceq_i'$ by adding relation $x_j \prec_i' x_i\ (j\neq i)$. Then each pair of two elements $t_1,t_2$ in $(T,\preceq_i')$ has a unique join, denoted by $t_1\vee_i t_2$ ($i=1,2$).
Since $X$ and $Y$ are unbiased, the lengths of the edges in $H_\mu$ can be written as
\begin{align}
    &\mu(x_1,r)=a,\ \mu(r,x_1)=b,\ \mu(r,x_2)=ka,\ \mu(x_2,r)=kb,\notag\\
    &\mu(r,y_1)=c,\ \mu(y_1,r)=d,\ \mu(y_2,r)=lc,\ \mu(r,y_2)=ld.
\end{align}
Since $\wedge_i,\vee_j$ are semilattice operations, by Theorem \ref{vcsp} it suffices to show that for any $t^1,t^2\in T\times T$,
\begin{align}
\label{vcspcondition:case1}
    \mu(t^1)+\mu(t^2)\geq \frac{1}{l+1} \mu(t^1\wedge_1 t^2)+\frac{l}{l+1} \mu(t^1\wedge_2 t^2)+\frac{1}{k+1} \mu(t^1\vee_1 t^2)+\frac{k}{k+1} \mu(t^1\vee_2 t^2).
\end{align}
Let $t^1=(t_1^1,t_2^1)$ and $t^2=(t_1^2,t_2^2)$. 
If a pair $t_i^1,t_i^2$ has both meet $t_i^1\wedge t_i^2$ and join $t_i^1\vee t_i^2$ in $(T,\preceq)$ for all $i=1,2$, then (\ref{vcspcondition:case1}) reduces to Theorem \ref{thm:directed-p}. 
Suppose that a pair $t_i^1,t_i^2$ has no meet or has no join for some $i$ in $(T,\preceq)$. Without loss of generality, we may assume that $t_1^1=x_1$ and $t_1^2=x_2$. If $t_2^1=t_2^2=y_i$ ($i\in \{1,2\}$), then we have
\begin{align}
    &\mu(x_1,y_i)+\mu(x_2,y_i)= \mu(x_1,r)+\mu(r,y_i)+\mu(x_2,r)+\mu(r,y_i)\notag\\
    &\geq \frac{1}{k+1} \mu(x_1,r)+\frac{1}{k+1} \mu(r,y_i)+\frac{k}{k+1} \mu(r,y_i)+\frac{k}{k+1} \mu(x_2,r)+\frac{1}{l+1} \mu(r,y_i)+\frac{l}{l+1} \mu(r,y_i)\notag\\
    &=\frac{1}{l+1} \mu(x_1\wedge_1 x_2,y_i\wedge_1 y_i)+\frac{l}{l+1} \mu(x_1\wedge_2 x_2,y_i\wedge_2 y_i)+\frac{1}{k+1} \mu(x_1\vee_1 x_2,y_i\vee_1 y_i)\notag\\
    &+\frac{k}{k+1} \mu(x_1\vee_2 x_2,y_i\vee_2 y_i).
\end{align}
Thus, (\ref{vcspcondition:case1}) holds. Suppose that $t_2^1,t_2^2\in \{r,y_1,y_2\}$ and $(t_2^1,t_2^2)\neq (y_i,y_i)$ for $i=1,2$. Then we have $t_2^1\wedge_i t_2^2\in \{t_2^1,t_2^2\}$ and $t_2^1\vee_i t_2^2=r$ for $i=1,2$. Hence, we have
\begin{align}
    &\mu(x_1,t_2^1)+\mu(x_2,t_2^2)= \mu(x_1,r)+\mu(x_2,r)+\mu(r,t_2^1)+\mu(r,t_2^2)\notag\\
    &\geq \frac{1}{k+1} \mu(x_1,r)+\frac{k}{k+1} \mu(x_2,r)+\frac{1}{l+1} \mu(r,t_2^1\wedge_1 t_2^2)+\frac{l}{l+1} \mu(r,t_2^1\wedge_2 t_2^2)\notag\\
    &= \frac{1}{l+1} \mu(x_1\wedge_1 x_2,t_2^1\wedge_1 t_2^2)+\frac{l}{l+1} \mu(x_1\wedge_2 x_2,t_2^1\wedge_2 t_2^2)+\frac{1}{k+1} \mu(x_1\vee_1 x_2,t_2^1\vee_1 t_2^2)\notag\\
    &+\frac{k}{k+1} \mu(x_1\vee_2 x_2,t_2^1\vee_2 t_2^2).
\end{align}
Thus, it suffices to consider the case when $t_2^1=x_1$ or $x_2$. We first consider the case when $t_2^1=x_1$. If $t_2^2=x_2$, then (\ref{vcspcondition:case1}) holds trivially, since the both sides of (\ref{vcspcondition:case1}) are 0. If $t_2^2\in \{x_1,r,y_1,y_2\}$, then we have
\begin{align}
    &\mu(x_1,x_1)+\mu(x_2,t_2^2)=\mu(x_2,r)+\mu(r,t_2^2)\notag\\
    &=\frac{k}{k+1}\mu(x_2,x_1)+\frac{1}{l+1}\mu(r,t_2^2)+\frac{l}{l+1}\mu(r,t_2^2) \notag\\
    &=\frac{1}{l+1} \mu(x_1\wedge_1 x_2,x_1\wedge_1 t_2^2)+\frac{l}{l+1} \mu(x_1\wedge_2 x_2,x_1\wedge_2 t_2^2)+\frac{1}{k+1} \mu(x_1\vee_1 x_2,x_1\vee_1 t_2^2)\notag\\
    &+\frac{k}{k+1} \mu(x_1\vee_2 x_2,x_1\vee_2 t_2^2),
\end{align}
where we use $\mu(x_2,r)=kb=\frac{k}{k+1}\mu(x_2,x_1)$ and $\mu(x_1\vee_1 x_2,x_1\vee_1 t_2^2)=\mu(x_1,x_1)=0$.
Thus, (\ref{vcspcondition:case1}) holds. We next consider the case when $t_2^1=x_2$. If $t_2^2=x_1$, then (\ref{vcspcondition:case1}) holds trivially, since the right hand side of (\ref{vcspcondition:case1}) is 0. If $t_2^2\in \{r,y_1,y_2\}$, then we have
\begin{align}
    &\mu(x_1,x_2)+\mu(x_2,t_2^2)=\mu(x_1,x_2)+\mu(x_2,r)+\mu(r,t_2^2)\notag\\
    &\geq \frac{1}{k+1}\mu(x_1,x_2)+\frac{1}{l+1} \mu(r,t_2^2)+\frac{l}{l+1} \mu(r,t_2^2)\notag\\
    &=\frac{1}{l+1} \mu(x_1\wedge_1 x_2,x_2\wedge_1 t_2^2)+\frac{l}{l+1} \mu(x_1\wedge_2 x_2,x_2\wedge_2 t_2^2)+\frac{1}{k+1} \mu(x_1\vee_1 x_2,x_2\vee_1 t_2^2)\notag\\
    &+\frac{k}{k+1} \mu(x_1\vee_2 x_2,x_2\vee_2 t_2^2).
\end{align}
Thus, (\ref{vcspcondition:case1}) holds. This completes the proof of the case (\ref{thm6:case1}).

We next consider the case (\ref{thm6:case2}). Let $X=\{x_1,x_2,\ldots,x_k\}$ and $Y=\{y_1,y_2\}$ ($k\geq 3$). Let $\preceq$ be a partial order on $T$ defined by $y_i\prec r\prec x_j\ (i=1,2,\ j=1,2,\ldots k)$ (see Figure 4 (b)). Similar to the case (\ref{thm6:case1}), for $i=1,2$, we extend $\preceq$ to $\preceq_i$ by adding relation $y_i \prec_i y_j\ (j\neq i)$. Then each pair of two elements $t_1,t_2$ in $(T,\preceq_i)$ has a unique meet, denoted by $t_1\wedge_i t_2$ ($i=1,2$). Let $[k]:=\{1,2,\ldots,k\}$. Let $\mathcal{G}$ be the set of all functions $g:{\scriptsize\begin{pmatrix}[k] \\
                     2 \\
      \end{pmatrix}}\rightarrow [k]$
that satisfy $g(ij)\in \{i,j\}$ for any $ij\in {\scriptsize\begin{pmatrix}[k] \\
                     2 \\
      \end{pmatrix}}$. 
Then, for $g\in \mathcal{G}$, let $\vee_g$ be a binary operation on $T$ defined by $x_i\vee_g x_j:=x_{g(ij)}$ for $ij\in {\scriptsize\begin{pmatrix}[k] \\
                     2 \\
\end{pmatrix}}$, and $t_1\vee_g t_2:=t_1\vee t_2$ for $t_{1}t_2\in {\scriptsize\begin{pmatrix}T \\
                     2 \\
      \end{pmatrix}}\setminus {\scriptsize\begin{pmatrix}X \\
                     2 \\
\end{pmatrix}}$, where $t_1\vee t_2$ is a unique join in $(T,\preceq)$. By $|X|\geq 3$, it holds that $\mu(x_i,r)=\mu(r,x_i)$ for any $i\in [k]$. Since $Y$ is unbiased, the lengths of the edges in $H_\mu$ can be written as
\begin{align}
    &\mu(x_i,r)=\mu(r,x_i)=a_i\ (i\in [k]),\notag\\
    &\mu(r,y_1)=b,\ \mu(y_1,r)=c,\ \mu(y_2,r)=lb,\ \mu(r,y_2)=lc.
\end{align}
Since $\wedge_i$ is a semilattice operation, by Theorem \ref{vcsp} it suffices to show that for any $t^1,t^2\in T\times T$,
\begin{align}
\label{vcspcondition:case2}
    \mu(t^1)+\mu(t^2)\geq \frac{1}{l+1}\mu(t^1\wedge_1 t^2)+\frac{l}{l+1}\mu(t^1\wedge_2 t^2)+\sum_{g\in \mathcal{G}}\prod_{\tiny ij\in
\begin{pmatrix}
[k] \\
2 \\
\end{pmatrix}\normalsize}\frac{a_{g(ij)}}{a_i+a_j}\mu(t^1\vee_g t^2).
\end{align}
Note that $\sum_{g\in \mathcal{G}}\prod_{\tiny ij\in
\begin{pmatrix}
[k] \\
2 \\
\end{pmatrix}\normalsize}a_{g(ij)}=\prod_{\tiny ij\in
\begin{pmatrix}
[k] \\
2 \\
\end{pmatrix}\normalsize}(a_i+a_j)$. Let $t^1=(t_1^1,t_2^1)$ and $t^2=(t_1^2,t_2^2)$. If $|\{t_1^1,t_2^1,t_1^2,t_2^2\}\cap X|\leq 2$, then (\ref{vcspcondition:case2}) reduces to the case (\ref{thm6:case1}). Consider the case when $|\{t_1^1,t_2^1,t_1^2,t_2^2\}\cap X|\geq 3$. Without loss of generality, we may assume that $t_1^1=x_1,t_2^1=x_2,t_1^2=x_3$. If $t_2^2=x_i$ ($i\neq 2,3$), then we have
\begin{align}
    &\mu(x_1,x_2)+\mu(x_3,x_i)=\mu(x_1,r)+\mu(r,x_2)+\mu(x_3,r)+\mu(r,x_i)\notag\\
    &\geq \frac{a_1}{a_1+a_3}\cdot \frac{a_2}{a_2+a_i}\mu(x_1,r)+\frac{a_1}{a_1+a_3}\cdot \frac{a_i}{a_2+a_i}\mu(x_1,r)+\frac{a_1}{a_1+a_3}\cdot \frac{a_2}{a_2+a_i}\mu(r,x_2)\notag\\
    &+\frac{a_3}{a_1+a_3}\cdot \frac{a_2}{a_2+a_i}\mu(r,x_2)+\frac{a_3}{a_1+a_3}\cdot \frac{a_2}{a_2+a_i}\mu(x_3,r)+\frac{a_3}{a_1+a_3}\cdot \frac{a_i}{a_2+a_i}\mu(x_3,r)\notag\\
    &+\frac{a_1}{a_1+a_3}\cdot \frac{a_i}{a_2+a_i}\mu(r,x_i)+\frac{a_3}{a_1+a_3}\cdot \frac{a_i}{a_2+a_i}\mu(r,x_i)\notag\\
    &\geq \frac{a_1}{a_1+a_3}\cdot \frac{a_2}{a_2+a_i}\mu(x_1,x_2)+\frac{a_1}{a_1+a_3}\cdot \frac{a_i}{a_2+a_i}\mu(x_1,x_i)+\frac{a_3}{a_1+a_3}\cdot \frac{a_2}{a_2+a_i}\mu(x_3,x_2)\notag\\
    &+\frac{a_3}{a_1+a_3}\cdot \frac{a_i}{a_2+a_i}\mu(x_3,x_i)\notag\\
    &= \frac{1}{l+1}\mu(x_1\wedge_1 x_3,x_2\wedge_1 x_i)+\frac{l}{l+1}\mu(x_1\wedge_2 x_3,x_2\wedge_2 x_i)\notag\\
    &+\sum_{g\in \mathcal{G}}\prod_{\tiny jm\in
\begin{pmatrix}
[k] \\
2 \\
\end{pmatrix}\normalsize}\frac{a_{g(jm)}}{a_j+a_m}\mu(x_1\vee_g x_3,x_2\vee_g x_i).
\end{align}
(The second inequality is strict when $x_i=x_1$.)
If $t_2^2=x_2$, then we have
\begin{align}
    &\mu(x_1,x_2)+\mu(x_3,x_2)=\mu(x_1,r)+\mu(r,x_2)+\mu(x_3,r)+\mu(r,x_2)\notag\\
    &\geq \frac{a_1}{a_1+a_3}\mu(x_1,r)+\frac{a_1}{a_1+a_3}\mu(r,x_2)+\frac{a_3}{a_1+a_3}\mu(r,x_2)+\frac{a_3}{a_1+a_3}\mu(x_3,r)+\frac{1}{l+1}\mu(r,x_2)\notag\\
    &+\frac{l}{l+1}\mu(r,x_2)\notag\\
    &=\frac{a_1}{a_1+a_3}\mu(x_1,x_2)+\frac{a_3}{a_1+a_3}\mu(x_3,x_2)+\frac{1}{l+1}\mu(r,x_2)+\frac{l}{l+1}\mu(r,x_2)\notag\\
    &= \frac{1}{l+1}\mu(x_1\wedge_1 x_3,x_2\wedge_1 x_2)+\frac{l}{l+1}\mu(x_1\wedge_2 x_3,x_2\wedge_2 x_2)\notag\\
    &+\sum_{g\in \mathcal{G}}\prod_{\tiny jm\in
\begin{pmatrix}
[k] \\
2 \\
\end{pmatrix}\normalsize}\frac{a_{g(jm)}}{a_j+a_m}\mu(x_1\vee_g x_3,x_2\vee_g x_2).
\end{align}
If $t_2^2=x_3$, then we have
\begin{align}
    &\mu(x_1,x_2)+\mu(x_3,x_3)=a_1+a_2\notag\\
    &=\frac{a_1a_2(a_1+a_2)+a_2a_3(a_2+a_3)+a_3a_1(a_3+a_1)+2a_1a_2a_3}{(a_1+a_3)(a_2+a_3)}\notag\\
    &\geq \frac{a_1a_2(a_1+a_2)+a_2a_3(a_2+a_3)+a_3a_1(a_3+a_1)}{(a_1+a_3)(a_2+a_3)}\notag\\
    &=\frac{a_1}{a_1+a_3}\cdot \frac{a_2}{a_2+a_3} \mu(x_1,x_2)+\frac{a_3}{a_1+a_3}\cdot \frac{a_2}{a_2+a_3} \mu(x_3,x_2)+\frac{a_1}{a_1+a_3}\cdot \frac{a_3}{a_2+a_3} \mu(x_1,x_3)\notag\\
    &=\frac{1}{l+1}\mu(x_1\wedge_1 x_3,x_2\wedge_1 x_3)+\frac{l}{l+1}\mu(x_1\wedge_2 x_3,x_2\wedge_2 x_3)\notag\\
    &+\sum_{g\in \mathcal{G}}\prod_{\tiny jm\in
\begin{pmatrix}
[k] \\
2 \\
\end{pmatrix}\normalsize}\frac{a_{g(jm)}}{a_j+a_m}\mu(x_1\vee_g x_3,x_2\vee_g x_3).
\end{align}
If $t_2^2\in \{r,y_1,y_2\}$, then we have
\begin{align}
    &\mu(x_1,x_2)+\mu(x_3,t_2^2)=\mu(x_1,r)+\mu(r,x_2)+\mu(x_3,r)+\mu(r,t_2^2)\notag\\
    &=\mu(x_1,r)+\frac{a_1}{a_1+a_3}\mu(r,x_2)+\frac{a_3}{a_1+a_3}\mu(r,x_2)+\mu(x_3,r)+\mu(r,t_2^2)\notag\\
    &\geq \frac{a_1}{a_1+a_3} \mu(x_1,x_2)+\frac{a_3}{a_1+a_3}\mu(x_3,x_2)+\frac{1}{l+1}\mu(r,t_2^2)+\frac{l}{l+1}\mu(r,t_2^2)\notag\\
    &=\frac{1}{l+1}\mu(x_1\wedge_1 x_3,x_2\wedge_1 t_2^2)+\frac{l}{l+1}\mu(x_1\wedge_2 x_3,x_2\wedge_2 t_2^2)\notag\\
    &+\sum_{g\in \mathcal{G}}\prod_{\tiny jm\in
\begin{pmatrix}
[k] \\
2 \\
\end{pmatrix}\normalsize}\frac{a_{g(jm)}}{a_j+a_m}\mu(x_1\vee_g x_3,x_2\vee_g t_2^2).
\end{align}
This completes the proof of the case (\ref{thm6:case2}).

We finally consider the case (\ref{thm6:case3}). Let $\omega$ be the fractional polymorphism constructed in the case (\ref{thm6:case2}), where $\mathcal{F}=\{X',Y'\}$ and $|X'|\geq 3,|Y'|\leq 2$. For any operation $\varphi$ in $\mathrm{supp}(\omega)$ and any $x_1,x_2\in X',\ y_1,y_2\in Y'$, we have
\begin{align}
    &\varphi(x_1,r),\varphi(r,x_1),\varphi(x_1,x_2)\in X'\cup \{r\},\notag\\
    &\varphi(y_1,r),\varphi(r,y_1),\varphi(y_1,y_2)\in Y'\cup \{r\}.\notag
\end{align}
Hence, $\omega$ also works for the case of $\mathcal{F}=\{X'\}$ or $\mathcal{F}=\{Y'\}$, which completes the proof.

\section{Proof of hardness}
\label{sec:proof of hardness}
In this section, we give proofs of Theorem \ref{thm:directed-nph-extend} and Theorem \ref{thm:directed-nph-new}. We prove them with the aid of Proposition \ref{prop:mcNP-hard}; for a directed metric $\mu$ satisfying the condition in Theorem \ref{thm:directed-nph-extend} or Theorem \ref{thm:directed-nph-new}, and the corresponding constraint language $\Gamma$ defined in (\ref{0-ext-->vcsp}), we show that $\Gamma$ satisfies the condition (MC). To prove this, we construct a ``gadget'' which is a counterexample to submodularity of the objective function of $\overrightarrow{\textbf{0}}$\textbf{-Ext}$[\mu]$ (in a certain sense). This type of hardness proof using a gadget and the condition (MC) originates from the hardness proof of the multiterminal cut problem \cite{dahlhaus1994}. Karzanov \cite{karzanov1998, karzanov2004} also adopted the similar approach to show the hardness results of \textbf{0-Ext}$[\mu]$ (Theorem \ref{thm:undirected-nph}). We apply this type of hardness proof to the directed version $\overrightarrow{\textbf{0}}$\textbf{-Ext}$[\mu]$. We first describe the sufficient condition for $\Gamma$ defined in (\ref{0-ext-->vcsp}) to satisfy the (MC) condition in Section \ref{hardness:approach}. For the next, we prove NP-hardness of $\overrightarrow{\textbf{0}}$\textbf{-Ext}$[\mu]$ by using this sufficient condition in Section \ref{hardness:proofs}.

\subsection{Approach}
\label{hardness:approach}
Let $\mu$ be a rational-valued directed metric on $T$. Suppose that we are given $V\supseteq T$ and $c:V\times V\rightarrow \mathbb{Q}_+$ as an instance of $\overrightarrow{\textbf{0}}$\textbf{-Ext}$[\mu]$. For $s_0,s_1,\ldots,s_k\in T$ and $x_0,x_1,\ldots,x_k\in V\setminus T$, we denote by $\tau_{c} (s_0,x_0|s_1,x_1|\cdots|s_k,x_k)$ the optimal value of $\overrightarrow{\textbf{0}}$\textbf{-Ext}$[\mu]$ subject to $\gamma(x_0)=s_0,\gamma(x_1)=s_1,\ldots,\gamma(x_k)=s_k$. We simply denote $\tau(s_0,x_0|s_1,x_1|\cdots|s_k,x_k):=\tau_{c} (s_0,x_0|s_1,x_1|\cdots|s_k,x_k)$ if $c$ is clear in the context. Let $\tau^*$ be the optimal value of $\overrightarrow{\textbf{0}}$\textbf{-Ext}$[\mu]$. Similar to the constructions in \cite{dahlhaus1994, karzanov1998, karzanov2004}, we desire a pair (called \textit{gadget}) $(V,c)$ satisfying the following properties (in other words, ``violates submodularity,'' cf. \cite{dahlhaus1994}) for specified elements $s,t\in T$ and $x,y\in V\setminus T$.
\begin{align}
\label{nph-condition}
\mathrm{(i)}& \mathrm{\ \ \ }\tau(s,x|t,y)=\tau(t,x|s,y)=\tau^*,\notag\\
\mathrm{(ii)}& \mathrm{\ \ \ }\tau(s,x|s,y)=\tau(t,x|t,y)=\tau^*+\delta \mathrm{\ \ \ }\mathrm{for\ some\ \delta>0},\\
\mathrm{(iii)}& \mathrm{\ \ \ }\tau(s',x|t',y)\geq \tau^*+\delta \mathrm{\ \ \ }\mathrm{for\ all\ other\ pairs}\ (s',t')\in T\times T.\notag
\end{align}
Let $\Gamma$ be a constraint language defined in (\ref{0-ext-->vcsp}). We show that the existence of a gadget $(V,c)$ satisfying (\ref{nph-condition}) implies NP-hardness of VCSP$[\Gamma]$. Suppose that there exists a gadget $(V,c)$ satisfying (\ref{nph-condition}) for $s,t\in T$ and $x,y\in V\setminus T$. Then, we define a function $f:T\times T\rightarrow \mathbb{Q}_+$ as follows:
\begin{align}
    f(t_1,t_2): =\tau(t_1,x|t_2,y)\ \ \ (t_1,t_2\in T).
\end{align}
We have $f\in \langle \Gamma \rangle$ and $\mathrm{argmin}\ f=\{(s,t),(t,s)\}$. Hence, $\Gamma$ satisfies the condition (MC), which implies that VCSP$[\Gamma]$ is strongly NP-hard by Proposition \ref{prop:mcNP-hard}. 

\subsection{Proofs}
\label{hardness:proofs}
In this section, we show Theorem \ref{thm:directed-nph-extend} and Theorem \ref{thm:directed-nph-new} by constructing gadgets $(V,c)$ satisfying (\ref{nph-condition}). We can state Theorem \ref{thm:directed-nph-extend} in the following equivalent form:
\begin{theorem}
\label{thm:directed-nph-extend2}
\textit{Let} $\mu$ \textit{be a rational-valued directed metric.} $\overrightarrow{\textbf{0}}$\textbf{-Ext}$[\mu]$ \textit{is strongly NP-hard if one of the following conditions holds:} 
\begin{enumerate}[label=(\roman*),ref=\roman*]
    \item $\mu$ \textit{is not modular.}
    \label{thm:directed-nph-extend2-case1}
    \item $\mu$ \textit{is modular and not directed orbit-invariant.}
    \label{thm:directed-nph-extend2-case2}
    \item $\mu$ \textit{is modular and directed orbit-invariant, and} $H_\mu$ \textit{is not orientable.}
    \label{thm:directed-nph-extend2-case3}
\end{enumerate}
\end{theorem}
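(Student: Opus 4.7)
The plan is to apply the gadget framework described in Section \ref{hardness:approach}: for each of the three hardness conditions, I would exhibit a gadget $(V,c)$ satisfying (\ref{nph-condition}) for suitable $s,t\in T$ and $x,y\in V\setminus T$. Once such a gadget is produced, the polynomial-time reduction from MAX CUT in Section \ref{hardness:approach} immediately yields strong NP-hardness.

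For case (\ref{thm:directed-nph-extend2-case1}), when $\mu$ is not modular, I would pick a triple $s_0,s_1,s_2\in T$ admitting no median and build a small gadget on a vertex set $V = T\cup\{x,y\}$ (plus a few auxiliary vertices) with capacities tuned so that the optimal extension pins $(\gamma(x),\gamma(y))$ to one of the two ordered pairs $(s_0,s_1)$ or $(s_1,s_0)$, while any other assignment is penalized by a strictly positive slack $\delta$ coming from the absence of a median (every ``reroute'' through a would-be median $z\in T$ incurs some strict triangle inequality). Verifying (\ref{nph-condition}) reduces to checking finitely many triangle-type inequalities on $\mu$ over the candidate values of $(\gamma(x),\gamma(y))$. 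This extends Karzanov's construction from the undirected setting almost verbatim, once $c$ is split into the two directed components.

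For case (\ref{thm:directed-nph-extend2-case2}), by the contrapositive of Lemma \ref{lem:orbitvariant} the graph $H_\mu$ contains a directed orbit-varying modular cycle $(p,q,r,s,p)$, with common positive gap $k=\mu(p,q)-\mu(s,r)=\mu(r,s)-\mu(q,p)=\mu(p,s)-\mu(q,r)=\mu(r,q)-\mu(s,p)$. I would take $(s,t):=(p,r)$ and place capacities on the edges $(x,t)$, $(t,x)$, $(y,s)$, $(s,y)$ (for various $t,s\in T$) so that the ``anti-aligned'' assignments $(\gamma(x),\gamma(y))\in\{(p,r),(r,p)\}$ reach the minimum $\tau^*$ and the ``aligned'' ones $(p,p),(r,r)$ exceed it by exactly $\delta$ proportional to $k$; diagonal assignments routed through $q$ or $s$ would be forced to be at least as expensive, again by modularity combined with the orbit gap. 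The orbit identity among the four edges is precisely what makes the two anti-aligned extensions tie.

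For case (\ref{thm:directed-nph-extend2-case3}), the directed metric satisfies Lemma \ref{lem:d-shortest->mu-shortest}, so distances in $\mu$ agree, up to an orbit-constant scaling, with graph distances in $H_\mu$. Non-orientability of $H_\mu$ then supplies a closed walk of 4-cycles around which any candidate orientation twists. Following Karzanov's strategy, I would place this non-orientable ``necklace'' as the backbone of the gadget and use the orbit-invariance of $\mu$ to compute all required distances symbolically, so that two globally anti-aligned labellings of the backbone hit the optimum while any other labelling must flip the orientation somewhere and pay a positive orbit-length penalty $\delta$. The main obstacle is here: producing the gadget and verifying (\ref{nph-condition}) requires carefully tracking the orientation obstruction around a potentially long cycle of 4-cycles, choosing rational capacities so that the two target assignments are exactly tied, and showing that the slack $\delta$ is bounded below independently of the instance (so that MAX CUT remains strongly NP-hard under the reduction). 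Cases (\ref{thm:directed-nph-extend2-case1}) and (\ref{thm:directed-nph-extend2-case2}) are essentially local --- the bad configuration sits on a constant-size subset of $T$ --- whereas case (\ref{thm:directed-nph-extend2-case3}) is genuinely global and will absorb the bulk of the argument.
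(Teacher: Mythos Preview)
Your overall framework---build a gadget satisfying (\ref{nph-condition}) in each case and invoke the MAX CUT reduction of Section~\ref{hardness:approach}---matches the paper exactly. But your assessment of where the difficulty lies is inverted, and your sketch for case~(\ref{thm:directed-nph-extend2-case2}) has a real gap.

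You call cases~(\ref{thm:directed-nph-extend2-case1}) and~(\ref{thm:directed-nph-extend2-case2}) ``essentially local'' and predict case~(\ref{thm:directed-nph-extend2-case3}) will absorb the bulk of the work. In the paper it is the opposite. Case~(\ref{thm:directed-nph-extend2-case3}) is handled by a clean necklace gadget with $2k$ auxiliary vertices $z_0,\ldots,z_{2k-1}$ attached to the projective sequence of edges; directed orbit-invariance makes all the relevant $\mu$-values equal along the necklace, and the verification of (\ref{nph-condition}) is a short parity argument (any assignment with $\gamma(z_0)=\gamma(z_k)$ must ``flip'' at two positions, each costing an extra $2h$). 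There is no issue with $\delta$ depending on the instance: $\delta=4h$ where $h$ is the common orbit edge-length, a fixed rational depending only on $\mu$. Case~(\ref{thm:directed-nph-extend2-case1}) is also not as direct as you suggest: the paper does not build a two-vertex gadget but first proves an intermediate lemma (Lemma~\ref{lem:nphcondition2}) giving a six-vertex criterion (\ref{nphcondition2}), and then constructs a gadget satisfying that criterion using a minimum-perimeter medianless triple together with a claim controlling where points of $I(s_{i-1},s_{i+1})\cap I(s_{i+1},s_{i-1})$ can sit relative to $s_i$.

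The genuine gap is case~(\ref{thm:directed-nph-extend2-case2}). You propose to take $(s,t)=(p,r)$, the diagonal pair of the orbit-varying $4$-cycle, and assert that ``diagonal assignments routed through $q$ or $s$ would be forced to be at least as expensive, again by modularity combined with the orbit gap.'' This is precisely the step that does not go through by a simple capacity choice: in a directed orbit-varying modular cycle the four edge-lengths in each direction are genuinely asymmetric, and a two-vertex gadget does not separate the aligned assignments from the anti-aligned ones by a uniform $\delta$. The paper instead uses the \emph{adjacent} pair $(s_2,s_3)$ and builds a five-layer gadget on eight auxiliary vertices $x_0,x_1,y_0,y_1,z_0,z_1,w_0,w_1$, with cost functions $c_0,\ldots,c_4$ scaled by $1\ll N_1\ll N_2\ll N_3\ll N_4$. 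The top layers successively force $\gamma(x_i)\in\{s_2,s_3\}$, $\gamma(z_i)\in\{s_0,s_1\}$, $\gamma(y_i)\in\{s_0,s_1,s_2,s_3\}$, then pin $\gamma(y_0)\in\{s_1,s_3\}$ and $\gamma(y_1)\in\{s_0,s_2\}$ via a carefully engineered $c_3$ whose coefficients are cubic polynomials in $k$ and the four base edge-lengths; only at the bottom layer $c_0$ does the orbit gap $k$ produce the required strict separation. Your sketch contains none of this machinery, and the hand-wave ``modularity combined with the orbit gap'' does not substitute for it.
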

We prove each case of Theorem \ref{thm:directed-nph-extend2} and Theorem \ref{thm:directed-nph-new}.

\subsubsection{Proof of Theorem \ref{thm:directed-nph-extend2} for the case (\ref{thm:directed-nph-extend2-case1})}
We first show the following lemma, which originates from the proof of Theorem \ref{thm:undirected-nph} (\ref{dichotomy:not modular}) in \cite{karzanov2004}.
\begin{lemma}
\label{lem:nphcondition2}
\textit{Let} $\mu$ \textit{be a rational-valued directed metric on }$T$. \textit{If there exists a pair} $(V,c)$ \textit{which satisfies the following properties for a non-collinear triple} $(s_0,s_1,s_2)$ \textit{in} $T$ \textit{and distinct elements} $z_0,z_1,\ldots,z_5\in V\setminus T$\textit{, then} $\overrightarrow{\textbf{0}}$\textbf{-Ext}$[\mu]$ \textit{is strongly NP-hard.}
\begin{align}
\label{nphcondition2}
\mathrm{(i)}& \mathrm{\ \ \ }\tau(s_{i_0+1},z_0|s_{i_1-1},z_1|s_{i_2},z_2|s_{i_3+1},z_3|s_{i_4-1},z_4|s_{i_5},z_5)=\tau^*\mathrm{\ \ \ }(i_j\in \{0,1\}\mathrm{\ \textit{for\ each}\ }j),\notag\\
\mathrm{(ii)}& \mathrm{\ \ \ }\tau(s'_0,z_0|s'_1,z_1|s'_2,z_2|s'_3,z_3|s'_4,z_4|s'_5,z_5)\geq \tau^*+\delta \notag \\&\mathrm{\ \ \ }\mathrm{\textit{for\ all\ other\ sextuplets\ }}s_0',s_1',s_2',s_3',s_4',s_5'\in T\mathrm{\textit{\ and\ some}\ \delta>0},
\end{align}
\textit{where the indices of }$s_i$ \textit{are taken modulo }3.
\end{lemma}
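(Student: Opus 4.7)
The plan is a polynomial-time reduction from MAX CUT to $\overrightarrow{\textbf{0}}$\textbf{-Ext}$[\mu]$, following the gadget-replication template of Section~\ref{hardness:approach} but instantiated with the six-terminal gadget supplied by the hypothesis. Given an instance $(G=(U,E),k)$ of MAX CUT, for each vertex $u\in U$ I would introduce three auxiliary ``slot'' vertices $u^0,u^1,u^2\notin T$, and for each edge $uv\in E$ insert a private copy of the gadget $(V,c)$ with the identifications $z_j\sim u^j$ and $z_{j+3}\sim v^j$ for $j=0,1,2$. Under this construction, the total objective of an assignment $\gamma$ decomposes as a sum of $|E|$ independent per-edge gadget contributions, and by conditions~(i) and~(ii) each such contribution equals $\tau^*$ exactly when both endpoint triples $(\gamma(u^0),\gamma(u^1),\gamma(u^2))$ and $(\gamma(v^0),\gamma(v^1),\gamma(v^2))$ belong to the eight ``good'' triples (in which position $j$ avoids $s_j$), and is at least $\tau^*+\delta$ otherwise.

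To encode the cut, I then need to engineer the auxiliary structure so that the per-vertex triple choice collapses to one of the two cyclic orientations $(s_1,s_2,s_0)$ and $(s_2,s_0,s_1)$, giving each vertex a Boolean ``colour'', and so that ``aligned'' edges (same colour at both endpoints) cost an extra $\delta$ relative to ``cut'' edges (opposite colours). My plan is to add a forcing layer of weighted unary cost terms $Cg_t,Ch_t$ from $\Gamma$, together with a few additional between-slot interaction terms, whose total weight per vertex and per edge is strictly smaller than $\delta$. The non-collinearity of $(s_0,s_1,s_2)$ is the key structural ingredient here: since no single element of $T$ can simultaneously serve as a median for the three $s_i$'s, the six non-cyclic good triples leave strictly positive asymmetries that the unary layer can amplify into a bias toward the two cyclic triples, and similar positive margins exist between aligned and cut joint configurations on each edge.

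With both of these augmentations in place, the minimum objective of the constructed instance equals $|E|\tau^*+(|E|-\kappa)\delta$ plus a globally constant offset, where $\kappa$ is the maximum cut of $G$. A threshold test on this minimum therefore decides whether $G$ has a cut of size at least $k$. Since the gadget $(V,c)$ has fixed size, the reduction is polynomial in the size of $(G,k)$, so strong NP-hardness of $\overrightarrow{\textbf{0}}$\textbf{-Ext}$[\mu]$ follows from strong NP-hardness of MAX CUT.

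The main obstacle is the simultaneous calibration of the auxiliary weights: both the forcing layer and the cut-sensitive penalty must lie strictly below the $\delta$-gap of condition~(ii), yet be large enough to induce the Boolean collapse and the $\delta$ extra cost on aligned edges. Verifying that such a window of feasible weights is non-empty reduces to a case analysis over the eight good triples (for the collapse) and over the four cyclic-versus-cyclic joint sextuplets on each edge (for the penalty), with non-collinearity of $(s_0,s_1,s_2)$ entering quantitatively to produce the required strict margins. Once this calibration is established, the MAX CUT correspondence is mechanical, and the reduction completes the proof.
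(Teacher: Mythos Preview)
Your plan has a genuine gap at its core. The hypothesis gadget gives cost exactly $\tau^*$ for \emph{all} $2^6$ good sextuplets, i.e.\ for every choice with $\gamma(z_j)\in\{s_{j-1},s_{j+1}\}$. In particular, if both endpoint triples are ``cyclic'' but equal (your \emph{aligned} case), the resulting sextuplet is still good and the gadget contributes $\tau^*$, not $\tau^*+\delta$. So the $\delta$ in your formula $|E|\tau^*+(|E|-\kappa)\delta$ cannot come from the gadget; it would have to come from your auxiliary terms. But you simultaneously require those auxiliary terms to have total weight strictly below $\delta$, which is a contradiction. More concretely, the unary layer you describe cannot even separate the two cyclic triples from the six non-cyclic ones: writing $a_j,b_j$ for the two values of your unary cost at slot $j$, the triple $(s_1,s_2,s_1)$ forces $b_2>a_2$ while $(s_2,s_0,s_0)$ forces $a_2>b_2$. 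The ``between-slot interaction terms'' you allude to are left unspecified, and simple choices (e.g.\ the symmetric sum $\sum_{i<j}[\mu(u^i,u^j)+\mu(u^j,u^i)]$) go the wrong way by the triangle inequality. Finally, non-collinearity does not mean ``no median''; it means $s_i\notin I(s_{i-1},s_{i+1})\cap I(s_{i+1},s_{i-1})$, i.e.\ strict triangle inequalities in the symmetrized distance, which is a different (and weaker) statement.

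The paper's proof avoids all of this by not trying to collapse per vertex. It keeps the six $z_i$ in a single per-edge gadget and adds one explicit layer of binary costs $c'(z_i,z_{i+1})=c'(z_{i+1},z_i)=h_{i-1}$, where $h_i=(\mu_{i-1}+\mu_{i+1}-\mu_i)/2$ and $\mu_i=\mu(s_{i-1},s_{i+1})+\mu(s_{i+1},s_{i-1})$; non-collinearity is used precisely to ensure $h_i>0$. One then checks directly that the combined gadget $(V,Nc+c')$ satisfies the two-vertex condition~(\ref{nph-condition}) with $s=s_0$, $t=s_2$, $x=z_1$, $y=z_4$, with gap $\alpha=2\min_i h_i^2>0$; the standard reduction of Section~\ref{hardness:approach} then applies verbatim. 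If you want to repair your outline, the cleanest route is to abandon the three-slot-per-vertex picture and instead specify concrete cross-$z_i$ weights that reduce to~(\ref{nph-condition}); the $h_i$-weighted $6$-cycle is exactly such a choice.
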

\begin{proof}
Let $(V,c)$ be a pair which satisfies (\ref{nphcondition2}) with respect to a non-collinear triple $(s_0,s_1,s_2)$ in $T$ and distinct elements $z_i\in V\setminus T\ (i=0,\ldots,5)$. We show NP-hardness of $\overrightarrow{\textbf{0}}$\textbf{-Ext}$[\mu]$ by constructing a gadget based on $(V,c)$ which satisfies (\ref{nph-condition}). Let $\mu_i:=\mu(s_{i-1},s_{i+1})+\mu(s_{i+1},s_{i-1})$ and $h_i:=(\mu_{i-1}+\mu_{i+1}-\mu_i)/2$ for $i=0,1,2$ (the indices of $\mu_i$ are taken modulo 3). Then we define a function $c':V\times V\rightarrow \mathbb{Q}_+$ as follows (see Figure 5):
\begin{align}
    &c'(z_i,z_{i+1})=c'(z_{i+1},z_i):=h_{i-1}\mathrm{\ \ \ }(0\leq i\leq 5).
\end{align}
\begin{figure}[tbp]
\begin{center}
\begin{overpic}[width=14cm]{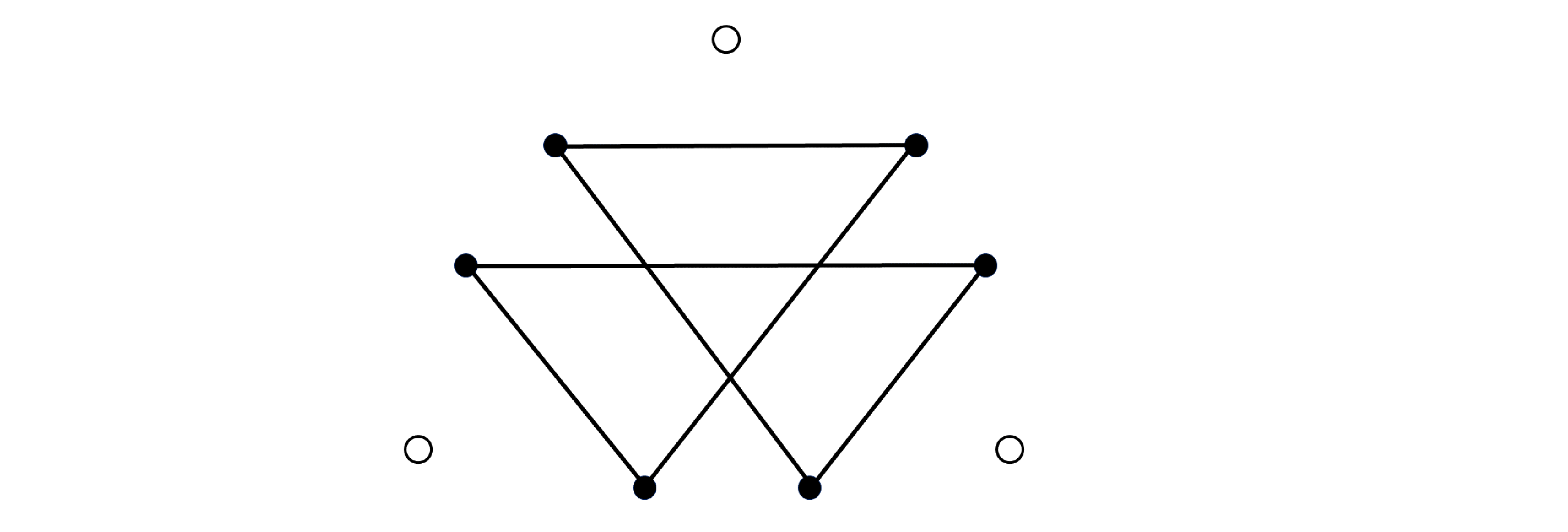}
\put(23,4){$s_0$}
\put(42.5,32){$s_1$}
\put(66,4){$s_2$}
\put(37,2){$z_1$}
\put(53,2){$z_4$}
\put(26,16.5){$z_2$}
\put(64,16.5){$z_3$}
\put(32,24){$z_5$}
\put(60,24){$z_0$}
\end{overpic}
\caption{the function $c'$}
\end{center}
\end{figure}Here the indices of $z_i$ are taken modulo 6, and the indices of $h_i$ are taken modulo 3. Also we define $c'(v_0,v_1):=0$ for all other pairs $(v_0,v_1)\in V\times V$ (undefined values of other functions below are also 0). Let $N$ be a sufficiently large positive rational (for example, $N:=1/\delta+4(h_0+h_1+h_2)\cdot\mathrm{max}\{\mu(s,t)\mid s,t\in T\}/\delta$). We define a function $\tilde{c}$ by $\tilde{c}:=Nc+c'$. Let $s:=s_0,t:=s_2,x:=z_1,y:=z_4$. We now show that the pair $(V,\tilde{c})$ satisfies (\ref{nph-condition}) with respect to $s,t,x,y$. For $r\neq s_{i-1},s_{i+1}$, the value $\tau(r,z_i)$ is so large that a map $\gamma$ with $\gamma(z_i)=r$ is not
optimal or nearly optimal, since $(V,c)$ satisfies (\ref{nphcondition2}) and $N$ is sufficiently large. We call such a map \textit{infeasible}. Therefore, it suffices to consider the case when $\gamma(z_i)\in \{s_{i-1},s_{i+1}\}$ for every $i$. Let $\rho:=2(h_0h_1+h_1h_2+h_2h_0),\ \alpha:=2\mathrm{min}\{h_0^2,h_1^2,h_2^2\}$. Without loss of generality, we may assume that $\alpha=2h_0^2$. In the case of $\gamma(x)=s$ and $\gamma(y)=t$, we have
\begin{align}
\label{eq:optimalassign1}
\tau_{\tilde{c}}(s_1,z_0|s_0,z_1|s_0,z_2|s_2,z_3|s_2,z_4|s_1,z_5)&=h_1\mu_1+h_2\mu_2+h_0\mu_0+N\tau_c^*\notag\\
&=h_1(h_2+h_0)+h_2(h_0+h_1)+h_0(h_1+h_2)+N\tau_c^*\notag\\
&=\rho+N\tau_c^*,
\end{align}
where $\tau_c^*$ is the optimal value of $\overrightarrow{\textbf{0}}$\textbf{-Ext}$[\mu]$ with respect to $(V,c)$. Similarly, in the case of $\gamma(x)=t$ and $\gamma(y)=s$, we have
\begin{align}
\label{eq:optimalassign2}
\tau_{\tilde{c}}(s_2,z_0|s_2,z_1|s_1,z_2|s_1,z_3|s_0,z_4|s_0,z_5)=\rho+N\tau_c^*.
\end{align}
On the other hand, in the case of $\gamma(x)=\gamma(y)=t$, we have
\begin{align}
    \tau_{\tilde{c}}(s_2,z_0|s_2,z_1|s_0,z_2|s_2,z_3|s_2,z_4|s_0,z_5)&=(h_0+h_0+h_1+h_1)\mu_1+N\tau_c^*\notag\\
    &=2(h_0+h_1)(h_2+h_0)+N\tau_c^*\notag\\
    &=\alpha+\rho+N\tau_c^*.
\end{align}
Also, in the case of $\gamma(x)=\gamma(y)=s$, we have
\begin{align}
    \tau_{\tilde{c}}(s_1,z_0|s_0,z_1|s_1,z_2|s_1,z_3|s_0,z_4|s_1,z_5)&=(h_2+h_2+h_0+h_0)\mu_2+N\tau_c^*\notag\\
    &=2(h_2+h_0)(h_0+h_1)+N\tau_c^*\notag\\
    &=\alpha+\rho+N\tau_c^*.
\end{align}

Let $\tau_{c}(\gamma)$ denote the value of the objective function of $\overrightarrow{\textbf{0}}$\textbf{-Ext}$[\mu]$ with a map $\gamma$, where the input is $(V,c)$. We simply denote $\tau(\gamma):=\tau_c(\gamma)$ if $c$ is clear in the context. To finish the proof, we show that $\tau_{\tilde{c}}(\gamma)\geq \alpha+\rho+N\tau_c^*$ if a map $\gamma$ is distinct from the assignments in (\ref{eq:optimalassign1}) and (\ref{eq:optimalassign2}). Let $\xi$ be the contribution to the value $\tau_{\tilde{c}}(\gamma)$ from $c'$. The value $\xi$ is represented by a sum of $h_ih_j\ (0\leq i,j\leq 2)$. Let $g_i$ be the contribution to the value $\tau_{\tilde{c}}(\gamma)$ from $c'(z_i,z_{i+1})$ and $c'(z_{i+1},z_i)$ for each $i$. We have the following four cases:
\begin{enumerate}[label=(\roman*),ref=\roman*]
    \item $\gamma(z_{i})=\gamma(z_{i+1})=s_{i-1}$,
    \label{nphcondition2:case1}
    \item $\gamma(z_{i})=s_{i+1},\ \gamma(z_{i+1})=s_{i}$,
    \label{nphcondition2:case2}
    \item $\gamma(z_{i})=s_{i+1},\ \gamma(z_{i+1})=s_{i-1}$,
    \label{nphcondition2:case3}
    \item $\gamma(z_{i})=s_{i-1},\ \gamma(z_{i+1})=s_{i}$.
    \label{nphcondition2:case4}
\end{enumerate}
We have $g_i=0,\ g_i=h_{i-1}h_{i}+h_{i-1}h_{i+1},\ g_i=h_{i-1}h_{i+1}+h_{i-1}^2,\ g_i=h_{i-1}h_i+h_{i-1}^2$ in the cases of (\ref{nphcondition2:case1}), (\ref{nphcondition2:case2}), (\ref{nphcondition2:case3}), (\ref{nphcondition2:case4}), respectively. We call a pair $z_iz_{i+1}$ \textit{slanting} if $z_i$ and $z_{i+1}$ satisfy (\ref{nphcondition2:case3}) or (\ref{nphcondition2:case4}). If $z_{i}z_{i+1}$ is not slanting for any $i$ with respect to $\gamma$, then $\gamma$ corresponds to the assignment in (\ref{eq:optimalassign1}) or (\ref{eq:optimalassign2}). If $z_iz_{i+1}$ is slanting for some $i\in\{0,\ldots,5\}$, then $z_jz_{j+1}$ is also slanting for another $j\in \{0,\ldots,5\}$. Hence, $\xi$ contains $h_{i-1}^2+h_{j-1}^2\geq \alpha$ in its representation. Also, focusing on the pairs $z_0z_1$ and $z_1z_2$, we can see that $g_0$ includes $h_2h_0$ in its representation when $\gamma(z_1)=s_0$, and $g_1$ includes $h_2h_0$ when $\gamma(z_1)=s_2$. Similarly, we can see that $g_i$ or $g_{i+1}$ includes $h_{i-1}h_i$ for each $i\in \{0,\ldots,5\}$. This completes the proof.
\end{proof}

We now show Theorem \ref{thm:directed-nph-extend2} for the case (\ref{thm:directed-nph-extend2-case1}) by use of Lemma \ref{lem:nphcondition2}. The proof we describe below is a directed version of that of Theorem \ref{thm:undirected-nph} (\ref{dichotomy:not modular}) in \cite{karzanov2004}. Let $\mu$ be a nonmodular rational-valued directed metric on $T$. For $x,y,z\in T$, we denote $\Delta(x,y,z):=\mu(x,y)+\mu(y,x)+\mu(y,z)+\mu(z,y)+\mu(z,x)+\mu(x,z)$. Let $(s_0,s_1,s_2)$ be a medianless triple such that $\Delta(s_0,s_1,s_2)$ is minimum. 
Let $\bar{\Delta}:=\Delta(s_0,s_1,s_2)$. Take six elements $z_0,z_1,\ldots,z_5$, and let $V:=T\cup \{z_0,z_1,\ldots,z_5\}$. Let $\mu_i:=\mu(s_{i-1},s_{i+1})+\mu(s_{i+1},s_{i-1})$ and $a_i:=(\mu_{i-1}+\mu_{i+1}-\mu_i)/\mu_{i-1}\mu_{i+1}$ for $i=0,1,2$, where the indices of $s_i$ and $\mu_i$ are taken modulo 3. Then we define a function $c:V\times V\rightarrow \mathbb{Q}_+$ as follows:
\begin{align}
    &c(s_i,z_{i+1})=c(z_{i+1},s_i)=1\mathrm{\ \ \ }(0\leq i\leq 5),\notag\\
    &c(s_i,z_{i+2})=c(z_{i+2},s_i)=1\mathrm{\ \ \ }(0\leq i\leq 5),
\end{align}
where the indices of $z_i$ are taken modulo 6. Also we define a function $c':V\times V\rightarrow \mathbb{Q}_+$ as follows:
\begin{align}    
&c'(s_i,z_j)=c'(z_j,s_i)=a_i\mathrm{\ \ \ }(0\leq i\leq 2,\ 0\leq j\leq 5).
\end{align}
Let $N$ be a sufficiently large positive rational. We define a function $\tilde{c}$ by $\tilde{c}:=Nc+c'$. We now show that the pair $(V,\tilde{c})$ satisfies (\ref{nphcondition2}). 
We first observe that $\tau_{\tilde{c}}(\gamma)$ is not the optimal or nearly optimal value if $\gamma(z_i)\notin I(s_{i-1},s_{i+1})\cap I(s_{i+1},s_{i-1})$ for some $i$. Consider the case when $\gamma(z_i)\in I(s_{i-1},s_{i+1})\cap I(s_{i+1},s_{i-1})$ holds for each $i$. We show the following claim:
\begin{claim}
\label{claim1}
\textit{Let} $x\in I(s_{i-1},s_{i+1})\cap I(s_{i+1},s_{i-1})$. \textit{Then at least one of the following conditions holds:}
\begin{enumerate}[label=(\roman*),ref=\roman*]
    \item \textit{Both of sequences} $(s_i,s_{i-1},x)$ \textit{and} $(x,s_{i-1},s_i)$ \textit{are} $\mu$\textit{-shortest.}
    \label{medianlesscondition1}
    \item \textit{Both of sequences} $(s_i,s_{i+1},x)$ \textit{and} $(x,s_{i+1},s_i)$ \textit{are} $\mu$\textit{-shortest.}
    \label{medianlesscondition2}
\end{enumerate}
\end{claim}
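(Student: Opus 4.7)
The plan is to argue by contradiction via a descent on $a(y):=\mu(s_i,y)+\mu(y,s_i)$ over the finite set $I:=I(s_{i-1},s_{i+1})\cap I(s_{i+1},s_{i-1})$, exploiting the minimality of $\bar\Delta=\Delta(s_0,s_1,s_2)$ among medianless triples. The key reduction will be that each of the auxiliary triples $(s_i,s_{i-1},x)$ and $(s_i,s_{i+1},x)$ has $\Delta$ strictly less than $\bar\Delta$ whenever condition~(ii) or~(i), respectively, fails, which forces them to admit medians.

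First I would compute $\Delta(s_i,s_{i-1},x)-\bar\Delta$. Decomposing $\mu_i=[\mu(s_{i-1},x)+\mu(x,s_{i-1})]+[\mu(x,s_{i+1})+\mu(s_{i+1},x)]$ from $x\in I$ and combining with the triangle inequalities $\mu(s_i,x)\le\mu(s_i,s_{i+1})+\mu(s_{i+1},x)$ and $\mu(x,s_i)\le\mu(x,s_{i+1})+\mu(s_{i+1},s_i)$ gives $\Delta(s_i,s_{i-1},x)\le\bar\Delta$, with equality exactly when both triangle inequalities are tight, i.e., precisely when~(ii) holds. A symmetric calculation with $s_{i+1}$ in place of $s_{i-1}$ yields $\Delta(s_i,s_{i+1},x)\le\bar\Delta$, with equality iff~(i) holds.

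Assume now that both~(i) and~(ii) fail for some $x\in I$, and pick such an $x$ with $a(x)$ minimum. Note $x\notin\{s_{i-1},s_{i+1}\}$ (since~(i) or~(ii) is trivial at these endpoints) and $s_i\notin I$ (else $s_i$ would be a median of $(s_0,s_1,s_2)$), so in particular $x\ne s_i$ and $a(x)>0$. By the preceding step the triples $(s_i,s_{i-1},x)$ and $(s_i,s_{i+1},x)$ each satisfy $\Delta<\bar\Delta$, so by minimality of $\bar\Delta$ each has a median, say $u$ and $v$ respectively. Using the median equations for $u$ together with $x\in I$ and the triangle inequality, I would show $u\in I$, and from $\mu(s_i,x)=\mu(s_i,u)+\mu(u,x)$ and $\mu(x,s_i)=\mu(x,u)+\mu(u,s_i)$ one reads off $a(u)=a(x)-[\mu(u,x)+\mu(x,u)]$; analogously for $v$. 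Moreover $u\ne s_{i-1}$ (otherwise the median equations specialise to condition~(i) for $x$, contradicting its failure) and $u\ne s_i$ (else the same argument forces $s_i\in I$), and by symmetry $v\notin\{s_{i+1},s_i\}$.

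The main technical case is $u=v=x$; here $x$ is itself a median of both auxiliary triples, which unpacks to $x\in I(s_i,s_{i-1})\cap I(s_{i-1},s_i)\cap I(s_i,s_{i+1})\cap I(s_{i+1},s_i)$, and combined with $x\in I$ this says $x$ is a median of $(s_0,s_1,s_2)$, a contradiction. Hence without loss of generality $u\ne x$; then $a(u)<a(x)$ and $u\in I\setminus\{s_{i-1},s_i,x\}$, so by minimality $u$ is not a counterexample and satisfies~(i) or~(ii). Finally I propagate the conclusion from $u$ to $x$: if~(i) holds for $u$, substituting $\mu(s_i,u)=\mu(s_i,s_{i-1})+\mu(s_{i-1},u)$ into $\mu(s_i,x)=\mu(s_i,u)+\mu(u,x)$ together with $\mu(s_{i-1},u)+\mu(u,x)=\mu(s_{i-1},x)$ yields $\mu(s_i,x)=\mu(s_i,s_{i-1})+\mu(s_{i-1},x)$, and symmetrically $\mu(x,s_i)=\mu(x,s_{i-1})+\mu(s_{i-1},s_i)$, so~(i) holds for $x$; if instead~(ii) holds for $u$, a parallel computation (plus one triangle inequality to place $u$ on the $s_{i+1}$-to-$x$ geodesic in both directions) propagates~(ii) to $x$. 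Either outcome contradicts the choice of $x$, completing the induction.
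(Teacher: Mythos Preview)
Your argument is correct, but it is considerably more elaborate than the paper's. The paper proceeds directly: assuming (ii) fails, it obtains $\Delta(s_{i-1},s_i,x)<\bar\Delta$ (exactly your first computation), takes a median $m$ of $(s_{i-1},s_i,x)$, and branches once. If $m=s_{i-1}$, the median equations give (i) immediately. If $m\neq s_{i-1}$, the paper shows $\Delta(s_i,m,s_{i+1})<\bar\Delta$ as well, takes a median $w$ of $(s_i,m,s_{i+1})$, and verifies that $w$ is then a median of $(s_0,s_1,s_2)$, contradicting the choice of the triple. No descent on $a(\cdot)$ and no second auxiliary median $v$ are needed; the whole proof is four lines.

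Your descent works, but one step is stated imprecisely. In the subcase where $u\neq x$ satisfies (ii), you write that ``one triangle inequality places $u$ on the $s_{i+1}$-to-$x$ geodesic.'' What the data actually gives (from $u\in I(s_{i-1},x)\cap I(x,s_{i-1})$ together with $u,x\in I$) is the reverse, namely $x\in I(s_{i+1},u)\cap I(u,s_{i+1})$. Your conclusion survives regardless: combining $\mu(s_i,x)=\mu(s_i,u)+\mu(u,x)$ (median) with $\mu(s_i,u)=\mu(s_i,s_{i+1})+\mu(s_{i+1},u)$ and the bare triangle inequality $\mu(s_{i+1},u)+\mu(u,x)\ge\mu(s_{i+1},x)$ yields $\mu(s_i,x)\ge\mu(s_i,s_{i+1})+\mu(s_{i+1},x)$, hence equality; the other direction is analogous, so (ii) does propagate to $x$. (In fact this subcase is vacuous: the same identities force $\mu(u,x)+\mu(x,u)=0$, i.e.\ $u=x$.) The paper's second-median trick sidesteps this whole case analysis.
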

\begin{proof}
Suppose that (\ref{medianlesscondition2}) does not hold. Let $i=1$. By the assumption, we have $\mu(s_1,x)<\mu(s_1,s_2)+\mu(s_2,x)$ or $\mu(x,s_1)<\mu(x,s_2)+\mu(s_2,s_1)$. Then we have $\Delta(s_0,s_1,x)<\Delta(s_0,s_1,s_2)$. Hence, there exists a median $m$ of $s_0,s_1,x$. If $m=s_0$, then (\ref{medianlesscondition1}) holds. If $m\neq s_0$, then we have
\begin{align}
    \Delta(s_1,m,s_2)&=\mu(s_1,s_2)+\mu(s_2,s_1)+\mu(s_1,m)+\mu(m,s_1)+\mu(s_2,m)+\mu(m,s_2)\notag
    \\&<\mu(s_1,s_2)+\mu(s_2,s_1)+\mu(s_1,s_0)+\mu(s_0,s_1)+\mu(s_2,x)+\mu(x,m)+\mu(m,x)+\mu(x,s_2)\notag\\
    &< \Delta(s_0,s_1,s_2).
\end{align}
Hence, there exists a median $w$ of $s_1,m,s_2$. However, $w$ is also a median of $s_0,s_1,s_2$, and this is a contradiction.
\end{proof}

For each $i\in \{0,1,\ldots,5\}$, let $g_i$ be the contribution to the value $\tau_{\tilde{c}}(\gamma)$ from $c'(z_i,s_0)$, $c'(z_i,s_1)$, $c'(z_i,s_2)$, $c'(s_0,z_i)$, $c'(s_1,z_i)$, $c'(s_2,z_i)$. If $\gamma(z_i)=s_0$, then we have $g_i=a_1\mu_2+a_2\mu_1=(\mu_0+\mu_2-\mu_1)/\mu_0+(\mu_1+\mu_0-\mu_2)/\mu_0=2$. Similarly, we have $g_i=2$ when $\gamma(z_i)=s_1$ or $s_2$. We next consider the case when $\gamma(z_i)\in I(s_{i-1},s_{i+1})\cap I(s_{i+1},s_{i-1})\setminus \{s_{i-1},s_{i+1}\}$. Let $i=0$ and $\epsilon:=\mu(s_1,\gamma(z_0))+\mu(\gamma(z_0),s_1)$. By Claim~\ref{claim1}, we may assume that $\mu(s_0,\gamma(z_o))+\mu(\gamma(z_0),s_0)=\mu_2+\epsilon$ holds. Hence, we have
\begin{align}
    g_0&=a_0(\mu_2+\epsilon)+a_1\epsilon+a_2(\mu_0-\epsilon)\notag\\&=a_0\mu_2+a_2\mu_0+\epsilon(a_0+a_1-a_2)\notag\\
    &=2+\epsilon(a_0+a_1-a_2).
\end{align}
Note that we have
\begin{align}
    \mu_0\mu_1\mu_2(a_0+a_1-a_2)&=\mu_0(\mu_1+\mu_2-\mu_0)+\mu_1(\mu_0+\mu_2-\mu_1)-\mu_2(\mu_0+\mu_1-\mu_2)\notag\\
    &=2\mu_0\mu_1-\mu_1^2-\mu_0^2+\mu_2^2\notag\\
    &=\mu_2^2-(\mu_0-\mu_1)^2>0.
\end{align}
Hence, we have $g_0>2$. Similarly, for each $i\in \{0,1,\ldots,5\}$, we have $g_i>2$ if $\gamma(z_i)\in I(s_{i-1},s_{i+1})\cap I(s_{i+1},s_{i-1})\setminus \{s_{i-1},s_{i+1}\}$. Hence, the gadget $(V,\tilde{c})$ satisfies (\ref{nphcondition2}). 

\subsubsection{Proof of Theorem \ref{thm:directed-nph-extend2} for the case (\ref{thm:directed-nph-extend2-case2})}
Since $\mu$ is modular and not directed orbit-invariant, $H_\mu$ contains a directed orbit-varying modular cycle by Lemma \ref{lem:orbitvariant}. Let $(s_0,s_1,s_2,s_3,s_0)$ be a directed orbit-varying modular cycle in $H_\mu$. Take eight elements $x_0,x_1,y_0,y_1,z_0,z_1,w_0,w_1$, and let $V:=T\cup \{x_0,x_1,y_0,y_1,z_0,z_1,w_0,w_1\}$. Without loss of generality, we may assume that $\mu(s_0,s_1)-\mu(s_3,s_2)=\mu(s_2,s_3)-\mu(s_1,s_0)=\mu(s_0,s_3)-\mu(s_1,s_2)=\mu(s_2,s_1)-\mu(s_3,s_0)=k>0$. We define a function $c_4:V\times V\rightarrow \mathbb{Q}_+$ as follows (also see Figure 6):
\begin{align}
    &c_4(s_2,x_i)=c_4(x_i,s_2)=c_4(s_3,x_i)=c_4(x_i,s_3)=1\mathrm{\ \ \ }(i=0,1),\notag\\
    &c_4(s_0,z_i)=c_4(z_i,s_0)=c_4(s_1,z_i)=c_4(z_i,s_1)=1\mathrm{\ \ \ }(i=0,1),\notag\\
    &c_4(s_j,y_i)=c_4(y_i,s_j)=1\mathrm{\ \ \ }(i=0,1,\ j=0,1,2,3),\notag\\
    &c_4(s_1,w_0)=c_4(w_0,s_1)=c_4(s_2,w_0)=c_4(w_0,s_2)=1,\notag\\
    &c_4(s_0,w_1)=c_4(w_1,s_0)=c_4(s_3,w_1)=c_4(w_1,s_3)=1.
\end{align}
\begin{figure}[tbp]
\begin{center}
\begin{overpic}[width=14cm]{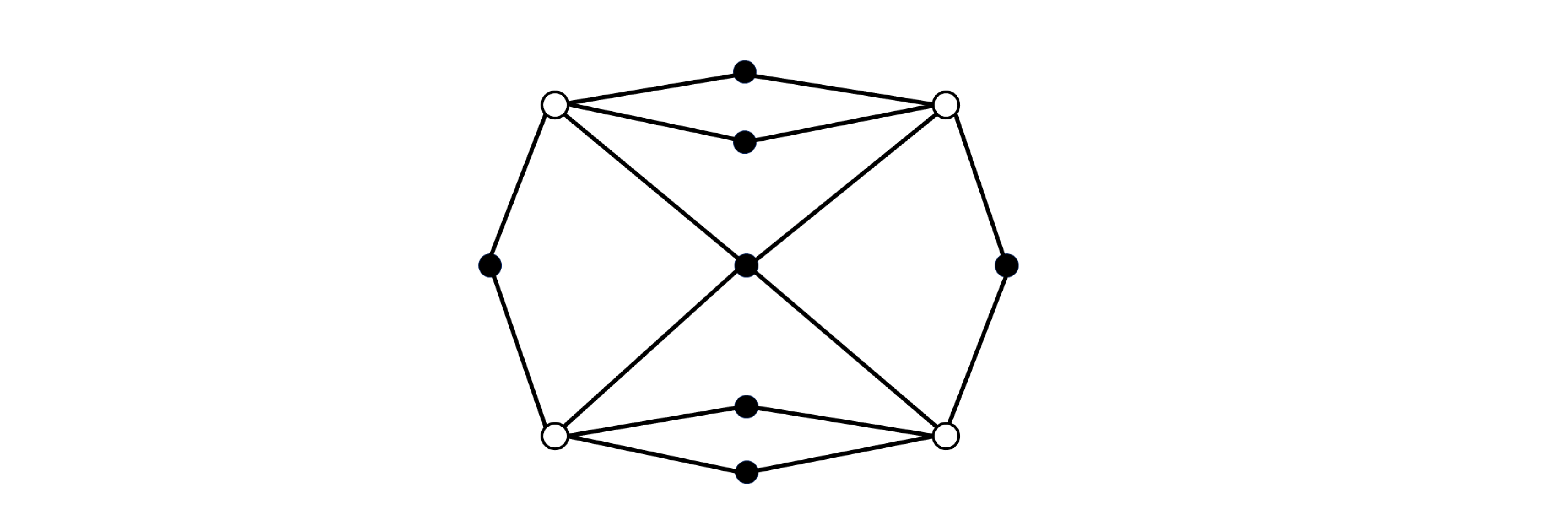}
\put(31.4,27){$s_0$}
\put(62,27){$s_1$}
\put(62,5){$s_2$}
\put(31.4,5){$s_3$}
\put(47,9.5){$x_0$}
\put(47,1){$x_1$}
\put(47,31.1){$z_0$}
\put(47,22){$z_1$}
\put(65.5,16.5){$w_0$}
\put(27,16.5){$w_1$}
\put(50,16.5){$y_i$}
\end{overpic}
\caption{the function $c_4$}
\end{center}
\end{figure}Let $\mu_{10}:=\mu(s_1,s_0),\ \mu_{12}:=\mu(s_1,s_2),\ \mu_{30}:=\mu(s_3,s_0),$ and $\mu_{32}:=\mu(s_3,s_2)$. We next define a function $c_3:V\times V\rightarrow \mathbb{Q}_+$ as follows:
\begin{align}
    &c_3(y_0,s_0)=k^2+\mu_{32}k+\mu_{32}\mu_{12},\notag\\
    &c_3(y_0,s_1)=(\mu_{10}+\mu_{12})k+\mu_{10}\mu_{12},\notag\\
    &c_3(y_0,s_2)=k^2+\mu_{10}k+\mu_{10}\mu_{30},\notag\\
    &c_3(y_0,s_3)=(\mu_{32}+\mu_{30})k+\mu_{32}\mu_{30},\notag\\
    &c_3(s_0,y_1)=(\mu_{10}+\mu_{30})k+\mu_{10}\mu_{30},\notag\\
    &c_3(s_1,y_1)=k^2+\mu_{32}k+\mu_{32}\mu_{30},\notag\\
    &c_3(s_2,y_1)=(\mu_{32}+\mu_{12})k+\mu_{32}\mu_{12},\notag\\
    &c_3(s_3,y_1)=k^2+\mu_{10}k+\mu_{10}\mu_{12}.
\end{align}
Then we define a function $c_2:V\times V\rightarrow \mathbb{Q}_+$ as follows:
\begin{align}
    &c_2(x_i,z_i)=c_2(z_i,x_i)=1\mathrm{\ \ \ }(i=0,1).
\end{align}
Also we define a function $c_1:V\times V\rightarrow \mathbb{Q}_+$ as follows:
\begin{align}
    &c_1(x_i,y_i)=c_1(y_i,x_i)=c_1(z_i,y_i)=c_1(y_i,z_i)=1\mathrm{\ \ \ }(i=0,1).
\end{align}
Finally, we define a function $c_0:V\times V\rightarrow \mathbb{Q}_+$ as follows:
\begin{align}
    &c_0(y_i,w_j)=c_0(w_j,y_i)=1\mathrm{\ \ \ }(i=0,1,\ j=0,1).
\end{align}
Let $N_1$ be a sufficiently large positive rational. In addition, let $N_i$ be a sufficiently large positive rational with respect to $N_{i-1}$ for $i=2,3,4$. We define a function $c$ by $c:=c_0+N_1c_1+N_2c_2+N_3c_3+N_4c_4$. We now show that the pair $(V,c)$ satisfies (\ref{nph-condition}) with respect to $s_2,s_3,x_0,x_1$. Focusing on the contribution from $N_4c_4$, we see that $\gamma$ is infeasible if $\gamma(x_i)\neq s_2,s_3$ for some $i$. Similarly, $\gamma$ is infeasible if $\gamma(z_i)\neq s_0,s_1$ holds for some $i$, or $\gamma(w_0)\neq s_1,s_2$ holds, or $\gamma(w_1)\neq s_0,s_3$ holds. Hence, it suffices to consider the case when $\gamma(x_i)\in \{s_2,s_3\}$ and $\gamma(z_i)\in \{s_0,s_1\}$ hold for each $i$, or the case when $\gamma(w_0)\in \{s_1,s_2\}$ and $\gamma(w_1)\in \{s_0,s_3\}$ hold. In addition, $\gamma$ is infeasible if $\gamma(y_i)\notin I(s_0,s_2)\cap I(s_2,s_0)\cap I(s_1,s_3)\cap I(s_3,s_1)$. Note that $H_\mu$ is modular by Lemma \ref{lemma:mu is modular -> H is modular}, which implies that edges $\{s_0,s_2\}$ and $\{s_1,s_3\}$ are not contained in $H_\mu$ due to the condition of Lemma~\ref{modularcondition}~(i). If $\gamma(y_i)\in I(s_0,s_2)\cap I(s_2,s_0)\cap I(s_1,s_3)\cap I(s_3,s_1)$ and $\gamma(y_i)\notin \{s_0,s_1,s_2,s_3\}$, then by Lemma \ref{metric->graph-shortest} (\ref{lemma14:case2}), $H_\mu$ has edges $\{\gamma(y_i),s_j\}$ for $j=0,1,2,3$. However, this contradicts the condition of Lemma \ref{modularcondition} (i). Therefore, it suffices to consider the case when $\gamma(y_i)\in \{s_0,s_1,s_2,s_3\}$. We next focus on the contribution from $N_3c_3$. Let $N_3\xi_0$ be the contribution to the value $\tau_{c}(\gamma)$ from $N_3c_3(y_0,s_0)$, $N_3c_3(y_0,s_1)$, $N_3c_3(y_0,s_2)$, $N_3c_3(y_0,s_3)$. If $\gamma(y_0)=s_0$, then we have
\begin{align}
    \xi_0&=(k^2+\mu_{10}k+\mu_{10}\mu_{30})(\mu_{32}+\mu_{12}+k)+((\mu_{10}+\mu_{12})k+\mu_{10}\mu_{12})(\mu_{32}+k)\notag\\&+((\mu_{32}+\mu_{30})k+\mu_{32}\mu_{30})(\mu_{12}+k)\notag\\
    &=k^3+(2\mu_{32}+2\mu_{10}+2\mu_{12}+\mu_{30})k^2\notag\\&+(2\mu_{32}\mu_{10}+\mu_{10}\mu_{30}+\mu_{32}\mu_{30}+2\mu_{10}\mu_{12}+2\mu_{32}\mu_{12}+\mu_{12}\mu_{30})k\notag\\
    &+\mu_{32}\mu_{10}\mu_{12}+\mu_{32}\mu_{12}\mu_{30}+\mu_{32}\mu_{10}\mu_{30}+\mu_{10}\mu_{12}\mu_{30}.
\end{align}
If $\gamma(y_0)=s_1$, then we have
\begin{align}
    \xi_0&=((\mu_{32}+\mu_{30})k+\mu_{32}\mu_{30})(\mu_{10}+\mu_{12}+k)+(k^2+\mu_{32}k+\mu_{32}\mu_{12})\mu_{10}+(k^2+\mu_{10}k+\mu_{10}\mu_{30})\mu_{12}\notag\\
    &=(\mu_{32}+\mu_{10}+\mu_{12}+\mu_{30})k^2+(2\mu_{32}\mu_{10}+\mu_{10}\mu_{30}+\mu_{32}\mu_{30}+\mu_{10}\mu_{12}+\mu_{32}\mu_{12}+\mu_{12}\mu_{30})k\notag\\
    &+\mu_{32}\mu_{10}\mu_{12}+\mu_{32}\mu_{12}\mu_{30}+\mu_{32}\mu_{10}\mu_{30}+\mu_{10}\mu_{12}\mu_{30}.
\end{align}
If $\gamma(y_0)=s_2$, then we have
\begin{align}
    \xi_0&=(k^2+\mu_{32}k+\mu_{32}\mu_{12})(\mu_{10}+\mu_{30}+k)+((\mu_{32}+\mu_{30})k+\mu_{32}\mu_{30})(\mu_{10}+k)\notag\\&+((\mu_{10}+\mu_{12})k+\mu_{10}\mu_{12})(\mu_{30}+k)\notag\\
    &=k^3+(2\mu_{32}+2\mu_{10}+\mu_{12}+2\mu_{30})k^2\notag\\
    &+(2\mu_{32}\mu_{10}+2\mu_{10}\mu_{30}+2\mu_{32}\mu_{30}+\mu_{10}\mu_{12}+\mu_{32}\mu_{12}+\mu_{12}\mu_{30})k\notag\\
    &+\mu_{32}\mu_{10}\mu_{12}+\mu_{32}\mu_{12}\mu_{30}+\mu_{32}\mu_{10}\mu_{30}+\mu_{10}\mu_{12}\mu_{30}.
\end{align}
If $\gamma(y_0)=s_3$, then we have
\begin{align}
    \xi_0&=((\mu_{10}+\mu_{12})k+\mu_{10}\mu_{12})(\mu_{32}+\mu_{30}+k)+(k^2+\mu_{32}k+\mu_{32}\mu_{12})\mu_{30}+(k^2+\mu_{10}k+\mu_{10}\mu_{30})\mu_{32}\notag\\
    &=(\mu_{32}+\mu_{10}+\mu_{12}+\mu_{30})k^2+(2\mu_{32}\mu_{10}+\mu_{10}\mu_{30}+\mu_{32}\mu_{30}+\mu_{10}\mu_{12}+\mu_{32}\mu_{12}+\mu_{12}\mu_{30})k\notag\\
    &+\mu_{32}\mu_{10}\mu_{12}+\mu_{32}\mu_{12}\mu_{30}+\mu_{32}\mu_{10}\mu_{30}+\mu_{10}\mu_{12}\mu_{30}.
\end{align}
Hence, we see that the value $\xi_0$ in the case of $\gamma(y_0)\in \{s_1,s_3\}$ is smaller than that in the case of $\gamma(y_0)\in \{s_0,s_2\}$. Therefore, it suffices to consider the case when $\gamma(y_0)\in \{s_1,s_3\}$. Similar to $N_3\xi_0$, let $N_3\xi_1$ be the contribution to the value $\tau_{c}(\gamma)$ from $N_3c_3(s_0,y_1)$, $N_3c_3(s_1,y_1)$, $N_3c_3(s_2,y_1)$, $N_3c_3(s_3,y_1)$. If $\gamma(y_1)=s_0$, then we have
\begin{align}
    \xi_1&=((\mu_{32}+\mu_{12})k+\mu_{32}\mu_{12})(\mu_{10}+\mu_{30}+k)+(k^2+\mu_{32}k+\mu_{32}\mu_{30})\mu_{10}+(k^2+\mu_{10}k+\mu_{10}\mu_{12})\mu_{30}\notag\\
    &=(\mu_{32}+\mu_{10}+\mu_{12}+\mu_{30})k^2+(2\mu_{32}\mu_{10}+\mu_{10}\mu_{30}+\mu_{32}\mu_{30}+\mu_{10}\mu_{12}+\mu_{32}\mu_{12}+\mu_{12}\mu_{30})k\notag\\&+\mu_{32}\mu_{10}\mu_{12}+\mu_{32}\mu_{12}\mu_{30}+\mu_{32}\mu_{10}\mu_{30}+\mu_{10}\mu_{12}\mu_{30}.
\end{align}
If $\gamma(y_1)=s_1$, then we have
\begin{align}
    \xi_1&=(k^2+\mu_{10}k+\mu_{10}\mu_{12})(\mu_{32}+\mu_{30}+k)+((\mu_{32}+\mu_{12})k+\mu_{32}\mu_{12})(\mu_{30}+k)\notag\\&+((\mu_{10}+\mu_{30})k+\mu_{10}\mu_{30})(\mu_{32}+k)\notag\\
    &=k^3+(2\mu_{32}+2\mu_{10}+\mu_{12}+2\mu_{30})k^2\notag\\
    &+(2\mu_{32}\mu_{10}+2\mu_{10}\mu_{30}+2\mu_{32}\mu_{30}+\mu_{10}\mu_{12}+\mu_{32}\mu_{12}+\mu_{12}\mu_{30})k\notag\\
    &+\mu_{32}\mu_{10}\mu_{12}+\mu_{32}\mu_{12}\mu_{30}+\mu_{32}\mu_{10}\mu_{30}+\mu_{10}\mu_{12}\mu_{30}.
\end{align}
If $\gamma(y_1)=s_2$, then we have
\begin{align}
    \xi_1&=((\mu_{10}+\mu_{30})k+\mu_{10}\mu_{30})(\mu_{32}+\mu_{12}k)+(k^2+\mu_{10}k+\mu_{10}\mu_{12})\mu_{32}+(k^2+\mu_{32}k+\mu_{32}\mu_{30})\mu_{12}\notag\\
    &=(\mu_{32}+\mu_{10}+\mu_{12}+\mu_{30})k^2+(2\mu_{32}\mu_{10}+\mu_{10}\mu_{30}+\mu_{32}\mu_{30}+\mu_{10}\mu_{12}+\mu_{32}\mu_{12}+\mu_{12}\mu_{30})k\notag\\
    &+\mu_{32}\mu_{10}\mu_{12}+\mu_{32}\mu_{12}\mu_{30}+\mu_{32}\mu_{10}\mu_{30}+\mu_{10}\mu_{12}\mu_{30}.
\end{align}
If $\gamma(y_1)=s_3$, then we have
\begin{align}
    \xi_1&=(k^2+\mu_{32}k+\mu_{32}\mu_{30})(\mu_{10}+\mu_{12}+k)+((\mu_{32}+\mu_{12})k+\mu_{32}\mu_{12})(\mu_{10}+k)\notag\\&+((\mu_{10}+\mu_{30})k+\mu_{10}\mu_{30})(\mu_{12}+k)\notag\\
    &=k^3+(2\mu_{32}+2\mu_{10}+2\mu_{12}+\mu_{30})k^2\notag\\
    &+(2\mu_{32}\mu_{10}+\mu_{10}\mu_{30}+\mu_{32}\mu_{30}+2\mu_{10}\mu_{12}+2\mu_{32}\mu_{12}+\mu_{12}\mu_{30})k\notag\\
    &+\mu_{32}\mu_{10}\mu_{12}+\mu_{32}\mu_{12}\mu_{30}+\mu_{32}\mu_{10}\mu_{30}+\mu_{10}\mu_{12}\mu_{30}.
\end{align}
Hence, we see that the value $\xi_1$ in the case of $\gamma(y_1)\in \{s_0,s_2\}$ is smaller than that in the case of $\gamma(y_1)\in \{s_1,s_3\}$. Therefore, it suffices to consider the case when $\gamma(y_1)\in \{s_0,s_2\}$. Recall that we consider the case when $\gamma(x_i)\in \{s_2,s_3\}$ and $\gamma(z_i)\in \{s_0,s_1\}$. Focus on the contribution from $N_2c_2$. Then we see that the value $\tau_{c}(\gamma)$ is not optimal or nearly optimal if $\gamma(x_i)=s_2$ and $\gamma(z_i)=s_0$, or if $\gamma(x_i)=s_3$ and $\gamma(z_i)=s_1$. Hence, it suffices to consider the case when $\gamma(x_i)=s_2$ and $\gamma(z_i)=s_1$ hold, or the case when $\gamma(x_i)=s_3$ and $\gamma(z_i)=s_0$ hold. We next focus on the contribution from $N_1c_1$. Then we see that a map $\gamma$ is infeasible if $\gamma(x_0)=s_2,\ \gamma(z_0)=s_1,$ and $\gamma(y_0)=s_3$ hold. Similarly, we see that a map $\gamma$ is infeasible if $\gamma(x_0)=s_3,\ \gamma(z_0)=s_0,$ and $\gamma(y_0)=s_1$ hold. Hence, we only consider the case when $\gamma(x_0)=s_2,\ \gamma(z_0)=s_1,$ and $\gamma(y_0)=s_1$ hold, or the case when $\gamma(x_0)=s_3,\ \gamma(z_0)=s_0,$ and $\gamma(y_0)=s_3$ hold. Applying the similar argument to $x_1$, $z_1$, and $y_1$, we see that it suffices to consider the case when $\gamma(x_1)=s_2,\ \gamma(z_1)=s_1,$ and $\gamma(y_1)=s_2$ hold, or the case when $\gamma(x_1)=s_3,\ \gamma(z_1)=s_0,$ and $\gamma(y_1)=s_0$ hold. We finally focus on the contribution from $c_0$. Let $\sigma$ be the contribution to $\tau_{c}(\gamma)$ from $c_0$. Consider the case when $\gamma(x_0)=\gamma(x_1)=s_2$, $\gamma(y_0)=s_1$, $\gamma(y_1)=s_2$. Then we have $\sigma=2(\mu_{32}+\mu_{10}+\mu_{12}+\mu_{30}+2k)$ for any $\gamma(w_0)\in \{s_1,s_2\},\gamma(w_1)\in \{s_0,s_3\}$. Similarly, if $\gamma(x_0)=\gamma(x_1)=s_3$, $\gamma(y_0)=s_3$, $\gamma(y_1)=s_0$, then we have $\sigma=2(\mu_{32}+\mu_{10}+\mu_{12}+\mu_{30}+2k)$ for any $\gamma(w_0)\in \{s_1,s_2\},\gamma(w_1)\in \{s_0,s_3\}$. On the other hand, if $\gamma(x_0)=s_2$, $\gamma(x_1)=s_3$, $\gamma(y_0)=s_1$, $\gamma(y_1)=s_0$, then $\sigma$ takes the minimum value $\sigma=2(\mu_{32}+\mu_{10}+k)$ when $\gamma(w_0)=s_1$ and $\gamma(w_1)=s_0$. Similarly, if $\gamma(x_0)=s_3$, $\gamma(x_1)=s_2$, $\gamma(y_0)=s_3$, $\gamma(y_1)=s_2$, then $\sigma$ takes the minimum value $\sigma=2(\mu_{32}+\mu_{10}+k)$ when $\gamma(w_0)=s_2$ and $\gamma(w_1)=s_3$. Thus, the pair $(V,c)$ satisfies the condition (\ref{nph-condition}).

\subsubsection{Proof of Theorem \ref{thm:directed-nph-extend2} for the case (\ref{thm:directed-nph-extend2-case3})}
We extend the proof of Theorem \ref{thm:undirected-nph} for the case (\ref{dichotomy: not orientable}) in \cite{karzanov2004} to that of Theorem \ref{thm:directed-nph-extend2} for the case (\ref{thm:directed-nph-extend2-case3}). Since the underlying graph $H_\mu$ is not orientable, there exists a sequence $(\overrightarrow{e_0},\overrightarrow{e_1},\ldots,\overrightarrow{e_k})$ ($\overrightarrow{e_i}=(s_i,t_i)$ is an oriented edge of $H_\mu$ for each $i$) such that  $H_\mu$ contains a 4-cycle $(s_i,t_i,t_{i+1},s_{i+1},s_i)$ for each $i\in \{0,\ldots,k-1\}$, and $t_k=s_0,s_k=t_0$. Then we have $h:=\mu(s_0,t_0)=\mu(s_1,t_1)=\cdots=\mu(s_k,t_k)=\mu(t_0,s_0)=\mu(t_1,s_1)=\cdots=\mu(t_k,s_k)$, and $f_i:=\mu(s_i,s_{i+1})=\mu(t_i,t_{i+1})$, $g_i:=\mu(s_{i+1},s_i)=\mu(t_{i+1},t_i)$ for $i=0,\ldots,k-1$, since $\mu$ is directed orbit-invariant. Take $2k$ elements $z_0,z_1,\ldots,z_{2k-1}$, and let $V:=T\cup \{z_0,z_1,\ldots,z_{2k-1}\}$. We now define a function $c:V\times V\rightarrow \mathbb{Q}_+$ as follows:
\begin{align}
    &c(z_i,s_i)=c(s_i,z_i)=c(z_i,t_i)=c(t_i,z_i)=1\mathrm{\ \ \ }(i=0,1,\ldots,2k-1),
\end{align}
where the indices of $s_i$ and $t_i$ are taken modulo $k$. We also define a function $c':V\times V\rightarrow \mathbb{Q}_+$ as follows (see Figure 7):
\begin{align}
    &c'(z_i,z_{i+1})=c'(z_{i+1},z_i)=1\mathrm{\ \ \ }(i=0,1,\ldots,2k-1),
\end{align}
where the indices of $z_i$ are taken modulo $2k$.
\begin{figure}[tbp]
\begin{center}
\begin{overpic}[width=14cm]{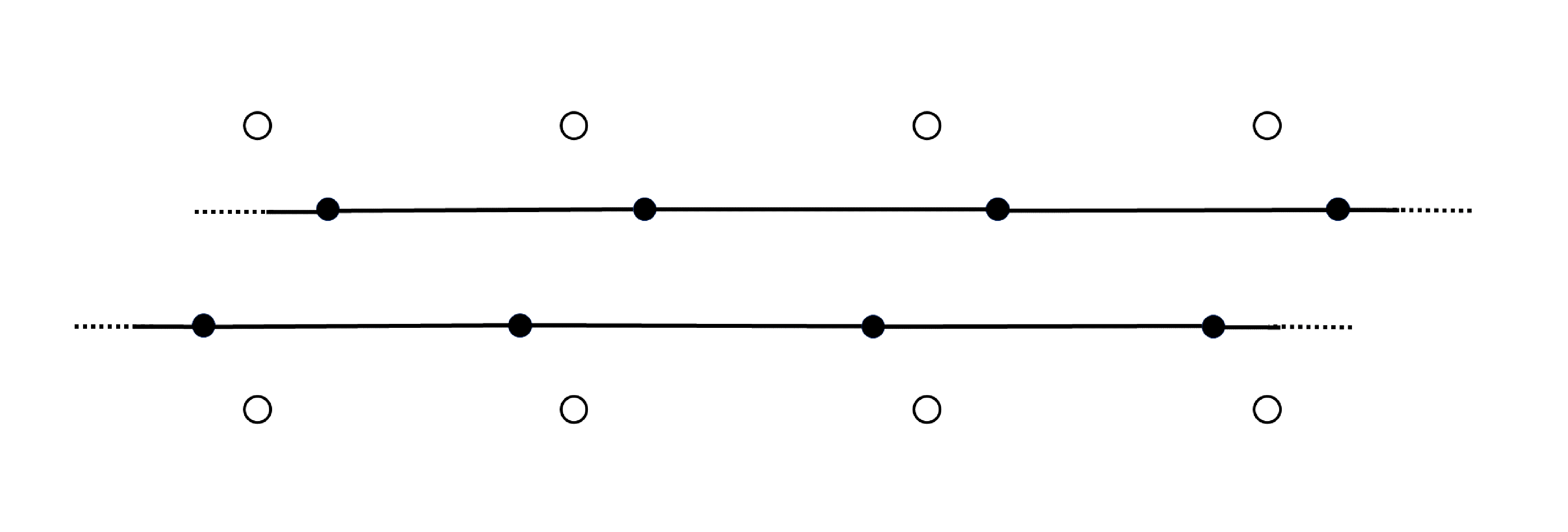}
\put(15.5,28.5){$s_0$}
\put(36,28.5){$s_1$}
\put(58,28.5){$s_2$}
\put(80,28.5){$s_3$}
\put(15.5,4.2){$t_0$}
\put(36,4.2){$t_1$}
\put(58,4.2){$t_2$}
\put(80,4.2){$t_3$}
\put(21.7,22){$z_0$}
\put(42,22){$z_1$}
\put(64.3,22){$z_2$}
\put(86,22){$z_3$}
\put(10,10.5){$z_4$}
\put(30,10.5){$z_5$}
\put(53,10.5){$z_6$}
\put(75,10.5){$z_7$}
\put(95,19.6){to $z_4$}
\put(86.8,12.5){to $z_0$}
\end{overpic}
\caption{the function $c'$ (the case of $k=4$)}
\end{center}
\end{figure}
Let $N$ be a sufficiently large positive rational. We define a function $\tilde{c}$ by $\tilde{c}:=Nc+c'$. Then we show that the pair $(V,\tilde{c})$ satisfies (\ref{nph-condition}) with respect to $s_0,t_0,z_0,z_k$. Let $s_{i+k}:=t_i\ (i=0,\ldots,k-1)$, and the indices of $s_i$ are taken modulo $2k$ below. We first observe that $\gamma$ is infeasible if $\gamma(z_i)\neq s_i,s_{i+k}$ for some $i$, due to the contribution from $Nc$. Thus, it suffices to consider the case when $\gamma(z_i)\in \{s_i,s_{i+k}\}$ for every $i\in \{0,\ldots,2k-1\}$. Let $\sigma$ be the contribution to $\tau_{\tilde{c}}(\gamma)$ from $c'$, and $\sigma_{i}$ be the contribution from $c'(z_i,z_{i+1})$ and $c'(z_{i+1},z_i)$ for each~$i$. If $(\gamma(z_i),\gamma(z_{i+1}))=(s_i,s_{i+1})$ or $(s_{i+k},s_{i+k+1})$, then we have $\sigma_i=f_i+g_i$. Otherwise, we have $\sigma_i=f_i+g_i+2h$. Consider the case when $\gamma(z_0)=s_0$ and $\gamma(z_k)=s_k$. In this case, if $\gamma(z_i)=s_i$ holds for $i=0,\ldots,2k-1$, then we have $\sigma_i=f_i+g_i$ for each $i$, and $\sigma=2\sum_{i=0}^{k-1}(f_i+g_i)$. Similarly, in the case of $\gamma(z_0)=s_k$ and $\gamma(z_k)=s_0$, we have $\sigma=2\sum_{i=0}^{k-1}(f_i+g_i)$ if $\gamma(z_i)=s_{i+k}$ for $i=0,\ldots,2k-1$. Consider the case when $\gamma(z_0)=\gamma(z_k)=s_0$. In this case, there exist two or more integers $i\in \{0,\ldots,2k-1\}$ for any $\gamma$ such that $\sigma_i=f_i+g_i+2h$ holds, and exactly two integers for some $\gamma$. Hence, the minimum value of $\sigma$ is $2\sum_{i=0}^{k-1}(f_i+g_i)+4h$. In the case of $\gamma(z_0)=\gamma(z_k)=s_k$, we also see that the minimum value of $\sigma$ is $2\sum_{i=0}^{k-1}(f_i+g_i)+4h$ by the similar argument.

\subsubsection{Proof of Theorem \ref{thm:directed-nph-new}}
Let $(s_0,s_1,s_2)$ be a biased non-collinear triple in $T$. For $i=0,1,2$, since $(s_i,s_{i+1})$ is a biased pair, $R_{\mu}(s_i,x)>R_{\mu}(x,s_{i+1})$ holds for every $x\in I(s_i,s_{i+1})\cap I(s_{i+1},s_i)\setminus \{s_i,s_{i+1}\}$, or $R_{\mu}(s_i,x)<R_{\mu}(x,s_{i+1})$ holds for every $x\in I(s_i,s_{i+1})\cap I(s_{i+1},s_i)\setminus \{s_i,s_{i+1}\}$, where the indices of $s_i$ are taken modulo~3. 
Take six elements $z_0,z_1,\ldots,z_5$, and let $V:=T\cup \{z_0,\ldots,z_5\}$. We define a function $c:V\times V\rightarrow \mathbb{Q}_+$ as follows:
\begin{align}
    &c(s_{i-1},z_i)=c(z_i,s_{i-1})=c(s_{i+1},z_i)=c(z_i,s_{i+1})=1\mathrm{\ \ \ }(i=0,\ldots,5).
\end{align}
We next define a function $c':V\times V\rightarrow \mathbb{Q}_+$ as follows. For each $i\in \{0,\ldots,5\}$,
\begin{itemize}
  \item if $R_{\mu}(s_{i-1},x)>R_{\mu}(x,s_{i+1})$ holds for every $x\in I(s_{i-1},s_{i+1})\cap I(s_{i+1},s_{i-1})\setminus \{s_{i-1},s_{i+1}\}$, then 
\begin{align}
    &c'(s_{i-1},z_i)=\mu(s_{i+1},s_{i-1}),\notag\\
    &c'(s_{i+1},z_i)=\mu(s_{i-1},s_{i+1}).
\end{align}
 \item if $R_{\mu}(s_{i-1},x)<R_{\mu}(x,s_{i+1})$ holds for every $x\in I(s_{i-1},s_{i+1})\cap I(s_{i+1},s_{i-1})\setminus \{s_{i-1},s_{i+1}\}$, then 
\begin{align}
    &c'(z_i,s_{i-1})=\mu(s_{i-1},s_{i+1}),\notag\\
    &c'(z_i,s_{i+1})=\mu(s_{i+1},s_{i-1}).
\end{align}
\end{itemize}
Let $N$ be a sufficiently large positive rational. We define a function $\tilde{c}$ by $\tilde{c}:=Nc+c'$. We now show that the pair $(V,\tilde{c})$ satisfies the condition (\ref{nphcondition2}) with respect to $s_0,s_1,s_2,z_0,z_1,z_2,z_3,z_4,z_5$. Focusing on the contribution from $Nc$, we see that a map $\gamma$ is infeasible if $\gamma(z_i)\notin I(s_{i-1},s_{i+1})\cap I(s_{i+1},s_{i-1})$ holds for some $i$. Thus, it suffices to consider the case when $\gamma(z_i)\in I(s_{i-1},s_{i+1})\cap I(s_{i+1},s_{i-1})$ holds for each $i$. We next focus on the contribution from $c'$. Consider the case when $R_{\mu}(s_{i-1},x)>R_{\mu}(x,s_{i+1})$ holds for every $x\in I(s_{i-1},s_{i+1})\cap I(s_{i+1},s_{i-1})\setminus \{s_{i-1},s_{i+1}\}$. Let $\sigma_i$ be the contribution to $\tau_{\tilde{c}}(\gamma)$ from $c'(s_{i-1},z_i)$ and $c'(s_{i+1},z_i)$. If $\gamma(z_i)\in \{s_{i-1},s_{i+1}\}$, then we have $\sigma_i=\mu(s_{i-1},s_{i+1})\mu(s_{i+1},s_{i-1})$. If $\gamma(z_i)\in I(s_{i-1},s_{i+1})\cap I(s_{i+1},s_{i-1})\setminus \{s_{i-1},s_{i+1}\}$, then we have
\begin{align}
    \sigma_i&=\mu(s_{i+1},s_{i-1})\mu(s_{i-1},\gamma(z_i))+\mu(s_{i-1},s_{i+1})\mu(s_{i+1},\gamma(z_i))\notag\\
    &>\mu(s_{i+1},s_{i-1})\mu(s_{i-1},\gamma(z_i))+\mu(\gamma(z_i),s_{i+1})\mu(s_{i+1},s_{i-1})\notag\\
    &\geq \mu(s_{i-1},s_{i+1})\mu(s_{i+1},s_{i-1}). 
\end{align}

Consider the case when $R_{\mu}(s_{i-1},x)<R_{\mu}(x,s_{i+1})$ holds for every $x\in I(s_{i-1},s_{i+1})\cap I(s_{i+1},s_{i-1})\setminus \{s_{i-1},s_{i+1}\}$. Let $\sigma_i'$ be the contribution to $\tau_{\tilde{c}}(\gamma)$ from $c'(z_i,s_{i-1})$ and $c'(z_i,s_{i+1})$. Similar to the above case, we have $\sigma_i'=\mu(s_{i-1},s_{i+1})\mu(s_{i+1},s_{i-1})$ if $\gamma(z_i)\in \{s_{i-1},s_{i+1}\}$, and we have $\sigma_i'>\mu(s_{i-1},s_{i+1})\mu(s_{i+1},s_{i-1})$ if $\gamma(z_i)\in I(s_{i-1},s_{i+1})\cap I(s_{i+1},s_{i-1})\setminus \{s_{i-1},s_{i+1}\}$. Thus, $(V,\tilde{c})$ satisfies the condition (\ref{nphcondition2}). This completes the proof.

\section*{Acknowledgments}
We thank the referees for helpful comments. The first author was supported by JSPS KAKENHI Grant Numbers JP17K00029
and JST PRESTO Grant Number JPMJPR192A, Japan.

\bibliography{main}
\bibliographystyle{plain}
\end{document}